\theoremstyle{definition}
\def\CC {{\mathbb C}}     
\def\mc {\mathcal}
\def\mk {\mathfrak}
\def\mH {\mc{H}}
\def\mkg {\mk{g}}
\def\mku {\mk{u}}
\def\kpsi {\ket{\Psi}}
\def\bone {\mathbbm{1}}
\def\im {{\rm{Im}}}
\newcommand{\floor}[1]{\ensuremath\lfloor #1\rfloor}
\newcommand{\cupdot}{\mathbin{\mathaccent\cdot\cup}}
\DeclarePairedDelimiterX\bk[2]{\langle}{\rangle}{#1 \delimsize\vert #2}
\DeclarePairedDelimiterX\kb[2]{\vert }{\vert }{#1 \delimsize\rangle\langle#2}
\def\={\;=\;} \def\+{\,+\,}
\newtheorem{theorem}{Theorem}
\newtheorem{lemma}[theorem]{Lemma}
\newtheorem{corollary}[theorem]{Corollary}
\newtheorem{remark}{Remark}
\newtheorem{example}{Example}
\newtheorem{fact}{Fact}
\newcommand*\circled[1]{\tikz[baseline=(char.base)]{
            \node[shape=circle,draw,inner sep=2pt] (char) {#1};}}
\apptocmd\appendix{%
  \addcontentsline{toc}{chapter}{Appendix}%
  \counterwithin{equation}{section}%
  \counterwithin{figure}{section}%
  \counterwithin{table}{section}%
}{}{}
\tikzset{
    structure/.style = {draw, rectangle,
        minimum height=1cm,
        minimum width=1cm, rounded corners,, text centered, draw=black, fill=white!30},
    arrow/.style = {thick, ->, >=stealth}
}
\begin{document}

\title{Designing locally maximally entangled quantum states with arbitrary local symmetries}
\author{Oskar S\l{}owik$^{1,*}$, Adam Sawicki$^1$, and Tomasz Maci\k{a}\.{z}ek$^{2}$}
\date{%
    $^1$ Center for Theoretical Physics, Polish Academy of Sciences, Al. Lotników 32/46, 02-668 Warsaw, Poland\\%
$^2$ School of Mathematics, University of Bristol, Fry Building, Woodland Road, Bristol BS8 1UG, UK \\[2ex]%
$^*$ Corresponding author: oslowik@cft.edu.pl\\[2ex]%
   26 April 2021
}
\maketitle

\begin{abstract}
One of the key ingredients of many LOCC protocols in quantum information is a multiparticle (locally) maximally entangled quantum state, aka a critical state, that possesses local symmetries. We show how to design critical states with arbitrarily large local unitary symmetry. We explain that such states can be realised in a quantum system of distinguishable traps with bosons or fermions occupying a finite number of modes. Then, local symmetries of the designed quantum state are equal to the unitary group of local mode operations acting diagonally on all traps. Therefore, such a group of symmetries is naturally protected against errors that occur in a physical realisation of mode operators. We also link our results with the existence of so-called strictly semistable states with particular asymptotic diagonal symmetries. Our main technical result states that the $N$th tensor power of any irreducible representation of $\mathrm{SU}(N)$ contains a copy of the trivial representation. This is established via a direct combinatorial analysis of Littlewood-Richardson rules utilising certain combinatorial objects which we call telescopes.
\end{abstract}

\section{Introduction}
Multipartite entangled states \cite{NiCh09, HoHo09} play important roles in different areas of physics including quantum computation (e.g. measurement based \cite{RaBr01}), quantum communication \cite{HiBu99, Go00} and quantum metrology \cite{GiLl11} as well as condensed matter physics \cite{AmFa08}. One can take on an operational point of view on multipartite entanglement and look at the problem of convertibility of quantum states under Local Operations assisted by Classical Communication (LOCC). One of the key points of such a formulation is that LOCC operations cannot increase entanglement \cite{HoHo09}. Hence, entanglement plays the role of a resource in quantum information processing \cite{resource}.

Another important motivation for our work comes form the fact that any operationally useful quantum state must have local symmetries \cite{GoWa11}. Let us next briefly review the main reasons why it is the case. To this end, we will be considering not only LOCC operations, but also certain larger and more tractable classes of operations called SLOCC (Stochastic LOCC) and local separable transformations (SEP), see Fig. \ref{fig:classes}.

\begin{figure}[H]
\centering 
\includegraphics[width=.5\linewidth]{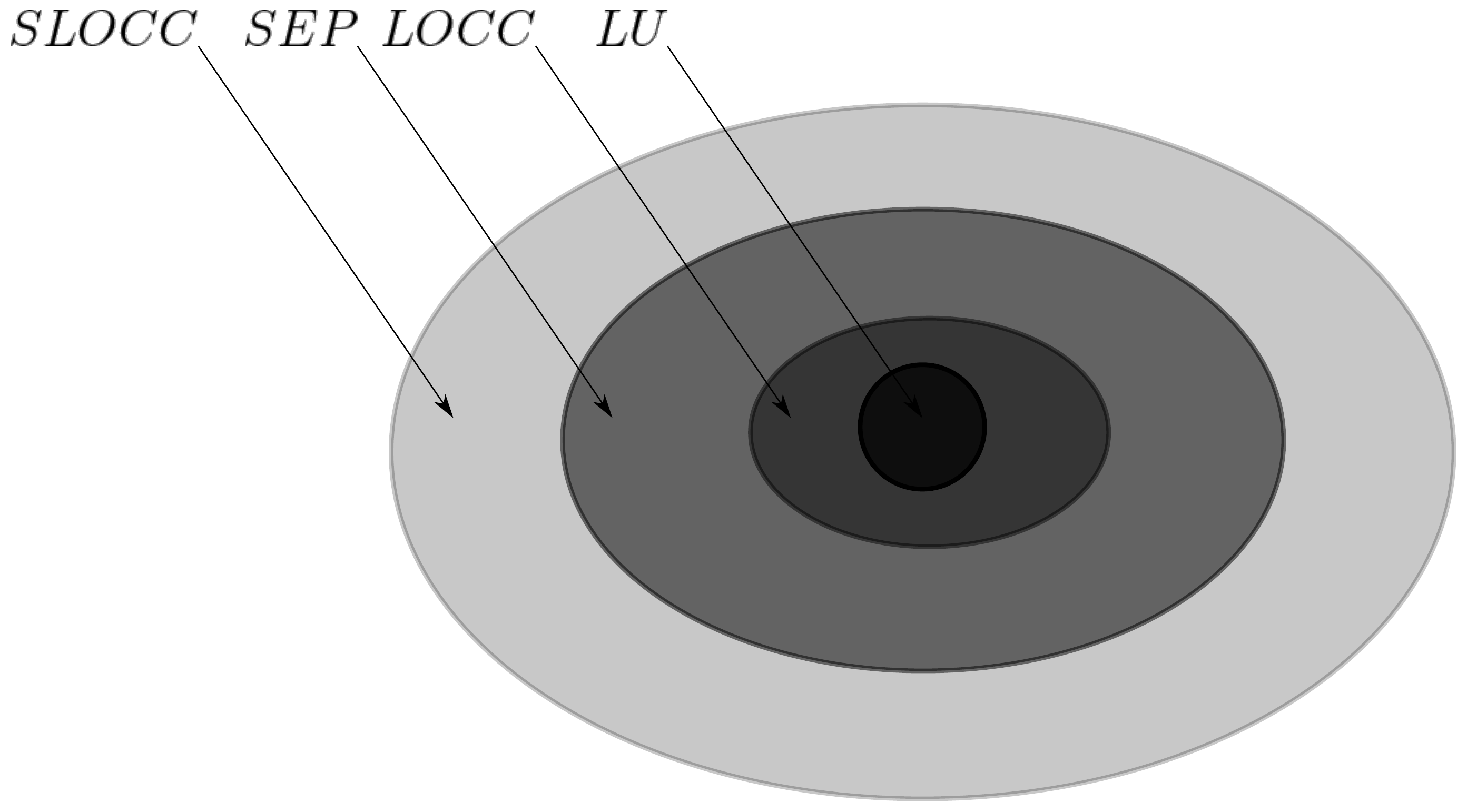}
\caption{The relationship between sets of SLOCC, SEP, LOCC and Local Unitary (LU) operations.}
\label{fig:classes}
\end{figure}

We focus on systems of $N$ qudits described by Hilbert spaces of the form $\mH \coloneqq \CC^{d_1}\otimes\hdots\otimes \CC^{d_N}$ and restrict our attention only to pure states. SLOCC operations are mathematically described by group $G \coloneqq \mathrm{SL}(d_1,\mathbb{C})\times \cdots \times \mathrm{SL}(d_N,\mathbb{C})$  and the action is given by the tensor product
\begin{equation}
(g_1,\ldots,g_N).\ket{\Psi} \coloneqq \frac{(g_1\otimes \ldots \otimes g_N)\ket{\Psi}}{||(g_1\otimes \ldots \otimes g_N)\ket{\Psi}||}\mathrm{.}
\end{equation}
Orbits of this action correspond to sets of SLOCC-equivalent states. Similarly, orbits of the natural action of the local unitary group (LU) $K \coloneqq \mathrm{SU}(d_1,\mathbb{C})\times \cdots \times \mathrm{SU}(d_N,\mathbb{C}) \subset G$ define classes of LU-equivalent states. Furthermore, the set of SEP operations consists of all operations given by Kraus operators in the product form. In other words, a state with density matrix $\rho_1$ is equivalent to a state with density matrix $\rho_2$ under SEP if
\begin{equation}
\rho_1 \mapsto \rho_2=\sum_{m=1}^M K_m \rho_1 K_m^\dagger,\quad \sum_{m=1}^M K_m^\dagger K_m=\mathbb{I},
\end{equation}
and the product form means that, for every $m$, we can write $K_m=\otimes_{n=1}^N K_{m,n}$ for some $K_{n,m}$. Two states,  $\ket{\Psi_1}$ and $\ket{\Psi_2}$, that are connected via SEP  are necessarily  connected via SLOCC. This means  we can write $\ket{\Psi_1}=g_1 \ket{\Psi}/||g_1 \ket{\Psi}||$ and $\ket{\Psi_2}=g_2 \ket{\Psi}/||g_2 \ket{\Psi}||$, where $g_i\in G$. A sufficient condition for SEP conversion from $\ket{\Psi_1}$ to $\ket{\Psi_2}$ was given in  \cite{GoWa11} and it boils down to satisfying the following equality 
\begin{equation}
\sum_{k=1}^L p_k S_k^\dagger \left(g_2g_2^\dagger\right)S_k=\left(g_1g_1^\dagger\right)\frac{||g_2\kpsi||^2}{||g_1\kpsi||^2}\ \mathrm{,}
\end{equation}
where $L \in \mathbb{N}$, for some probabilities $p_k \geq 0$, $\sum_{k=1}^L p_k=1$ and, what is important for us, some choice of operators $S_k \in G_{\ket{\Psi}}$, where $G_{\ket{\Psi}}$ is the stabiliser of $\ket{\Psi}$ defined by 
\begin{equation}
G_{\kpsi} \coloneqq \left\{g \in G |\;  g.\kpsi\propto \kpsi \right\} < G \mathrm{.}
\end{equation}
Thus, we have established two desirable properties of quantum states that make them useful for general quantum information protocols.
\begin{enumerate}
\item State $\kpsi$ should be highly entangled.
\item State $\kpsi$ should be highly symmetric. The larger the stabiliser $G_{\ket{\Psi}}$, the greater the freedom of conversion $\ket{\Psi_1}\mapsto \ket{\Psi_2}$ via SEP.
\end{enumerate}
 This motivates us to look for highly entangled states with large stabilisers. A canonical example of such a family of states are the celebrated stabiliser states (aka graph states) \cite{RaBr01,DAB03,HEB04} that found use in proposals for quantum computers that are naturally robust against decoherence \cite{GottesmanPhD}. Recently, symmetries of stabiliser states have been completely classified \cite{EK20,HESVK19}. Importantly, some new symmetries of stabiliser states have been found showing that there exist new potential applications of such states in quantum information protocols \cite{EK20}. In this work, we take the above approach to the extreme and take up the task of constructing new families of maximally entangled quantum states that have very large local symmetries. In contrast with stabiliser states, our proposed states always have continuous (infinite) symmetries. More specifically, we restrict our attention to so-called Locally Maximally Entangled states (LME), also called critical states. These are states for which all reduced one-qudit reduced density matrices are maximally mixed (proportional to the identity matrix). 
 
It is well-known that LME states of two qudits have large symmetry groups (see e.g. \cite{GoWa11}). Namely, for $\ket{\Psi_{\mathrm{LME}}}=\sum_{i=1}^d\ket{ii}$, stabiliser $G_{\kpsi}$ consists of matrices of the form $A^{-1}\otimes A^T$ for any $A\in \mathrm{SL}(d,\mathbb{C})$. Furthermore, the $GHZ$ states, $W$ states and cluster states \cite{RaBr01} have stabilisers of dimension greater than one. However, this is not a generic property of quantum states, as a generic quantum state (so also a generic LME state) of $N>4$ qubits and $N>3$ qudits has a trivial stabiliser \cite{SaWa18}. Therefore, it is a non-trivial and fundamentally important task to identify quantum states that have nontrivial and possibly large stabilisers.

\subsection{A brief exposition of the main result}

Another reason for considering LME states is that SLOCC entanglement classes containing LME states form an open and dense subset of the space of normalised pure quantum states (which is geometrically the projective space denoted by $\mathbb{P}(\mathcal{H})$). This open and dense subset is called the set of semistable states and denoted by $\mathbb{P}(\mathcal{H})_{ss}$. The remaining  states, i.e. those that belong to the null cone $\mathcal{N} \coloneqq \mathbb{P}(\mathcal{H})\setminus \mathbb{P}(\mathcal{H})_{ss}$ can of course have large symmetry groups but we will not discuss them in this paper (see \cite{NeMu84,MaSa15,MaSa18,SaOs12,SaMa18,WaDo13, SaOs14} for a detailed geometric and information-theoretic analysis of the null-cone). Interestingly, not all quantum systems of many qudits have LME states. There are some exceptional combinations of local qudit dimensions for which LME states do not exist (for details, see \cite{LME19}). Following \cite{LME19}, we will say that a state $\ket{\Psi}\in \CC^{d_1}\otimes  \CC^{d_2}\otimes\hdots\otimes  \CC^{d_N}$ has a {\it diagonal $H$-symmetry} if for some nontrivial representations $\pi_1,\pi_2,\hdots,\pi_N$ of group $H$
\begin{equation}
\pi_k:\ H\to \mathrm{SU}(d_k),\quad 1\leq k\leq N,
\end{equation}
we have 
\begin{equation}\label{diag-h-sym}
\pi_1(h)\otimes \pi_2(h)\otimes\hdots\otimes \pi_N(h)\ket{\Psi}= \ket{\Psi}{\mathrm{\ for\ every\ }}h\in H.
\end{equation}

For example, the $3$-qubit $W$ state $\ket{W}=\frac{1}{\sqrt{3}}\left(\ket{001}+\ket{010}+\ket{100}\right)$, has one-dimensional abelian stabiliser
\begin{equation}
 K_{\ket{W}}=\left\{
\begin{pmatrix}
e^{i\alpha} & 0 \\
0 & e^{-i\alpha}
  \end{pmatrix}
  \otimes
  \begin{pmatrix}
e^{i\alpha} & 0 \\
0 & e^{-i\alpha}
  \end{pmatrix}
  \otimes
  \begin{pmatrix}
e^{i\alpha} & 0 \\
0 & e^{-i\alpha}
  \end{pmatrix} | \ \alpha\in[0,2\pi)\right\} \mathrm{.}   
\end{equation}

Hence $\ket{W}$ has diagonal $U(1)$-symmetry.

Importantly, if all $\pi_1,\pi_2,\hdots,\pi_N$ are irreducible, then state $\kpsi$ satisfying \eqref{diag-h-sym} is automatically an LME state \cite{LME19}. To see this, recall that the equivariance of the one-particle reduced density matrices guarantees that for any $(U_1,\hdots,U_N)\in K$ we have
\begin{equation}
\label{eq:lme1}
\rho_k\left(U_1\otimes\hdots\otimes U_N\kpsi\right)=U_k^\dagger \rho_k(\kpsi) U_k \mathrm{,}
\end{equation}
where $\rho_k (\ket{\Psi})$ is a reduced density matrix of $k$-th qudit when $\ket{\Psi}$ is the state of a whole system.
Hence, if $\kpsi$ has a diagonal $H$-symmetry, then for all $h\in H$ and all $k\in \{1,\hdots,N\}$ we have
\begin{equation}
\label{eq:lme2}
\pi_k(h)^\dagger \rho_k(\kpsi)\pi_k(h)= \rho_k(\kpsi).
\end{equation}
Moreover, because $\pi_k$ for every $k$ is an irreducible representation of $H$, by the virtue of Schur's lemma, $\rho_k(\kpsi)$ must be proportional to identity. Thus $\kpsi$ is LME. 

To recapitulate, we have the following result (see Theorem 3.1 from \cite{LME19} and discussion below it).
\begin{corollary}
\label{cor:lme}
If the tensor product of irreducible representations $\pi_k:H\rightarrow \mathrm{SU}(d_k)$ of $H$, $\pi_1\otimes\hdots\otimes\pi_N$, contains a copy of the trivial representation of $H$, then the corresponding representation space, $\mH_N=\CC^{d_1}\otimes\hdots\otimes\CC^{d_N}$, contains LME states with diagonal $H$-symmetry.
\end{corollary}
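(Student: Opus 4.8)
The plan is to recognise that the hypothesis is, once unwound, precisely the statement that $\mH_N$ carries a nonzero diagonally invariant vector, and that the discussion culminating in \eqref{eq:lme2} already does all the real work of showing any such vector is LME. Concretely, I would write $\Pi(h)\coloneqq\pi_1(h)\otimes\cdots\otimes\pi_N(h)$ for the (in general reducible) representation of $H$ on $\mH_N$. Saying that $\pi_1\otimes\cdots\otimes\pi_N$ contains a copy of the trivial representation means exactly that the space of $\Pi$-invariant vectors $\{v\in\mH_N : \Pi(h)v=v \text{ for all } h\in H\}$ is nonzero. I would therefore begin by selecting any nonzero vector $v$ from this space and normalising it to a state $\kpsi$; since each $\pi_k(h)$ is unitary, $\Pi(h)$ is unitary and hence norm-preserving, so the normalised vector remains invariant.

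The second step is purely definitional: the invariance $\Pi(h)\kpsi=\kpsi$ for all $h\in H$ is literally equation \eqref{diag-h-sym}, so $\kpsi$ has a diagonal $H$-symmetry.

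The third and final step is to invoke the argument already given before the statement. Because every $\pi_k$ is irreducible, combining the equivariance \eqref{eq:lme1} of the reduced density matrices with the diagonal invariance of $\kpsi$ yields \eqref{eq:lme2}, namely $\pi_k(h)^\dagger\rho_k(\kpsi)\pi_k(h)=\rho_k(\kpsi)$ for all $h$ and all $k$. Schur's lemma then forces each $\rho_k(\kpsi)$ to be proportional to $\bone$, so $\kpsi$ is LME. Hence $\kpsi$ is an LME state in $\mH_N$ carrying a diagonal $H$-symmetry, which is exactly the assertion of Corollary \ref{cor:lme}.

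There is no deep obstacle here — the corollary is essentially a repackaging of the paragraph preceding it, with the representation-theoretic content hidden in the phrase ``contains a copy of the trivial representation.'' The only points meriting a moment's care are (i) that \emph{containing} the trivial representation guarantees the invariant subspace has dimension at least one, so that a genuine nonzero, normalisable state actually exists, and (ii) that normalisation cannot destroy invariance, which is immediate from unitarity of the $\pi_k$. One may additionally remark that for the resulting symmetry to be nontrivial in the sense of the definition, one wants the tensor action $\Pi$ itself to be nontrivial, which is inherited from the standing assumption that the $\pi_k$ are genuine irreducible representations.
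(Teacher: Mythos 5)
Your proposal is correct and follows essentially the same route as the paper: the paper treats this corollary as a recapitulation of the preceding discussion, in which a diagonally invariant vector is combined with the equivariance \eqref{eq:lme1} of the reduced density matrices and Schur's lemma applied via \eqref{eq:lme2} to conclude that each $\rho_k(\kpsi)$ is proportional to the identity. Your added remarks on the existence of a nonzero invariant vector and the preservation of invariance under normalisation are correct but routine.
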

In our paper we mainly focus on the specific problem concerning $N$ qudits with $d_1=\ldots=d_N \coloneqq d$, i.e. $\mathcal{H}=\left(\mathbb{C}^d\right)^{\otimes N}$, where LME states always exist and $H=\mathrm{SU}(m)$ with $2\leq m\leq d$.  

Our \textbf{main result} is Theorem \ref{th:main} which asserts that if an irreducible representation $E^{\lambda}$ of $\mathrm{SU}(m)$ (associated to Young diagram $\lambda$) corresponds to the Hilbert space of a qudit (i.e. $E^{\lambda} \simeq \mathbb{C}^d$), then the system with $N=m$ qudits, $\mathcal{H}=(E^\lambda)^{\otimes m}$, contains a copy of the trivial representation of $\mathrm{SU}(m)$. Consequently, $(E^\lambda)^{\otimes m}$ will contain LME states with diagonal $\mathrm{SU}(m)$-symmetry and so-called strictly semistable states that can be asymptotically transformed to them (see Lemma \ref{lem:ssex}). It is crucial to note that for a fixed local qudit dimension $d$ group $\mathrm{SU}(m)$ ($2\leq m\leq d$) has an irreducible representation (irrep) of dimension $d$ only for some particular values of $m$. The admissible $m$ can be found for instance using the Weyl dimension formula or Gelfand-Tsetlin patterns (see e.g. \cite{ArHu11}).

\begin{example}
Consider a system of $N$ qudits with single-particle space of dimension six, $\mathcal{H}=(\mathbb{C}^6)^{\otimes N}$. What are the values of $N$ for which $\mathcal{H}$ contains states with diagonal $\mathrm{SU}(N)$-symmetries, $2\leq N\leq 6$? Using the above corollary, it suffices to show that $\mathrm{SU}(N)$ has an irreducible $6$-dimensional representation. Clearly, it is true for $N=2$ (by treating one qudit as a particle with spin $5/2$) and for $N=6$ (the natural representation of $\mathrm{SU}(6)$). On top of that, there is an irreducible representation of $\mathrm{SU}(3)$ of dimension six (see Subsection \ref{sub:bosons}). There are no other possible diagonal $\mathrm{SU}(N)$-symmetries as $\mathrm{SU}(4)$ and $\mathrm{SU}(5)$ have no nontrivial irreps of dimension $6$. 
\end{example}

The problem we solve is related to recent developments concerning the so-called saturation problem in decomposition of tensor products of representations of $\mathrm{SU}(m)$ due to Knutson and Tao as well as Klyachko and Totaro (the exposition to these results (and more) can be found in the work of Fulton \cite{Fu00}). In particular, Proposition 14 in \cite{Fu00} points out another criterion (which we do not invoke here in full detail) for the existence of a trivial component in a tensor product of representations corresponding to (possibly different) Young diagrams $\lambda(1),\hdots,\lambda(N)$. The criterion requires checking a set of inductively constructed inequalities involving integer partitions related to Young diagrams $\lambda(1),\hdots,\lambda(N)$. Satisfying the criterion is {\it equivalent} to the existence of the trivial component in the product $E^{\lambda(1)}\otimes\hdots \otimes E^{\lambda(N)}$. Our work circumvents the necessity of checking this set of inequalities in the special case when $\lambda(1)=\lambda(2)=\hdots=\lambda(N) \coloneqq \lambda$ by performing a straightforward multiplication of Young diagrams for any $\lambda$. To our best knowledge, besides our Theorem \ref{th:main}, there is no comparably simple and general criterion allowing one to determine the existence of LME states with diagonal $\mathrm{SU}(m)$-symmetries.

The paper is organised as follows. In Subsection \ref{ssyd} we give a brief introduction to Young diagrams and representation theory of $\mathrm{SU}(m)$. In Section \ref{sec2} we present potential physical setups where one could construct our proposed LME states with arbitrary diagonal LU-symmetry. We discuss systems of distinguishable particles as well as systems of distinguishable traps containing bosons or fermions. We also show how one can use our main result, Theorem \ref{th:main}, to tell whether a given system contains so-called strictly semistable states. In Section \ref{sec:main} we recall some basic facts about our main technical tool which are Young diagrams. We briefly review the relationship between Young diagrams and irreducible representations of $\mathrm{SU}(m)$ and explain Littlewood-Richardson rules that concern the problem of decomposing a product of representations into irreducible components. Finally, in Section \ref{sec:main} we formulate our main result, i.e. Theorem \ref{th:main}, and give a sketch of its proof. For the sake of clarity of our presentation, the main technical weight of the proof is contained in the \nameref{app}. Section \ref{sec:final}, discusses alternative approaches and considers the problem of finding stated with diagonal $H$-symmetry with $H$ being a general compact semisimple Lie group. The methods we work out are subsequently applied to find LME states with diagonal $\mathrm{SO}(d)$ symmetry. Section \ref{summary} contains a summary and a discussion of  possible ways to extend our work.

\subsection{A brief introduction of Young diagrams and representation theory of $\mathrm{SU}(m)$}
\label{ssyd}
A \textit{Young diagram} is a finite collection of boxes forming left-justified rows with non-increasing length (from top to bottom). By enlisting the number of boxes in each row (from top to bottom) we obtain a partition $\lambda$ of a non-negative integer $n$ - the total number of boxes forming a diagram. Such a diagram is said to be of shape $\lambda$ and there is a 1-1 correspondence between Young diagrams with $n$ boxes and partitions of $n$, so often by $\lambda$ we denote a diagram itself. We define $|\lambda|$ to be $n$.

By filling the boxes of a Young diagram using symbols from some alphabet we obtain so called \textit{Young tableau}. Figure \ref{fig:yt} shows a Young tableau obtained by filling a Young diagram of shape $\lambda=(5,3,2)$ with some symbols from alphabet $\{1,2,3\}$.

\begin{figure}[H]
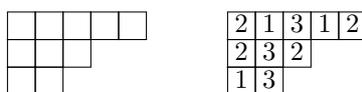

\centering 
\ytableausetup{nosmalltableaux,mathmode,boxsize=1em}\begin{ytableau}
       \null & \null & \null & \null & \null \\
       \null & \null & \null  \\
       \null & \null 
\end{ytableau}\quad\quad\quad
\begin{ytableau}
       2 & 1 & 3 & 1 & 2 \\
       2 & 3 & 2  \\
       1 & 3 
\end{ytableau}
\caption{A Young diagram of shape $\lambda=(5,3,2)$ - left; An example of a Young tableau obtained from $\lambda$ - right.}
\label{fig:yt}
\end{figure}

A \textit{transpose} of a Young diagram $\lambda$ is a Young diagram corresponding to a transpose (or conjugate) partition $\lambda^T$ obtained by enlisting the number of boxes in each column (from left to right) of $\lambda$. On Figure \ref{fig:ytt} we present a transpose of a Young diagram $\lambda=(5,3,2)$ from Figure \ref{fig:yt}, which is $\lambda^T=(3,3,2,1,1)$.

\begin{figure}[H]
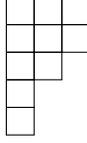

\centering 
\ytableausetup{nosmalltableaux,mathmode,boxsize=1em}
\begin{ytableau}
        \null & \null & \null\\
        \null & \null & \null\\
        \null  & \null\\
        \null \\
        \null
\end{ytableau}
\caption{A transpose $\lambda^T=(3,3,2,1,1)$ of a Young diagram $\lambda=(5,3,2)$.}
\label{fig:ytt}
\end{figure}

Young tableaux are combinatorial objects useful e.g. in representation theory where they can be applied to conveniently describe and study representations of symmetric and general linear groups. In particular, they are handy when working with representations of $\mathrm{SU}(m)$.

All irreducible representations of $\mathrm{SU}(m)$ are in a one-to-one correspondence with Young diagrams that consist of at most $m-1$ rows. The correspondence is established by constructing the so-called Schur module \cite{Fu96}, $E^\lambda$, which gives the canonical form of such a representation. Denote by $E=\mathrm{Span}\{\ket{0},\hdots,\ket{m-1}\}$ the natural representation of $\mathrm{SU}(m)$. Representation $E^\lambda$ of $\mathrm{SU(m})$ described by Young diagram $\lambda=(\lambda_1,\hdots,\lambda_{m-1})$ is a linear subspace of 
\begin{equation}\label{Elambda}
E^\lambda\subset \Lambda^{\mu_1}\left(E\right)\otimes\hdots\otimes\Lambda^{\mu_K}\left(E\right),
\end{equation}
where $\mu_k$ is the length of $k$th column of diagram $\lambda$, i.e. $\mu=(\mu_1,\hdots,\mu_K)=\lambda^T$. The highest weight vector of $E^\lambda$ is of the form
\begin{equation}\label{hw}
\ket{\lambda}=\left(\ket{0}\wedge \ket{1}\wedge\hdots\wedge \ket{\mu_1-1}\right)\otimes\hdots\otimes\left(\ket{0}\wedge\ket{1}\wedge\hdots\wedge \ket{\mu_K-1}\right).
\end{equation}
Two straightforward extreme cases are when $\lambda$ is a single column, i.e. \begin{equation} \lambda=(1,1,\dots,1,0,\dots,0) \coloneqq  (1)^N\mathrm{,}\; N<m \mathrm{,}\end{equation} and when $\lambda$ is a single row, i.e. \begin{equation} \lambda=(N,0,\dots,0) \coloneqq  (N) \mathrm{.}\end{equation} The corresponding representations are 
\begin{equation}
E^{(1)^N}=\Lambda^N \left(E\right),\quad E^{(N)}=S^N \left(E\right) \mathrm{,}
\end{equation}
where by $S^N\left(E\right)$ we denote the $N$th symmetric power of a vector space $E$.

In the following Section \ref{sec2} we make use of the symmetrisation $\widehat{\mc{S}}$ and antisymmetrisation $\widehat{\mc{A}}$ operators. We define their actions on any separable state $\ket{\phi_{1}}_1\otimes\ket{\phi_{2}}_2\otimes\hdots\otimes\ket{\phi_{N}}_N \in \left(\CC^d\right)^{\otimes N}=\mc{H}$ as
\begin{align}
\widehat{\mc{S}}(\ket{\phi_{1}}_1\otimes\ket{\phi_{2}}_2\otimes\hdots\otimes\ket{\phi_{N}}_N) &\coloneqq \sum_{\sigma\in S_N}\ket{\phi_1}_{\sigma(1)}\otimes\ket{\phi_2}_{\sigma(2)}\otimes\hdots\otimes\ket{\phi_{N}}_{\sigma(N)} \mathrm{,}\\
\widehat{\mc{A}}(\ket{\phi_{1}}_1\otimes\ket{\phi_{2}}_2\otimes\hdots\otimes\ket{\phi_{N}}_N) &\coloneqq \sum_{\sigma\in S_N}{\mathrm{sgn}}(\sigma)\ket{\phi_1}_{\sigma(1)}\otimes\ket{\phi_2}_{\sigma(2)}\otimes\hdots\otimes\ket{\phi_{N}}_{\sigma(N)} \mathrm{,}    
\end{align}
where $S_N$ is the permutation group on $N$ elements. Clearly,
$\hat{\mc{S}}(\mc{H})=S^N(\mc{H})$ and $\hat{\mc{A}}(\mc{H})=\Lambda^N(\mc{H})$.

We make an extensive use of Young tableaux in Section \ref{sec:main}, where consider tensor powers of irreducible representations of $\mathrm{SU}(m)$.

\section{Physical settings and other applications}\label{sec2}
In this section we study in detail possible physical settings for which our construction gives quantum states that possess large symmetries. We do this in several steps, starting with the most straightforward construction of multi qudit states with complete diagonal LU-symmetry (Subsection \ref{sub:qudits}). We then move to a slightly more complicated setting that involves distinguishable traps with bosons (Subsection \ref{sub:bosons}) and end up with the most general setting involving traps with fermions possessing internal degrees of freedom (Subsection \ref{sub:fermions}). In Subsection \ref{sub:git} we show that our results have deep implications for the geometric structure of entanglement classes of multi qudit states in terms of geometric invariant theory.


\subsection{Qudit states with complete diagonal LU-symmetry}\label{sub:qudits}
Here we analyse the most straightforward case of $N$ qudits, $\mH=\left(\CC^d\right)^{\otimes N}$ with diagonal $\mathrm{SU}(d)$-symmetry which we call the complete diagonal LU-symmetry. Here, representation $\pi \coloneqq \pi_1=\hdots=\pi_N$ from \eqref{diag-h-sym} is just the natural representation of $\mathrm{SU}(d)$ on $\CC^d$ given by the Young diagram consisting of a single box, i.e. $\lambda=(1)$. State $\kpsi$ is symmetric with respect to this action, i.e. has complete diagonal LU-symmetry \eqref{diag-h-sym}, if and only if $\kpsi$ is annihilated by all generators of $\mathrm{SU}(d)$. Recall that these generators are defined via operators $E_{i,j},\ E_{i,j}^\dagger$, where
\begin{equation}
 E_{i,j} \coloneqq \kb{i}{j},\quad 1\leq i,\, j\leq d \mathrm{,}   
\end{equation}
and $\{\ket{i}|\; 1\leq i \leq d\}$ forms an orthonormal basis in $\mathbb{C}^d$.
More specifically, the Lie algebra $\mk{su}(d)$ is generated by
\begin{equation}
X_{i,j} \coloneqq E_{i,j}+E_{i,j}^\dagger,\quad Y_{i,j} \coloneqq i\left(E_{i,j}-E_{i,j}^\dagger\right),\quad H_{i,j}=E_{i,i}-E_{j,j} \mathrm{,}   
\end{equation}
for all $i,j$ such that $1\leq i< j\leq d$. The above generators are diagonally represented on $\left(\CC^d\right)^{\otimes N}$ as follows
\begin{equation}
\widetilde A \coloneqq A\otimes\bone\otimes\hdots\otimes\bone+\bone\otimes A\otimes\hdots\otimes\bone+\hdots+\bone\otimes \hdots\otimes\bone\otimes A \mathrm{,}    
\end{equation}
for any $A\in \mk{su}(d)$. Hence, $\kpsi$ has complete diagonal LU-symmetry if and only if
\begin{equation}
\widetilde X_{i,j}\kpsi=0,\quad \widetilde Y_{i,j}\kpsi=0,\quad \widetilde H_{i,j}\kpsi=0\quad {\mathrm{for\ all\ }}1\leq i< j\leq d \mathrm{.}   
\end{equation}
\begin{remark}
The above set of equations can be simplified by using the following commutation relations:
\begin{equation}
[\widetilde X_{i,j},\widetilde Y_{i,j}]=-2i\widetilde H_{i,j}, \quad [\widetilde X_{i,j},\widetilde X_{j,k}]=X_{i,k}, \quad [\widetilde Y_{i,j},\widetilde Y_{j,k}]=\widetilde Y_{i,k} \mathrm{.}    
\end{equation} Therefore, it is enough to solve
\begin{equation}\label{sud-eqs}
\widetilde X_{i,i+1}\kpsi=0,\quad \widetilde Y_{i,i+1}\kpsi=0,\quad {\mathrm{for\ all\ }}1\leq i\leq d-1.
\end{equation}
The above simplification is a manifestation of a more general fact which will be described in Section \ref{sec:final}. 
\end{remark}
The dimension of the solution space of \eqref{sud-eqs} is the multiplicity of the trivial representation of $\mathrm{SU}(d)$ in the product $\pi^{\otimes N}$. The resulting dimensions for small $d$ and $N$ are shown in Table \ref{tab:natural}.

\begin{table}[H]
\centering
\begin{tabular}{c|cccccccccccc}
\multirow{2}{*}{$d$} & \multicolumn{11}{c}{$N$} \\
&1 & 2 & 3 & 4 & 5 & 6 & 7 & 8 & 9 & 10 & 11 & 12\\
\hline 
\hline 
2 & 1 & 1 & 0 & 2 & 0 & 5 & 0 & 14 & 0 & 42 & 0  & 132\\
\hline 
3 & 1 & 0 & 1 & 0 & 0 & 5 & 0 & 0 & 42 & 0 & 0 & 462 \\
\hline
4 & 1 & 0 & 0 & 1 & 0 & 0 & 0 & 14 & 0 & 0  & 0 & 462 \\
\hline
5 & 1 & 0 & 0 & 0 & 1 & 0 & 0 & 0 & 0 & 42 & 0 & 0 \\
\hline
6 & 1 & 0 & 0 & 0 & 0 & 1 & 0 & 0 & 0 & 0 & 0 & 132
\end{tabular}
\caption{Multiplicities of the trivial irrep of $\mathrm{SU}(d)$ in $\left(\CC^d\right)^{\otimes N}$ for small $d$ and $N$.}
\label{tab:natural}
\end{table}
There are two key features of states with complete diagonal LU-symmetry that can be readily seen from Table \ref{tab:natural}.
\begin{itemize}
\item States with complete diagonal LU-symmetry exist if and only if $N=kd$ for some integer $k$. This can be seen as a consequence of Littlewood-Richardson rules introduced in Section \ref{sec:main}.
\item The dimension of the space of states with complete diagonal LU-symmetry increases with $N$. In fact, it is given by multidimensional Catalan numbers, as explained in Subsection \ref{sub:sud}.
\end{itemize}
If $N=kd$, it is straightforward to write down an example of the corresponding LME state with complete diagonal LU-symmetry. It has the form of the tensor product
\begin{equation}
\kpsi=\ket{\Psi_{0}}\otimes \ket{\Psi_1}\otimes\hdots\otimes\ket{\Psi_{k-1}} \mathrm{,}    
\end{equation}
where state $\ket{\Psi_i}$ is the completely antisymmetric state with respect to permutations of qudits $di+1$ through $d(i+1)$.  Using $\widehat{\mc{A}}$ we can write down the state $\kpsi$ as
\begin{align}
\begin{split}
\kpsi=\left(\widehat{\mc{A}}\left(\ket{1}_1\otimes\ket{2}_2\otimes\hdots\otimes\ket{d}_d\right)\right)\otimes \left(\widehat{\mc{A}}\left(\ket{1}_{d+1}\otimes\hdots\otimes\ket{d}_{2d}\right)\right)\otimes\hdots \\
\hdots\otimes\left(\widehat{\mc{A}}\left(\ket{1}_{(k-1)d+1}\otimes\hdots\otimes\ket{d}_{kd}\right)\right).
\end{split}\label{state-complete}
\end{align}

\subsection{A system of traps with bosons}\label{sub:bosons}
\label{ex:bosons}
Consider a quantum system consisting of three distinguishable traps enumerated by $i=1,2,3$, each containing two bosons that can occupy three modes denoted by creation operators $a^\dagger_i,\ b^\dagger_i,\ c^\dagger_i$ (Fig.\ref{traps}). The Hilbert space of a single trap is
  \begin{figure}[h]
\centering
\includegraphics[width=.7\textwidth]{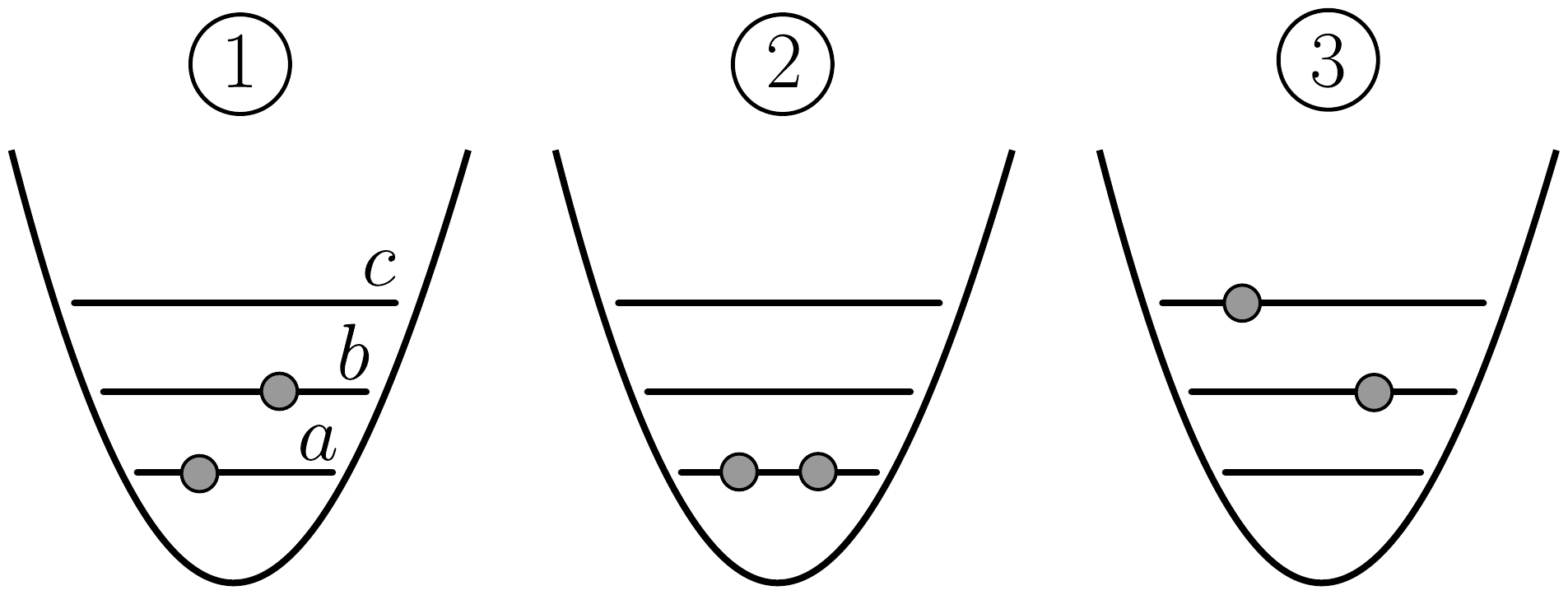}
\caption{Three traps with bosons.}
\label{traps}
\end{figure}
\begin{equation}
    \mH_1=S^2\left(\CC^3\right)=\left\langle\ket{n_a,n_b,n_c}|\ n_a+n_b+n_c=2\right\rangle\simeq \CC^6 \mathrm{,}
\end{equation}

which will be treated as one qudit with six degrees of freedom. The total Hilbert space consists of three such qudits
\begin{equation}\mH=\mH_1\otimes\mH_1\otimes\mH_1\mathrm{.}\end{equation}
The set of all LOCC operations is represented as triples of $6\times 6$ matrices $A_1,A_2,A_3$ that act on $\kpsi\in\mH$ as $\kpsi\mapsto A_1\otimes A_2\otimes A_3 \kpsi$. Physically, realising such operations requires introducing some interactions between bosons within a single trap to obtain two-body operators of the form $\alpha_i^\dagger\beta_i^\dagger\gamma_i\delta_i$ where $\{\alpha,\beta,\gamma,\delta\}\subset\{a,b,c\}$ and $i=1,2,3$.

We are interested in finding an LME state with the following symmetries. Symmetries are the local diagonal mode special unitary operations, i.e. $\kpsi\mapsto U\otimes U\otimes U\kpsi$, where a single $3\times 3$ unitary matrix $[U_{\alpha\beta}]$ of determinant one is represented on $i$th trap as 
\begin{equation}U=\sum_{\{\alpha,\beta\}\subset\{a,b,c\}} U_{\alpha\beta}\alpha_i^\dagger\beta_i \mathrm{,}\end{equation}
which is effectively a $6\times 6$ unitary matrix when written in the basis $\{\ket{n_a,n_b,n_c}\}$. This corresponds to the situation where $\pi$ from equation \eqref{diag-h-sym} is the representation of $\mathrm{SU}(3)$ corresponding to the Young diagram that consists of one row with two boxes, $\lambda=(2)$.

A state $\kpsi\in \mH$ is symmetric with respect to local diagonal mode operators if for all operators $U$ we have
\begin{equation} U\otimes U\otimes U\kpsi=\kpsi \mathrm{.}\end{equation}
This in turn happens when $\kpsi$ is annihilated by all generators of such operations. This boils down to the following set of six linear equations.
\begin{gather}\label{bosons:eqs}
\left(\sum_{i=1}^3\alpha^\dagger_i\beta_i\right)\kpsi=0,\quad\{\alpha,\beta\}\subset\{a,b,c\}.
\end{gather}
The solution is a state which is symmetric with respect to permutations of traps. Using $\widehat{\mc{S}}$ the solution can be written as
\begin{gather}
\kpsi=\widehat{\mc{S}}\Big{(}\ket{200}_1\ket{020}_2\ket{002}_3+\frac{1}{\sqrt{2}}\ket{110}_1\ket{101}_2\ket{011}_3+\\-\frac{1}{2}\left(\ket{200}_1\ket{011}_2\ket{011}_3+\ket{020}_1\ket{101}_2\ket{101}_3+\ket{002}_1\ket{110}_2\ket{110}_3\right)\Big{)}.
\end{gather}
For so defined state, the norm reads $\sqrt{\bk{\Psi}{\Psi}}=3\sqrt{2}$. State $\kpsi$ is a state which is locally maximally entangled with respect to the three traps, i.e. $\rho_i={\rm diag}(\frac{1}{6},\hdots,\frac{1}{6})$ for $i=1,2,3$, where $\rho_i$ is the single-qudit reduced density matrix coming from the reduction of $\rho=\kb{\Psi}{\Psi}$ with respect to the remaining two qudits (traps). Note that $\rho_i$ should not be confused with the reduced density matrix of a single boson. The fact that $\kpsi$ is LME can be checked directly or simply deduced from the fact that $\kpsi$ has a diagonal $\mathrm{SU}(3)$-symmetry.  

Notice that, since each trap is effectively a qudit with $d=6$, and $\mathrm{SU}(3)$ has an irrep in $\mathbb{C}^6$, from Corollary \ref{cor:lme}, $\mathcal{H}$ had to contain some LME state with diagonal $\mathrm{SU}(3)$-symmetry.

One may check the multiplicity of a trivial representation in $\mathcal{H}$ as a carrier space of local diagonal mode special unitary operations. A straightforward calculation involving Young diagrams shows that the multiplicity is 1.

The above physical setting can be generalised to a system of $m$ distinguishable traps with each trap containing $n$ bosons occupying $m$ modes. In that case, a single trap realises a qudit with $d=\binom{m+n-1}{n}$. This scenario realises representation $E^\lambda$ of $\mathrm{SU}(m)$ with a single-row diagram $\lambda=(n)$ as a diagonal LU-symmetry of the corresponding state. The explicit form of the state can be found by solving 
\begin{gather}\label{bosons:eqs-general}
\left(\sum_{i=1}^m\alpha^\dagger_i\beta_i\right)\kpsi=0,\quad\{\alpha,\beta\}\subset\left\{a^{(1)},\hdots,a^{(m)}\right\},
\end{gather}
where $a^{(k)}_i$ is the annihilation operator of $k$th mode in $i$th trap.

\subsection{Spinful fermions and beyond}\label{sub:fermions}
In this subsection, we introduce the most general construction capturing LME states with a diagonal LU-symmetry given by an arbitrary representation of $\mathrm{SU}(m)$. We start with the single-particle Hilbert space which describes a particle with $s$ internal degrees of freedom (s-charge) that can occupy $m$ modes in a single trap
\begin{equation}
 \mH_1=\CC^m\otimes \CC^s \mathrm{.}  
\end{equation}
In particular, if $s=2$, then we think of a single electron (spin-$1/2$ fermion) occupying $m$ orbitals. Next, we confine $N$ such fermions in one trap. The total Hilbert space for the system of $N$ fermions in one trap is given by the $N$th exterior power
\begin{equation}
\mH_N=\Lambda^{N} \left(\CC^m\otimes \CC^s\right).
\end{equation}
Hilbert space $\mH_N$ is equipped with the natural action of $\mathrm{SU}(m)\times \mathrm{SU}(s)$ that changes the basis of orbitals and internal degrees of freedom of each particle simultaneously. Let us next consider an abstract (possibly interacting) hamiltonian of the above system of trapped $N$ fermions and assume that it commutes with the total $s$-charge operator. In the representation-theoretic language, we require the hamiltonian to commute with the Casimir operator (the sum of squared elements of an orthonormal basis \cite{hall}) of the corresponding Lie algebra $\mathfrak{su}(s)$. For instance, if the considered fermions were ordinary electrons ($s=2$), the above symmetry would be the well-known total spin conservation symmetry. If this is the case, Hilbert space $\mH_N$ decomposes into sectors of fixed total $s$-charge each of which is isomorphic to the tensor product of an irrep of $\mathrm{SU}(m)$ with an irrep of $\mathrm{SU}(s)$ \cite{Klyachko}
\begin{equation}
\mH_N\cong\bigoplus_{\lambda:\ |\lambda|=N} \mathcal{H}_{m}^{\lambda} \otimes \mathcal{H}_{s}^{\lambda^T} \mathrm{,}   
\end{equation}
where $\mathcal{H}_{m}^{\lambda}$ is the irreducible representation of $\mathrm{SU}(m)$ corresponding to Young diagram $\lambda$ and $\mathcal{H}_{s}^{\lambda^T}$ is the irreducible representation of $\mathrm{SU}(s)$ corresponding to $\lambda^T$. The sum is over all Young diagrams with $N$ boxes and at most $m-1$ rows and at most $s$ columns. Again, for spinful electrons we require $\lambda$ to be a two-column diagram and $\mathcal{H}_{s}^{\lambda^T}$ is the representation of $\mathrm{SU}(2)$ with total spin equal to half of the difference of columns' lengths. Finally, we would like to superselect the particular value of the $s$-charge. This can be done in general by requiring the hamiltonian to commute with appropriate additional generators of $\mathrm{SU}(s)$. In the spin case, we desire the total conservation of the $z$-component of spin angular momentum $S_z$. In this way, we are able to superselect the sector of $\mH_N$ that consists of states of the form $\kpsi=\ket{\Phi}\otimes \ket{i}$, $\ket{\Phi}\in \mathcal{H}_{m}^{\lambda}$, for fixed $i\in\{1,\hdots,s\}$. This space is isomorphic to $\mathcal{H}_{m}^{\lambda}$ and carries the natural action 
\begin{equation}\label{rep-superselected}
\pi:\ \mathrm{SU}(m)\to \mathrm{SU}\left(\mathcal{H}_{m}^{\lambda}\right)
\end{equation}
stemming from the basis change of modes within a single trap. 

By considering a system of $m$ distinguishable traps, where the Hilbert space for every single trap is the above superselected $\mathcal{H}_{m}^{\lambda}$, we obtain an abstract physical setting that contains LME states with diagonal $\mathrm{SU}(m)$-symmetry given by the $m$-fold product of representation $\pi$ from \eqref{rep-superselected}. Finally, let us remark that, similarly to the case of traps with bosons described in Subsection \ref{sub:bosons}, every trap is treated as a qudit with local dimension $d=\dim\mathcal{H}_{m}^{\lambda}$. Hence, performing SLOCC operations on a full qudit requires interactions between fermions within one trap.

\begin{example}[The doublet space of $3$ electrons with $m=3$]
As an example, consider a system of $m=3$ traps, where the Hilbert space of a single trap is given by the doublet space of $3$ electrons occupying $3$ modes with total $S_z=+1/2$. Such a space comes from the following decomposition
\begin{equation}
\Lambda^{N} \left(\CC^3\otimes \CC^2\right)\cong\mH_3^{(2,1)}\otimes \CC^2\quad\oplus\quad \mH_3^{(1,1,1)}\otimes S^2\left(\CC^2\right) \mathrm{,}
\end{equation}
where $\mH_3^{(2,1)}\otimes \CC^2$ is the doublet component corresponding to the total spin of $1/2$ and $\mH_3^{(1,1,1)}\otimes S^2\left(\CC^2\right)$ is the quadruplet component corresponding to the total spin of $3/2$. Furthermore, by superselecting the total $S_z=+1/2$ within the doublet space, we obtain the Hilbert space of a single trap which is isomorphic to the irrep of $\mathrm{SU}(3)$ given by diagram $\lambda=(2,1)$. Space $\mH_3^{(2,1)}$ is of dimension $8$ and for completeness, we write down its (non-orthonormalised) basis as a subspace of $\Lambda^{N} \left(\CC^3\otimes \CC^2\right)$
\begin{gather}
\ket{1\uparrow}\wedge\ket{1\downarrow}\wedge\ket{2\uparrow},\quad \ket{1\uparrow}\wedge\ket{1\downarrow}\wedge\ket{3\uparrow},\quad \ket{2\uparrow}\wedge\ket{2\downarrow}\wedge\ket{1\uparrow}, \nonumber \\
\ket{2\uparrow}\wedge\ket{2\downarrow}\wedge\ket{3\uparrow},\quad \ket{3\uparrow}\wedge\ket{3\downarrow}\wedge\ket{1\uparrow},\quad \ket{3\uparrow}\wedge\ket{3\downarrow}\wedge\ket{2\uparrow}, \\
\ket{1\uparrow}\wedge\ket{3\downarrow}\wedge\ket{2\uparrow}-\ket{2\uparrow}\wedge\ket{1\downarrow}\wedge\ket{3\uparrow},\quad 
\ket{1\uparrow}\wedge\ket{3\downarrow}\wedge\ket{2\uparrow}+\ket{1\uparrow}\wedge\ket{2\downarrow}\wedge\ket{3\uparrow}.\nonumber
\end{gather}
Finally, equations for the desired LME state distributed across three traps $\kpsi\in\mH_3^{(2,1)}\otimes\mH_3^{(2,1)}\otimes\mH_3^{(2,1)}$ read
\begin{gather}\label{bosons:eqs-fermions}
\left(\sum_{i=1}^3\alpha^\dagger_i\beta_i\right)\kpsi=0,\quad\{\alpha,\beta\}\subset\{a^{(1)},a^{(2)},a^{(3)}\},
\end{gather}
where $a^{(k)}_i$ is the annihilation operator of $k$th mode in $i$th trap.

\end{example}

\subsection{Existence of strictly semistable states}\label{sub:git}
\label{sec:ssex}
According to the Kempf-Ness theorem from geometric invariant theory, a SLOCC class contains an LME state iff it is closed (in the standard complex topology) \cite{KeNe79}. We call states within a SLOCC class containing an LME state \emph{polystable} states. \emph{Strictly semistable}  are states whose SLOCC classes contain LME states only in the closure.  It turns out that strictly semistable states exist iff there exists an LME state with more symmetries (in the sense of stabiliser dimension) than generic (i.e. there are at least two LME states with different stabiliser dimensions; see Theorem 1 and Observation 3 from \cite{SlHe20} and Fig. 2 therein which explains the relationship between the symmetries of LME states and the structure of SLOCC classes). It is well-known that Hilbert spaces of systems of two qudits and three qubits contain open and dense SLOCC orbits going through LME states, so such systems do not have strictly semistable states. In the remaining cases the generic stabiliser is trivial or discrete \cite{SaWa18}, so zero-dimensional. On the other hand such systems contain $GHZ$ states that have continuous symmetries (and are LME). Thus, applying the criterion from \cite{SlHe20}, they contain strictly semistable states. Such strictly semistable states can be asymptotically transformed to the $GHZ$ state. We show that in some cases one may deduce the existence of the special classes of strictly semistable states, namely the ones that can be asymptotically transformed to an LME state with a $\mathrm{SU}(N)$-symmetry ($N$ being number of qudits). Specifically, combining Theorem \ref{th:main} with Corollary \ref{cor:lme} and results from \cite{SlHe20}, we obtain the following lemma.
\begin{lemma}
\label{lem:ssex}
Consider $\mathcal{H}_N=(\CC^{d})^{\otimes N}$, where $N \geq 4$ or ($N=3$ and $d \geq 3$). Suppose that there exists a non-trivial irreducible continuous representation $\pi:\ \mathrm{SU}(N)\to \mathrm{SU}(d)$. Then $\mathcal{H}_N$ contains an LME state $\ket{\Phi}$ with diagonal $\mathrm{SU}(N)$-symmetry with respect to $\pi^{\otimes N}$, as defined in \eqref{diag-h-sym}. Moreover $\mathcal{H}_N$ contains strictly semistable states that can be asymptotically transformed to $\ket{\Phi}$.
\end{lemma}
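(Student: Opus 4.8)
The plan is to assemble the statement from three ingredients that are already in place: the combinatorial Theorem~\ref{th:main}, the representation-theoretic Corollary~\ref{cor:lme}, and the structural criterion for strict semistability of \cite{SlHe20}. No new representation theory is needed; the work is in chaining these correctly and in verifying that the hypotheses on $(N,d)$ place us in the non-degenerate regime.

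First I would produce the state $\ket{\Phi}$. Since $\pi$ is a nontrivial irreducible representation of $\mathrm{SU}(N)$ on $\CC^d$, there is a Young diagram $\lambda$ with at most $N-1$ rows such that $\CC^d\simeq E^\lambda$ as $\mathrm{SU}(N)$-modules. Applying Theorem~\ref{th:main} with $m=N$ shows that $(E^\lambda)^{\otimes N}=\mathcal{H}_N$ contains a copy of the trivial representation of $\mathrm{SU}(N)$. Corollary~\ref{cor:lme}, applied with $H=\mathrm{SU}(N)$ and $\pi_1=\cdots=\pi_N=\pi$, then guarantees an LME state $\ket{\Phi}\in\mathcal{H}_N$ fixed by $\pi(h)^{\otimes N}$ for all $h\in\mathrm{SU}(N)$, i.e. with diagonal $\mathrm{SU}(N)$-symmetry. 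This settles the first assertion.

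Next I would compare stabiliser dimensions. The diagonal embedding $h\mapsto(\pi(h),\ldots,\pi(h))$ maps $\mathrm{SU}(N)$ into the local unitary group $K\subset G$, and by construction its image fixes $\ket{\Phi}$, hence lies in $G_{\ket{\Phi}}$. Because $\mathfrak{su}(N)$ is simple and $\pi$ is nontrivial, this homomorphism has finite (central) kernel, so its image has dimension $N^2-1>0$; thus $\ket{\Phi}$ has a positive-dimensional, i.e. continuous, stabiliser. On the other hand, the hypothesis $N\geq 4$ or ($N=3$ and $d\geq 3$) excludes exactly the two exceptional systems of two qudits and three qubits, and in every remaining system the generic LME state has a trivial or discrete, hence zero-dimensional, stabiliser by \cite{SaWa18}. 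Consequently $\mathcal{H}_N$ carries at least two LME states of different stabiliser dimension: a generic one and the highly symmetric $\ket{\Phi}$.

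Finally, I would invoke the criterion of \cite{SlHe20} (Theorem~1 and Observation~3 therein): the existence of an LME state whose stabiliser dimension exceeds the generic one is equivalent to the existence of strictly semistable states, and the geometric picture described there identifies such strictly semistable states as lying in SLOCC classes whose closures meet the closed orbit through the more symmetric LME state, so that they can be asymptotically transformed into it. Taking that more symmetric LME state to be $\ket{\Phi}$ yields strictly semistable states asymptotically transformable to $\ket{\Phi}$, as claimed. The main obstacle is not the representation theory, which is fully supplied by Theorem~\ref{th:main} and Corollary~\ref{cor:lme}, but the careful bookkeeping at the end: confirming that the stated inequalities on $(N,d)$ coincide precisely with the regime in which \cite{SaWa18} forces a zero-dimensional generic stabiliser, and that the asymptotic-transformation conclusion of \cite{SlHe20} attaches to our specific $\ket{\Phi}$ rather than merely asserting the abstract existence of strictly semistable states.
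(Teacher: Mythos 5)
Your proposal is correct and follows essentially the same route as the paper's proof: Theorem~\ref{th:main} plus Corollary~\ref{cor:lme} to produce $\ket{\Phi}$, the genericity result of \cite{SaWa18} to exhibit an LME state with zero-dimensional stabiliser, and the criterion of \cite{SlHe20} to conclude strict semistability with asymptotic convertibility to $\ket{\Phi}$. The only (immaterial) difference is in how positivity of $\dim K_{\ket{\Phi}}$ is established — you use simplicity of $\mathfrak{su}(N)$ to get a finite kernel and hence an image of dimension $N^2-1$, whereas the paper argues that the closure of the diagonal image is a connected, non-singleton Lie subgroup; both are valid.
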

\begin{proof}
Recall that $G_{\ket{\Psi}} $ denotes the SLOCC stabiliser of $\kpsi$ while $K_{\ket{\Psi}} $ denotes the LU stabiliser of $\kpsi$.  Because LME states in $\mathcal{H}_N$ exist, there exist also polystable states \cite{SlHe20}. A generic polystable state in $\mathcal{H}_N$, say $\ket{\Psi}$, is known to have no continuous symmetries with respect to the action of $G$ \cite{SaWa18} . Let $\ket{\Psi_c}$ by an LME state in the SLOCC class of $\ket{\Psi}$. We have $\mathrm{dim} G_{\ket{\Psi_c}} = 0$. From Theorem \ref{th:main}, $\mathcal{H}_N$ contains a copy of the trivial representation of $\mathrm{SU}(N)$. Hence, from Corollary \ref{cor:lme}, $\mathcal{H}_N$ contains an LME state $\ket{\Phi}$ with diagonal $\mathrm{SU}(N)$-symmetry with respect to $\pi^{\otimes N}$. It remains to check that $\mathrm{dim}\, K_{\ket{\Phi}} \geq 1$. Indeed, in this case since $K < G$ also $\mathrm{dim}\, G_{\ket{\Phi}} \geq 1$ so $\ket{\Phi}$ and $\ket{\Psi_c}$ are two LME states with different $G$-stabiliser dimensions. Thus, $\ket{\Phi}$ is a state to which some strictly semistable states can be asymptotically transformed \cite{SlHe20}. Let us denote $H=\mathrm{SU}(N)$. Note that $H_{\ket{\Phi}}=H$ so $\pi(H) < K_{\ket{\Phi}}$. Since $K_{\ket{\Phi}}$ is closed, it contains the closure $\overline{\pi(H)}$ which is an embedded connected Lie subgroup of $\mathrm{SU}(d)$. Moreover, $\mathrm{dim}\,\overline{\pi(H)} \geq 1$ since otherwise $\overline{\pi(H)}$ would be discrete and connected, so a singleton. Thus, $\pi$ would be trivial.
\end{proof}

Our work also links to another facet of geometric invariant theory which is the problem of constructing SLOCC-invariant polynomials. Namely, homogeneous SLOCC-invariant polynomials of degree $d$ with respect to an irreducible representation $E^\lambda$ of $\mathrm{SU}(m)$ can be viewed as trivial components in the decomposition of $S^d\left(E^\lambda\right)$ into irreps. In the spirit of our Theorem \ref{th:main}, one can ask about the smallest $d$ for which $S^d\left(E^\lambda\right)$ contains the trivial irrep of $\mathrm{SU}(m)$ with non-zero multiplicity. For a way to derive lower bounds for such $d$, see \cite{Tsanov}.

\section[xxx]{Multiplying representations of $\mathrm{SU}(m)$}\label{sec:main}
\label{sec:mult}
In this section we recall some facts about multiplication of representations of $\mathrm{SU}(m)$ and formulate our main result - Theorem \ref{th:main}.

\begin{fact}\label{trivial}
$E^\lambda$ is the trivial representation of $\mathrm{SU}(m)$ iff $\lambda$ is a rectangular diagram with $m$ rows, i.e. $\lambda=(\lambda_1)^m$.
\end{fact}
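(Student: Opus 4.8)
The plan is to settle both implications by direct inspection of the data the paper already provides: the containment \eqref{Elambda} of $E^\lambda$ inside a tensor product of exterior powers, and the explicit highest weight vector $\ket{\lambda}$ of \eqref{hw}. The only representation-theoretic inputs I would need are that $\mathrm{SU}(m)$ acts on a top exterior power $\Lambda^m(\CC^m)$ by the determinant (which equals $1$), and that in a trivial representation every vector is fixed; in particular I will not invoke the full highest weight classification.

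For the ``if'' direction I would argue directly from \eqref{Elambda}. If $\lambda=(\lambda_1)^m$ is the rectangle of height $m$, its transpose is $\mu=\lambda^T=(m,m,\dots,m)$ with $\lambda_1$ equal columns, so $E^\lambda\subset \Lambda^m(E)\otimes\hdots\otimes\Lambda^m(E)$ with $\lambda_1$ tensor factors. Each factor $\Lambda^m(E)=\Lambda^m(\CC^m)$ is one-dimensional and $\mathrm{SU}(m)$ acts on it by $\det U=1$, so each factor is the trivial representation and hence so is their tensor product. Since $E^\lambda$ is a nonzero subrepresentation of a trivial representation, it is itself trivial.

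For the ``only if'' direction I would compute the weight of $\ket{\lambda}$ under the diagonal torus. A diagonal element $t=\mathrm{diag}(t_1,\dots,t_m)$ with $\prod_j t_j=1$ acts on $\ket{i}$ by $t_{i+1}$, hence on a column wedge $\ket{0}\wedge\hdots\wedge\ket{\mu_k-1}$ by $\prod_{i=1}^{\mu_k} t_i$. Multiplying over the $K$ columns, $\ket{\lambda}$ is a $t$-eigenvector with eigenvalue $\prod_{j=1}^{m} t_j^{\,c_j}$, where $c_j=\#\{k:\mu_k\ge j\}$ is the number of columns of height at least $j$, i.e. $c_j=\lambda_j$, the length of the $j$th row. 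If $E^\lambda$ is trivial, then every vector is fixed, in particular $\ket{\lambda}$, so $\prod_j t_j^{\lambda_j}=1$ for all $t$ with $\prod_j t_j=1$; this forces $\lambda_1=\lambda_2=\hdots=\lambda_m$, i.e. $\lambda$ is a rectangle with $m$ rows.

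The only genuinely delicate point will be the bookkeeping in the last step: that the exponent of $t_j$ equals the row length $\lambda_j$ (the transpose identity between ``columns of height $\ge j$'' and ``the $j$th row''), together with the linear-algebra fact that a character $(\lambda_1,\dots,\lambda_m)$ restricts trivially to the subtorus $\{\prod_j t_j=1\}$ exactly when all $\lambda_j$ coincide. Conceptually this is just the observation that a full column of height $m$ is invisible to $\mathrm{SU}(m)$ because it contributes only a determinant, which is why a height-$m$ rectangle (and, in the degenerate width-zero case, the empty diagram) collapses to the trivial representation, in contrast with the usual normalisation in which $\mathrm{SU}(m)$-irreps are indexed by diagrams of at most $m-1$ rows. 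An alternative route for ``only if'' would count semistandard tableaux: $\dim E^\lambda=1$ forces a unique filling with entries in $\{1,\dots,m\}$, which again occurs precisely for height-$m$ rectangles.
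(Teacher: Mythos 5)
Your argument is correct. Note that the paper itself offers no proof of Fact~\ref{trivial} — it is quoted as a standard piece of $\mathrm{SU}(m)$ representation theory — so there is no in-paper argument to compare against; judged on its own, your proof is clean and uses only the data the paper supplies, namely the containment \eqref{Elambda} and the highest weight vector \eqref{hw}. The ``if'' direction via $\Lambda^m(\CC^m)$ being the determinant character, hence trivial on $\mathrm{SU}(m)$, is exactly the right mechanism (and is the conceptual reason a full column of height $m$ can be deleted from any diagram without changing the $\mathrm{SU}(m)$-irrep). The ``only if'' direction is also sound: the weight of $\ket{\lambda}$ under $\mathrm{diag}(t_1,\dots,t_m)$ is $\prod_j t_j^{\lambda_j}$ by the transpose identity $\#\{k:\mu_k\geq j\}=\lambda_j$, and a character $(\lambda_1,\dots,\lambda_m)$ of the $\mathrm{U}(m)$ torus dies on the subtorus $\prod_j t_j=1$ precisely when all $\lambda_j$ coincide (take $t_i=s$, $t_j=s^{-1}$ to see necessity). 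You correctly flag the degenerate case $\lambda_1=0$, which matches the paper's other convention that irreps are indexed by diagrams with at most $m-1$ rows, the empty diagram giving the trivial one. The only minor point worth making explicit is that $E^\lambda\neq 0$ in the ``if'' direction, which follows since $\ket{\lambda}$ of \eqref{hw} is a nonzero element of it.
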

\begin{fact}\label{power}
The $N$th tensor power of the natural representation contains all representations with Young diagrams of length $N$ \cite{Fu96}, i.e. $|\lambda| \coloneqq \sum_i\lambda_i=N$,
\begin{equation}
E^{\otimes N}=\bigoplus_{\lambda:\ |\lambda|=N} c_{\lambda} E^\lambda.
\end{equation}
The coefficients $c_{\lambda}$ can be calculated using the Littlewood-Richardson rule (see e.g. \cite{StTh01, FuHa91, Fu96}). When specified for $\mathrm{SU}(m)$, we have $c_\lambda=0$ iff $\lambda$ has more than $m$ rows. 
\end{fact}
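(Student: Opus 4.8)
The plan is to reduce the statement to standard facts about Schur functors and Schur--Weyl duality, and then read off the claimed box-count grading, multiplicities, and vanishing condition combinatorially. The cleanest conceptual route is Schur--Weyl duality: regarding $E=\CC^m$ as the defining module of $\mathrm{GL}(m)$ (restricting to $\mathrm{SU}(m)$ only at the very end), the commuting actions of $\mathrm{GL}(m)$ and of the symmetric group $S_N$ on $E^{\otimes N}$ produce a bimodule decomposition $E^{\otimes N}\cong\bigoplus_{\lambda:\,|\lambda|=N}E^\lambda\otimes S^\lambda$, where $S^\lambda$ is the Specht module of $S_N$. Restricting to the $\mathrm{GL}(m)$-action alone gives $E^{\otimes N}=\bigoplus_{|\lambda|=N}c_\lambda E^\lambda$ with $c_\lambda=\dim S^\lambda$, the number of standard Young tableaux of shape $\lambda$. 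Complete reducibility is immediate since $\mathrm{SU}(m)$ is compact and all its finite-dimensional representations are unitarizable.

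Alternatively, and more in keeping with the combinatorial spirit of the paper, I would establish the decomposition by induction on $N$ using the single-box case of the Littlewood--Richardson rule (Pieri's rule). Writing $E^{\otimes N}=E^{\otimes(N-1)}\otimes E$ with $E=E^{(1)}$, Pieri's rule gives $E^\mu\otimes E^{(1)}=\bigoplus_{\lambda}E^\lambda$, the sum running over all diagrams $\lambda$ obtained from $\mu$ by adjoining a single box in a legal position. Hence the multiplicities satisfy $c_\lambda=\sum_{\mu\nearrow\lambda}c_\mu$ with base case $c_{(1)}=1$ at $N=1$, and an easy induction shows $|\lambda|=N$ for every $\lambda$ that appears. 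The constraint $|\lambda|=N$ can equally be seen from the grading by the central scalars $z\cdot\bone\in\mathrm{GL}(m)$, which act by $z^{|\lambda|}$ on $E^\lambda$ but by $z^N$ on $E^{\otimes N}$.

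It then remains to prove positivity for admissible shapes and vanishing otherwise. First I would argue that $c_\lambda\geq 1$ for every $\lambda$ with $|\lambda|=N$ and at most $m$ rows: such a diagram can be grown from the single box one cell at a time through a chain $\emptyset\subset(1)\subset\cdots\subset\lambda$, each step adjoining a legal box and each intermediate shape having at most $m$ rows, so at least one summand in the recursion is positive and the induction keeps $c_\lambda$ strictly positive. For the vanishing, recall from \eqref{Elambda} that $E^\lambda\subset\Lambda^{\mu_1}(E)\otimes\cdots$, where $\mu_1=\lambda^T_1$ is the length of the first column, i.e.\ the number of rows of $\lambda$; since $\Lambda^k(\CC^m)=0$ for $k>m$, any $\lambda$ with more than $m$ rows gives $E^\lambda=0$ and so cannot occur with nonzero multiplicity. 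Together these yield $c_\lambda=0$ iff $\lambda$ has more than $m$ rows.

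The main obstacle is the bookkeeping that appears when passing from $\mathrm{GL}(m)$ to $\mathrm{SU}(m)$: over $\mathrm{SU}(m)$ a diagram with exactly $m$ rows is not distinguishable from the diagram obtained by deleting its full length-$m$ columns (Fact \ref{trivial} is the extreme rectangular case), so $E^\lambda$ and $E^{\lambda'}$ can be isomorphic as $\mathrm{SU}(m)$-modules even though they are distinct Schur modules. The statement as phrased keeps track of each $E^\lambda$ as a Schur module---equivalently, works at the level of $\mathrm{U}(m)$ before restriction---so the box-count grading and the clean cutoff at $m$ rows are preserved. I would therefore simply flag this convention rather than attempt to fold the determinant twists into the multiplicities, which is the only genuinely delicate point in an otherwise standard argument.
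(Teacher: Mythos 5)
Your proposal is correct, but there is nothing in the paper to compare it against: the paper states this Fact without proof, citing Fulton's book \cite{Fu96}, so you have supplied an argument where the authors simply deferred to the literature. Both of your routes are sound. The Schur--Weyl decomposition $E^{\otimes N}\cong\bigoplus_{|\lambda|=N}E^\lambda\otimes S^\lambda$ immediately gives the grading $|\lambda|=N$ and identifies $c_\lambda$ with the number of standard Young tableaux of shape $\lambda$, while the Pieri-rule induction is closer in spirit to the Littlewood--Richardson machinery the paper actually uses in Section \ref{sec:main}; the recursion $c_\lambda=\sum_{\mu\nearrow\lambda}c_\mu$ with $c_{(1)}=1$ is exactly the branching count of standard tableaux, so the two routes agree. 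Your vanishing argument via $E^\lambda\subset\Lambda^{\mu_1}(E)\otimes\cdots$ and $\Lambda^k(\CC^m)=0$ for $k>m$ is the right one, and your closing remark correctly identifies the only delicate point: the Fact's cutoff at $m$ rows (rather than the $m-1$ rows that parametrise genuinely distinct $\mathrm{SU}(m)$-irreps, as the paper itself notes in Subsection \ref{ssyd}) only makes sense if one reads $E^\lambda$ as a Schur module, i.e.\ works at the level of $\mathrm{GL}(m)$ or $\mathrm{U}(m)$ before restriction, since a diagram with exactly $m$ rows restricts to the same $\mathrm{SU}(m)$-irrep as the diagram with its full columns deleted (Fact \ref{trivial} being the rectangular extreme). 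Flagging that convention rather than absorbing determinant twists into the multiplicities is the correct resolution, and it is also implicitly how the paper uses the Fact in deriving Lemma \ref{lemma:necessary}.
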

The above Facts \ref{trivial} and \ref{power} allow us to formulate a necessary condition for the $N$th tensor power of a given representation $E^\lambda$ to contain a trivial representation of $\mathrm{SU}(m)$. Namely, representation $E^\lambda$ appears as an irreducible component of $E^{\otimes |\lambda|}$. Hence, if the trivial representation appears as an irreducible component of $\left(E^\lambda\right)^{\otimes N}$, then it appears as an irreducible component of $E^{\otimes |\lambda| N}$. Furthermore, product $E^{\otimes |\lambda| N}$ contains trivial representation $E^\tau$ with diagram $\tau=(\tau_1)^m$ iff $|\lambda| N=m\tau_1$. Hence, we have proved the following lemma.
\begin{lemma}\label{lemma:necessary}
If the $N$th tensor power of $E^\lambda$, an irreducible representation of $\mathrm{SU}(m)$, contains a copy of the trivial representation, then $|\lambda| N$ is an integer multiple of $m$.
\end{lemma}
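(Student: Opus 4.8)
The plan is to reduce everything to tensor powers of the natural representation $E$, where the grading by the number of boxes controls which irreducible components can appear. The starting observation is that, by Fact~\ref{power}, the irreducible representation $E^\lambda$ occurs as a direct summand of $E^{\otimes|\lambda|}$: its diagram has exactly $|\lambda|$ boxes and at most $m$ rows, so the corresponding coefficient is nonzero.

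Next I would take $N$th tensor powers. Writing $E^{\otimes|\lambda|}=E^\lambda\oplus R$ for some complementary representation $R$, I expand $\bigl(E^{\otimes|\lambda|}\bigr)^{\otimes N}=(E^\lambda\oplus R)^{\otimes N}$ and read off that $(E^\lambda)^{\otimes N}$ sits inside it as a direct summand. Since $\bigl(E^{\otimes|\lambda|}\bigr)^{\otimes N}=E^{\otimes|\lambda|N}$, any irreducible component of $(E^\lambda)^{\otimes N}$ is in particular a component of $E^{\otimes|\lambda|N}$. Thus, if $(E^\lambda)^{\otimes N}$ contains the trivial representation, then so does $E^{\otimes|\lambda|N}$.

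The final step is the box-counting argument. By Fact~\ref{trivial}, the trivial representation of $\mathrm{SU}(m)$ corresponds to a rectangular diagram $\tau=(\tau_1)^m$ with $m$ rows, hence with $|\tau|=m\tau_1$ boxes. On the other hand, by Fact~\ref{power} every irreducible component of $E^{\otimes|\lambda|N}$ has a diagram with exactly $|\lambda|N$ boxes. Matching the two gives $|\lambda|N=m\tau_1$, so $|\lambda|N$ is an integer multiple of $m$, which is the claim.

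I expect no serious obstacle here: the argument is essentially a grading computation. The only point requiring a little care is the transfer of the phrase \emph{``is a component of''} through the $N$th tensor power, i.e.\ the expansion of $(E^\lambda\oplus R)^{\otimes N}$, together with the tacit fact --- already encoded in Fact~\ref{power} --- that the number of boxes is additive under tensor products, so that a trivial component of $E^{\otimes n}$ forces $m$ to divide $n$.
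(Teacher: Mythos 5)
Your proposal is correct and follows essentially the same route as the paper: embed $E^\lambda$ into $E^{\otimes|\lambda|}$ via Fact~\ref{power}, deduce that $(E^\lambda)^{\otimes N}$ is a summand of $E^{\otimes|\lambda|N}$, and then match the box count $|\lambda|N$ against the $m\tau_1$ boxes of the rectangular diagram from Fact~\ref{trivial}. No gaps.
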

The simplest way to satisfy the above necessary condition is to put $N=m$. In the remaining part of this section we show that this is, in fact, a sufficient condition (see Theorem \ref{th:main}).

Let us take a closer look at the multiplication rules for Young diagrams. In general, when multiplying two irreducible representations we get
\begin{equation}
E^\lambda\otimes E^\eta=\bigoplus_{\nu}c^{\lambda\eta}_\nu E^\nu,
\end{equation}
where $c^{\lambda\eta}_\nu$ is the Littlewood-Richardson (also known as Clebsch-Gordan) coefficient. Representations $E^\nu$ that appear in the above product with non-zero coefficients can be determined by the following set of rules.
\begin{fact}\label{multiplication-rules}[Littlewood-Richardson rule]
Here we describe how to find the Littlewood-Richardson coefficients by diagram expansions, according to the Littlewood-Richardson rule \cite{StTh01, FuHa91, Fu96}. This particular version of the rule can be found in \cite{Kir16}. We start with two irreducible representations of $\mathrm{SU}(m)$ whose corresponding diagrams are $\lambda$ and $\eta$. In order to find the irreducible representations (and their multiplicities) that appear in the product $E^\lambda\otimes E^\eta$, we draw the two diagrams next to each other and fill the second diagram, $\eta$, with numbers so that boxes of the $k$th row contain only integer $k$. To obtain diagram $\nu$ that corresponds to an irreducible component of $E^\lambda\otimes E^\eta$, we expand the diagram of $\lambda$ by appending all boxes of diagram $\eta$ to diagram $\lambda$ according to the following rules. The boxes can be appended only to the right or to the bottom of $\lambda$. Moreover, the following conditions need to be satisfied.
\begin{enumerate}
\item The consecutive rows of the resulting diagram have non-increasing lengths, i.e. $\nu_i\geq \nu_{i+1}$. This means they are Young diagrams.
\item In any column of $\nu$, there are no two boxes with the same label.
\item Let $\#_r(m)$ denote the number of boxes of $\nu$ with label $m$ in rows 1 up to $r$ (numerated from top to bottom). Then, for every row $r$ if $n<m$ then $\#_r(m) \leq \#_r(n)$. This is \textit{row counting condition}.
\item Let $\#_c(m)$ denote the number of boxes of $\nu$ with label $m$ in columns 1 up to $c$ (numerated from right to left). Then, for every column $c$ if $n<m$ then $\#_c(m) \leq \#_c(n)$. This is \textit{column counting condition}.
\item  Diagram $\nu$ has at most $m$ rows.
\end{enumerate}
The Littlewood-Richardson coefficient of a given diagram will be the number of its occurrences among all possible diagrams obtained via above procedure.
\end{fact}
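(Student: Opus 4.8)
The plan is to recast the representation-theoretic statement as an identity about the structure constants of Schur functions and then prove that combinatorial identity. First I would recall that the character of the irreducible representation $E^\lambda$ of $\mathrm{SU}(m)$ is the Schur polynomial $s_\lambda(x_1,\ldots,x_m)$, and that, since finite-dimensional representations of $\mathrm{SU}(m)$ are completely reducible and determined up to isomorphism by their characters, the Littlewood-Richardson coefficient $c^{\lambda\eta}_\nu$ defined by $E^\lambda\otimes E^\eta=\bigoplus_\nu c^{\lambda\eta}_\nu E^\nu$ equals the coefficient in the product expansion $s_\lambda s_\eta=\sum_\nu c^{\lambda\eta}_\nu s_\nu$. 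It is cleanest to prove the combinatorial formula first in the ring of symmetric functions in infinitely many variables, where $s_\lambda$ is defined for every partition and distinct partitions give linearly independent Schur functions; specialising back to $m$ variables, under which $s_\lambda$ vanishes exactly when $\lambda$ has more than $m$ rows, then yields condition (5) and the vanishing for diagrams with more than $m$ rows for free.

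Next I would set up the combinatorial engine. I define $s_\lambda$ as the generating function $\sum_T x^T$ over semistandard Young tableaux $T$ of shape $\lambda$, and attach to each tableau its reverse reading word. The structural input is the theory of Knuth (plactic) equivalence, equivalently Schützenberger's \emph{jeu de taquin}: every word is Knuth-equivalent to the reading word of a unique semistandard tableau, its \emph{rectification}, and two words are Knuth-equivalent iff they rectify to the same tableau. Granting this, the product $s_\lambda s_\eta$ is enumerated by pairs consisting of a tableau of shape $\lambda$ and one of shape $\eta$; passing each concatenated reading word to its rectification and grouping by the resulting shape $\nu$ shows that the multiplicity of $s_\nu$ equals the number of semistandard skew tableaux of shape $\nu/\lambda$ and content $\eta$ that rectify to the fixed super-standard tableau of shape $\eta$, equivalently those whose reverse reading word is a lattice (Yamanouchi) word. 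This is exactly the Littlewood-Richardson count, phrased in skew-tableau language.

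The hard part will be establishing the engine itself: that rectification is well-defined, i.e. independent of the order in which the inward slides are performed. This is the \emph{confluence} of \emph{jeu de taquin}, equivalent to the associativity of the plactic monoid. I would prove it along the standard route of first checking that two elementary slides at disjoint or adjacent inner corners commute, then bootstrapping to full confluence by induction on the number of cells, while separately verifying that each slide preserves the Knuth class of the reading word. This is the genuine technical heart, on which unique rectification and hence the whole count depend. An alternative that avoids \emph{jeu de taquin} would be to prove the Pieri rules for multiplication by $s_{(k)}$ and $s_{(1^k)}$ first, via a sign-reversing involution on the combinatorial side, and then deduce the general rule from the Jacobi-Trudi determinant together with associativity; I expect that route to be more computational and less transparent, so I would favour the \emph{jeu de taquin} approach.

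Finally I would translate the lattice-word characterisation into the explicit conditions (1)--(4) as stated. Filling the $k$th row of $\eta$ with the label $k$ makes the content of the appended boxes equal to $\eta$; requiring the outcome $\nu$ to be a genuine Young diagram is condition (1); column-strictness of the semistandard filling (no repeated label in a column) is condition (2); and the Yamanouchi property of the reverse reading word unpacks, when the labels are counted row by row from the top, precisely into the row counting condition (3), with its column-dual giving the column counting condition (4). Combined with the truncation (5) and the reduction of the first paragraph, this proves the rule and identifies $c^{\lambda\eta}_\nu$ with the number of admissible diagrams $\nu$ produced by the procedure.
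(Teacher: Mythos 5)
The paper itself offers no proof of this statement: it is imported as a Fact from the cited literature, with this particular row/column-counting variant taken from Kirson's lecture notes, so your proposal is measured against the standard proofs in the paper's own references. Your overall architecture --- Schur polynomials as characters, reduction to $s_\lambda s_\eta=\sum_\nu c^{\lambda\eta}_\nu s_\nu$ in infinitely many variables, \emph{jeu de taquin}/plactic confluence as the engine, and the multiplicity counted by skew semistandard fillings of $\nu/\lambda$ with content $\eta$ whose reverse reading word is lattice --- is exactly the route of Fulton's book \cite{Fu96}, and deferring confluence is legitimate in a plan; you correctly identify it as the technical heart. (The $\mathrm{SU}(m)$ versus $\mathrm{GL}(m)$ bookkeeping, with full columns carrying the trivial factor, is also consistent with the paper's conventions.)

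The genuine gap is the final translation step, which you describe as a ``precise unpacking''; it is not, for two distinct reasons. First, condition (3) counts whole-row prefixes and is strictly weaker than the lattice condition: for a semistandard filling the lattice property is equivalent to $\#_{\le r}(m)\le \#_{\le r-1}(m-1)$, with the count of $m-1$ stopping at row $r-1$, not at row $r$ as in (3). Second, and worse, your sentence equating semistandardness with ``no repeated label in a column'' drops the weak increase along rows, and conditions (1)--(5), read as constraints on the finished filling, do not recover it. Concretely, take $\lambda=(2)$, $\eta=(2,1)$, $\nu=(3,2)$, and place a $1$ at the end of row one, a $2$ in row two, column one, and a $1$ in row two, column two. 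This satisfies (1)--(5) as end-state conditions and even has lattice reverse reading word $1,1,2$, yet row two reads $2,1$, so it is not a Littlewood--Richardson filling; counting it would give $c^{\lambda\eta}_{\nu}=2$ instead of the correct $1$. The Fact is correct only under the sequential reading that the paper itself uses in the proof of its main theorem: all boxes with label $1$ are appended first, then label $2$, and so on, with every intermediate shape a Young diagram, which forces each label to form a horizontal strip and makes rows weakly increasing automatically, excluding the configuration above. Even with that reading in place you still owe a lemma rather than an unpacking: the equivalence, for such strip-by-strip fillings, of (3) together with (4) with the lattice condition. Note that (4) does real work --- it is what rules out a $2$ immediately to the right of a $1$ in the first row, a configuration (3) alone permits --- and it is a vertical-slab count, not a formal ``column dual'' of the Yamanouchi property. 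Your plan should either supply this equivalence explicitly or prove the rule in the standard lattice-word form and then derive the paper's variant from it.
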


It is worth noting that there is another insightful yet equivalent way one can deal with the irreps of $\mathrm{SU}(m)$, namely via Gelfand-Tsetlin pattern calculus (see the original paper \cite{GeTs50} and the reprinted version in \cite{GeMi63}). Using this approach, a numerical algorithm for the explicit calculations of $\mathrm{SU}(m)$ Clebsch-Gordan coefficients has been given in \cite{ArHu11} (see also \cite{BiLo68}).

In the following Examples \ref{ex:tableau2}, \ref{ex:tableau3} and \ref{ex:tableau4} we denote the rectangles with symbols of the form $\lambda_n$ or $\lambda_n \times m$. A symbol $\lambda_n$ means that a rectangle is formed by a row of $\lambda_n$ boxes with some fixed label. A symbol $\lambda_n \times m$ additionally specifies that this label is $m$.

\begin{example}
\label{ex:tableau2}
The square of any irreducible representation of $\mathrm{SU}(2)$ contains a copy of the trivial representation. The proof is shown on the picture below, where we show that $E^\lambda\otimes E^\lambda\supset E^{(|\lambda|,|\lambda|)}$ for any single-row diagram $\lambda$.
\begin{figure}[H]
\centering 
\includegraphics[width=.7\linewidth]{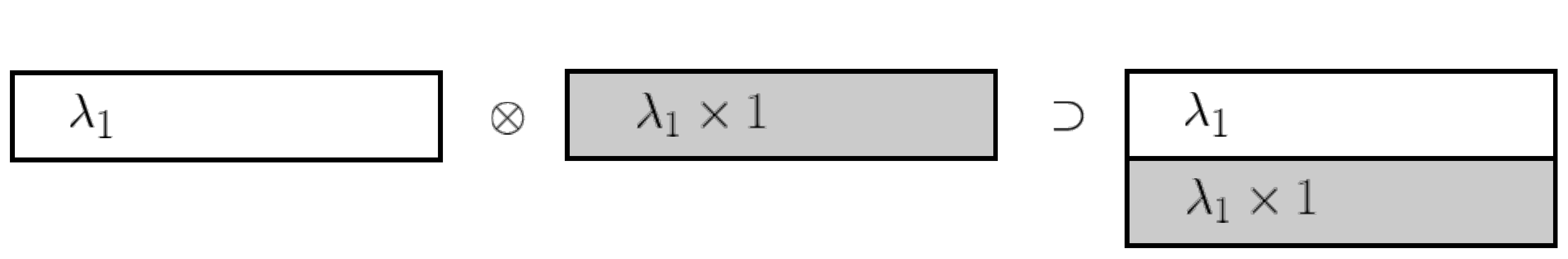}  
  \label{fig:ex1}
\end{figure}
\end{example}

\begin{example}
\label{ex:tableau3}
Let $E^\lambda$ be an irreducible representation of $\mathrm{SU}(3)$. Then, $\left(E^\lambda\right)^{\otimes 3}$ contains a copy of the trivial representation. The Young diagram of an irreducible representation of $\mathrm{SU}(3)$ has at most $2$ rows, i.e. $\lambda=(\lambda_1,\lambda_2)$. First, let us show that $E^\lambda\otimes E^\lambda\supset E^{(\lambda_1+\lambda_2,\lambda_1,\lambda_2)}$.
\begin{figure}[H]
\centering 
\includegraphics[width=.8\linewidth]{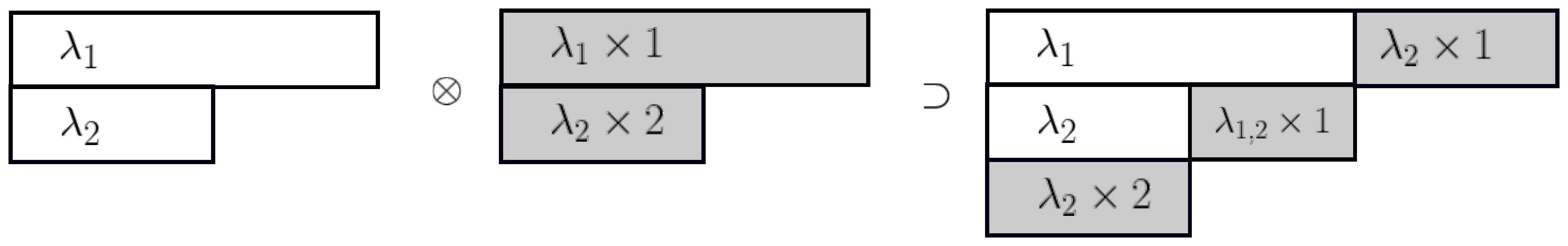}  
  \label{fig:ex2}
\end{figure}

Finally, we show that $E^{(\lambda_1+\lambda_2,\lambda_1,\lambda_2)}\otimes E^\lambda$ contains the irreducible representation with rectangular diagram $(|\lambda|,|\lambda|,|\lambda|)$ by appending $\lambda_2$ $1$'s to the penultimate row and $\lambda_1-\lambda_2$ $1$'s and $\lambda_2$ $2$'s to the bottom row.
\begin{figure}[H]
\centering 
\includegraphics[width=.8\linewidth]{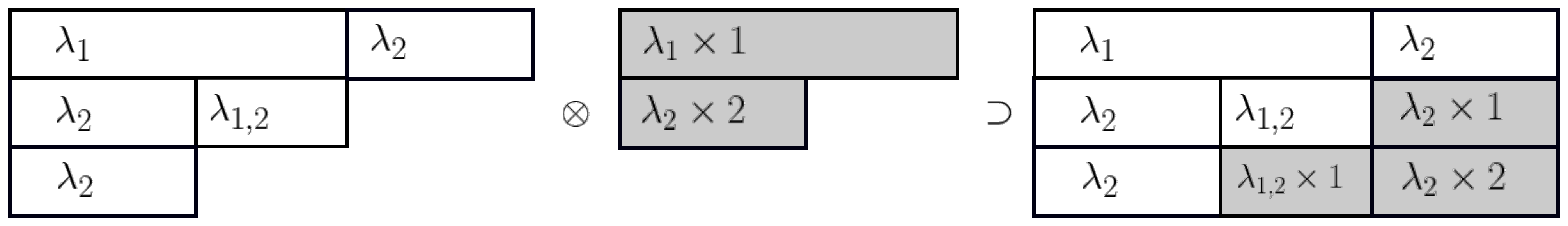}  
  \label{fig:ex2.1}
\end{figure}
\end{example}

\begin{example}
\label{ex:tableau4}
 Let $E^\lambda$ be an irreducible representation of $\mathrm{SU}(4)$. Then, $\left(E^\lambda\right)^{\otimes 4}$ contains a copy of the trivial representation. The Young diagram of an irreducible representation of $\mathrm{SU}(4)$ has at most $3$ rows, i.e. $\lambda=(\lambda_1,\lambda_2,\lambda_3)$. We proceed similarly as in Examples \ref{ex:tableau2} and \ref{ex:tableau3}.
\begin{figure}[H]
\centering 
\includegraphics[width=\textwidth]{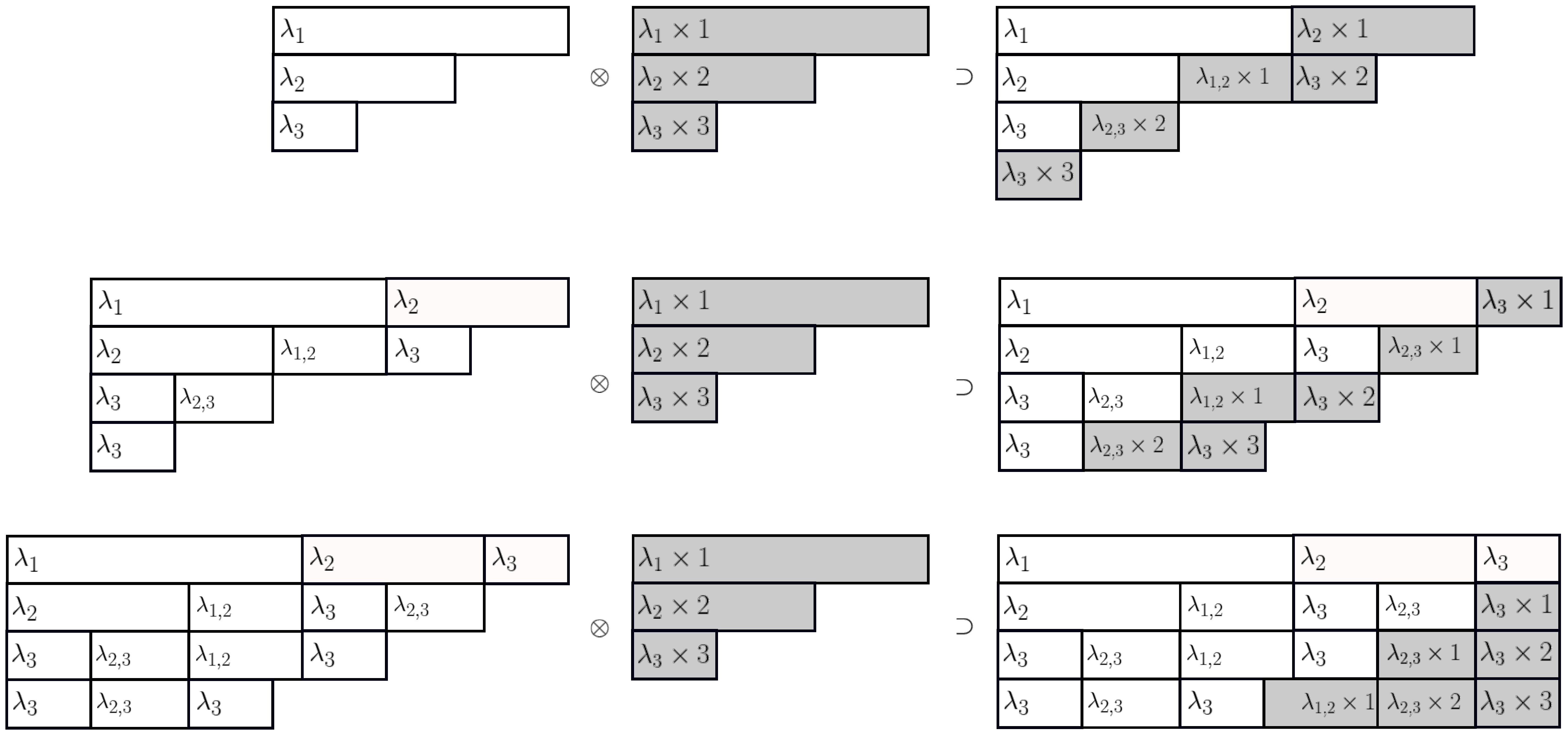}  
  \label{fig:ex3}
\end{figure}
\end{example}

In general, it is hard to prove a theorem which is valid for a Young diagram of any shape. However, we will show that in fact Examples \ref{ex:tableau2}, \ref{ex:tableau3} and \ref{ex:tableau4} can be generalised to any dimension $N$, and Young diagram $\lambda$, i.e. the following theorem holds.

\begin{theorem}
\label{th:main}
Let $E^\lambda$ be an irreducible representation of $\mathrm{SU}(N)$. Then, $\left(E^\lambda\right)^{\otimes N}$ contains a copy of the trivial representation.
\end{theorem}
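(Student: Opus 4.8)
The plan is to translate the statement into a purely combinatorial claim about Littlewood--Richardson coefficients and then exhibit one explicit valid filling. By Fact~\ref{trivial} the trivial representation of $\mathrm{SU}(N)$ is $E^{R}$ with $R=(|\lambda|)^N$, the rectangle with $N$ rows and $|\lambda|$ columns; note that the box count already matches, since $N$ copies of $\lambda$ contribute exactly $N|\lambda|$ boxes and $R$ has $N|\lambda|$ boxes. Hence the assertion $(E^\lambda)^{\otimes N}\supset E^R$ is equivalent to the positivity of the multi-factor Littlewood--Richardson coefficient $c^{\lambda,\ldots,\lambda}_{R}$, i.e. to the existence of \emph{at least one} way to pack $N$ disjoint copies of $\lambda$ into $R$ obeying the rules of Fact~\ref{multiplication-rules}. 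Because we only need existence, it suffices to construct a single such packing, which I would do factor by factor.

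Concretely, I would build an increasing chain of Young diagrams $\lambda=\mu^{(1)}\subset\mu^{(2)}\subset\cdots\subset\mu^{(N)}=R$, each with at most $N$ rows, such that $E^{\mu^{(k)}}$ occurs in $E^{\mu^{(k-1)}}\otimes E^{\lambda}$; appending these factors one at a time and using associativity of the tensor product then yields $E^R\subset(E^\lambda)^{\otimes N}$. At the $k$-th step I would fill the skew region $\mu^{(k)}/\mu^{(k-1)}$ with the boxes of a fresh copy of $\lambda$ (row $i$ of that copy carrying the label $i$) so that the resulting diagram satisfies all four conditions of Fact~\ref{multiplication-rules}. This reproduces the worked cases: for $\mathrm{SU}(2)$ the single step is $(\lambda_1)\subset(\lambda_1,\lambda_1)$, and for $\mathrm{SU}(3)$ the chain is $(\lambda_1,\lambda_2)\subset(\lambda_1+\lambda_2,\lambda_1,\lambda_2)\subset(|\lambda|,|\lambda|,|\lambda|)$, exactly as in Examples~\ref{ex:tableau2} and~\ref{ex:tableau3}. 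The qualitative feature to exploit is that the top rows saturate to full length $|\lambda|$ first, while the row-labels of each appended copy cascade into progressively lower rows; these cascading skew fillings are precisely the combinatorial objects the paper calls telescopes.

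The main obstacle is to make this construction \emph{uniform in $\lambda$ and $N$}: I would need a closed-form description of the intermediate shapes $\mu^{(k)}$ (in terms of partial sums of the parts $\lambda_1,\ldots,\lambda_{N-1}$) together with an explicit rule for which label goes into which box of each skew strip, and then verify simultaneously for every shape that (i) the shapes stay weakly decreasing (remain Young diagrams), (ii) no column repeats a label, and (iii)--(iv) the row- and column-counting inequalities hold at every step. The counting conditions are the delicate part, since they couple the labels accumulated across several rows and columns, and they must be checked in all degenerate regimes (equal consecutive parts, vanishing parts, rows that have already saturated). I expect the bulk of the work---and the reason the full argument is deferred to the appendix---to lie in packaging the fillings as telescopes so that conditions (iii)--(iv) can be read off structurally rather than verified case by case. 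Finally, the necessary condition of Lemma~\ref{lemma:necessary} is automatically met here, because $N=m$ makes $|\lambda|N$ divisible by $m$, so no extra divisibility check is needed.
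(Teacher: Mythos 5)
Your reduction of the statement to the positivity of the Littlewood--Richardson coefficient $c^{\lambda,\ldots,\lambda}_{R}$ with $R=(|\lambda|)^N$, and your plan of building a chain $\lambda=\mu^{(1)}\subset\cdots\subset\mu^{(N)}=R$ with $E^{\mu^{(k)}}\subset E^{\mu^{(k-1)}}\otimes E^{\lambda}$, is exactly the strategy the paper follows, and your $N=2,3$ chains agree with Examples~\ref{ex:tableau2} and~\ref{ex:tableau3}. However, what you have written is a plan rather than a proof: the two things you defer --- a closed-form description of the intermediate shapes $\mu^{(k)}$ valid for every $\lambda$ and $N$, and the verification that the associated skew fillings satisfy the row- and column-counting conditions at every step --- are precisely the entire mathematical content of the theorem. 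You acknowledge this yourself (``The main obstacle is to make this construction uniform in $\lambda$ and $N$''), but acknowledging the obstacle is not the same as overcoming it. Existence of \emph{some} valid chain is exactly what is being asserted; a greedy or ad hoc continuation of the small cases could a priori get stuck (e.g.\ produce a shape that cannot absorb another copy of $\lambda$ within $N$ rows without violating the counting conditions), so nothing is established for $N\geq 5$ until the ansatz is written down and checked.

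For comparison, the paper's Part I specifies the ansatz completely: at step $k$ the boxes with label $m$ are appended to rows $\mathrm{beg}(m,k)$ through $\mathrm{end}(m,k)$ as in \eqref{eq:beg}--\eqref{eq:end}, with the number of label-$m$ boxes per row given by the telescope entries $T^r_{p,q}$ of \eqref{eq:tel}; and Part II (the Appendix) reduces Conditions~1 and~2 to the inequalities $\Delta^k_i\geq 0$ and $\Delta^k_i\geq\delta^k_i(m)$ and verifies them by an extensive case analysis, with Conditions~3 and~4 handled by the monotonicity of the telescopic partial sums. None of this is routine, and your proposal contains no substitute for it. To turn your outline into a proof you would need, at minimum, an explicit formula for the shapes $\mu^{(k)}$ (equivalently, for how many boxes of each label land in each row at each step) and a uniform argument --- not a case-by-case hope --- that the counting conditions hold; as it stands the proposal proves the theorem only for the values of $N$ already covered by the paper's worked examples.
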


\begin{proof}
The proof is divided into two main parts.
\begin{itemize}
    \item Part I: Ansatz for the solution - we provide an ansatz for the solution, i.e. similarly as in Examples \ref{ex:tableau2}, \ref{ex:tableau3} and \ref{ex:tableau4} we give an explicit construction of the sequence of diagram expansions leading to a rectangular diagram with $N$ rows (see Fact \ref{trivial}).
    \item Part II: Verification of correctness - we verify our ansatz, i.e. we check that the proposed sequence of diagram expansions is valid and gives a rectangular diagram with $N$ rows and shape $(|\lambda|,|\lambda|,\ldots,|\lambda|)$. 
\end{itemize}
 A roadmap of the proof is given in Figure \ref{fig:roadmap} from \nameref{app}. Below we cover Part I and in Part II we provide a set of five conditions (1-5) which have to be checked to conclude the proof. These checks can be found in \nameref{app}.\\
\textbf{Proof of Part I}\\
 We proceed as follows. In every step we first add boxes with label 1, then with label 2 etc. up to label $N-1$. Moreover, for every label we add the boxes to the consecutive rows i.e. we start with some top row $i$ and append a certain number of boxes to $q \geq 1$ rows from $i$th to $i+q-1$th row. In every step we need to specify the position, label and number of appended boxes. The idea is to append boxes in such a way that the exact number of boxes with a given label $p$ added to a given row is uniquely determined by $p$ and the total number of rows $q$ to which we append boxes with a given label.

In order to keep track of the number and type of boxes appended in every step, we define symbols

\begin{equation}
    T_{p,q}^r =
    \begin{cases*}
      \lambda_{p+r-1}-\lambda_{p+r} & if $1 \leq r < q$,\\
      \lambda_{p+q-1} & if $1 \leq r = q$,\\
      0       & otherwise.
    \end{cases*}\quad\mathrm{,}
  \end{equation}
which can be understood as entries of columns of numbers

  \begin{equation}
  \label{eq:tel}
  T_{p,q}=
  \begin{pmatrix}
    \lambda_{p+q-1}\\
    \lambda_{p+q-2}-\lambda_{p+q-1}\\
    \vdots\\
    \lambda_{p+1}-\lambda_{p+2}\\
    \lambda_{p}-\lambda_{p+1}
  \end{pmatrix}\mathrm{,}
  \end{equation}  
we call \textit{telescopes}. The name telescope is motivated by the fact that partial sums of entries of a telescope is a telescopic sum (see Section \ref{sec:cond3}). A symbol $T_{p,q}^r$ denotes the $r$th entry (counted from bottom) of a telescope $T_{p,q}$. We call $p$ the label and $q$ the length of a telescope (which is the number of its entries). For example $T_{2,3}$ is a telescope with label 2 and length 3. We will represent a telescope $T_{p,q}$ graphically as a column of $q$ boxes with label $p$, one box for each entry of a telescope. We call such boxes \textit{virtual}. For example, 

\begin{equation}
T_{2,3}=
  \begin{pmatrix}
    \lambda_{4}\\
    \lambda_{3}-\lambda_{4}\\
    \lambda_{2}-\lambda_{3}
  \end{pmatrix} \equiv \;
  \ytableausetup{nosmalltableaux,mathmode,boxsize=1em}
  \begin{ytableau}
       2  \\
       2    \\
       2 \\
       \none
\end{ytableau} \;\mathrm{,}    
\end{equation}
can be understood as three rows of boxes with labels 2. The $r$th virtual box (counted from bottom) of a column corresponds to a row of $T_{p,q}^r$ boxes with label $p$.

We will encode the way we expand diagrams using virtual boxes. Each column of virtual boxes corresponds to some telescope $T_{p,q}$. For each step $k$, we construct a pattern which corresponds to a sequence of telescopes, one for each label $m$. We append the boxes to the diagram label by label with $m$ increasing. Each virtual box in the column (i.e. entry of a telescope) corresponds to a particular number of boxes appended to some (yet not specified) row $l$ of an expanding diagram. Consecutive virtual boxes correspond to consecutive columns of the diagram. An exact number of boxes $m$ appended to the row $l$ is determined by a number of virtual boxes in a column, the label $m$ of the virtual boxes and the position of a virtual box in the column - according to (\ref{eq:tel}).

For example, the patterns encoding expansions form Examples \ref{ex:tableau3} and \ref{ex:tableau4} are

\begin{figure}[H]
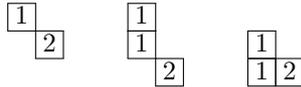

\centering 
\ytableausetup{nosmalltableaux,mathmode,boxsize=1em}
\begin{ytableau}
     1 & \none & \none  \\
    \none & 2 & \none \\
    \none & \none & \none
\end{ytableau}
\quad
\begin{ytableau}
        1 & \none & \none  \\
        1 & \none  & \none \\
          \none & 2 & \none  
\end{ytableau}
\quad
\begin{ytableau}
        \none & \none & \none  \\
        1 & \none & \none \\
         1 & 2 & \none  
\end{ytableau}
\label{fig:N=3}
\caption{The pattern of virtual boxes encoding the diagram expansions from Example \ref{ex:tableau3}.}
\end{figure}
and 
\begin{figure}[H]
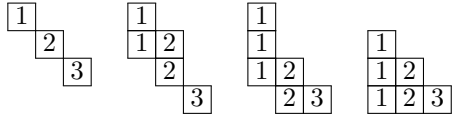

\centering 
\ytableausetup{nosmalltableaux,mathmode,boxsize=1em}
\begin{ytableau}
     1 & \none & \none  \\
    \none & 2 & \none \\
    \none & \none & 3 \\
    \none & \none & \none
\end{ytableau}
\quad
\begin{ytableau}
    1 & \none & \none  \\
    1 & 2 & \none \\
    \none & 2 & \none \\
    \none & \none & 3 
\end{ytableau}
\quad
\begin{ytableau}
    1 & \none & \none  \\
    1 & \none & \none \\
    1 & 2 & \none \\
    \none & 2 & 3 
\end{ytableau}
\quad
\begin{ytableau}
    \none & \none & \none  \\
    1 & \none & \none \\
    1 & 2 & \none \\
    1 & 2 & 3 
\end{ytableau}
\label{fig:N=4}
\caption{The pattern of virtual boxes encoding the diagram expansions from Example \ref{ex:tableau4}.}
\end{figure}

What remains to uniquely define the sequence of diagram expansions for any $N$ is to specify, for every step $k$ and every label $m$, a row of a diagram with the top virtual of a column, which we denote by $\text{beg}(m,k)$, and number of virtual boxes in a column, i.e. the length of a telescope, denoted by $\text{len}(m,k)$ (or equivalently - a row with the bottom virtual box $\text{end}(m,k)$). Trying to generalise the patterns for $N=3$ and $N=4$, we propose the following

 \begin{equation}
 \label{eq:beg}
    \text{beg}(m,k)=
    \begin{cases*}
     m & if $1 \leq k \leq N-m$,\\
     m+1 & if $N-m < k \leq N$.
    \end{cases*} \quad\mathrm{,}
  \end{equation}
  
  \begin{equation}
  \label{eq:len}
    \text{len}(m,k)=
    \begin{cases*}
     k & if $1 \leq k \leq N-m$,\\
     N-m & if $N-m < k \leq N$.\\
    \end{cases*} \quad\mathrm{,}
  \end{equation}
so that, since $\text{end}(m,k)=\text{beg}(m,k)+\text{len}(m,k)-1$, we get
  
  \begin{equation}
  \label{eq:end}
    \text{end}(m,k)=
    \begin{cases*}
     m+k-1 & if $1 \leq k \leq N-m$,\\
     N & if $N-m < k \leq N$.
    \end{cases*} \quad\mathrm{.}
  \end{equation}
 
  For example, in case $N=6$ we obtain the patterns presented below.
  
  \begin{figure}[H]
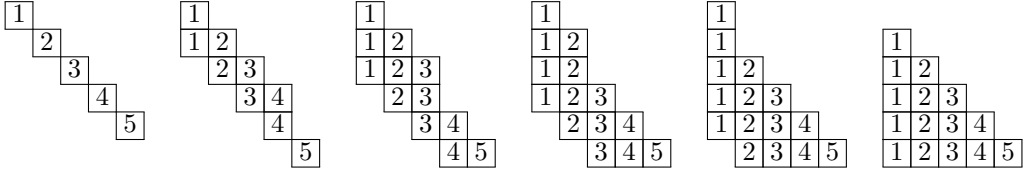

\centering 
\ytableausetup{nosmalltableaux,mathmode,boxsize=1em}
\begin{ytableau}
       1 & \none & \none & \none & \none \\
         \none & 2 & \none & \none & \none  \\
         \none & \none & 3 & \none & \none \\
         \none & \none & \none & 4 & \none  \\
         \none & \none & \none & \none & 5  \\
         \none & \none & \none & \none & \none  
\end{ytableau}
\quad
\begin{ytableau}
       1 & \none & \none & \none & \none \\
       1 & 2 & \none & \none & \none  \\
         \none & 2 & 3 & \none & \none \\
         \none & \none & 3 & 4 & \none  \\
         \none & \none & \none & 4 & \none  \\
         \none & \none & \none & \none & 5  
\end{ytableau}
\quad
\begin{ytableau}
       1 & \none & \none & \none & \none \\
       1 & 2 & \none & \none & \none  \\
       1 & 2 & 3 & \none & \none \\
         \none & 2 & 3 & \none & \none  \\
         \none & \none & 3 & 4 & \none  \\
         \none & \none & \none & 4 & 5  
\end{ytableau}
\quad
\begin{ytableau}
       1 & \none & \none & \none & \none \\
       1 & 2 & \none & \none & \none  \\
       1 & 2 & \none & \none & \none \\
       1 & 2 & 3 & \none & \none  \\
         \none & 2 & 3 & 4 & \none  \\
         \none & \none & 3 & 4 & 5  
\end{ytableau}
\quad
\begin{ytableau}
       1 & \none & \none & \none & \none \\
       1 & \none & \none & \none & \none  \\
       1 & 2 & \none & \none & \none \\
       1 & 2 & 3 & \none & \none  \\
       1 & 2 & 3 & 4 & \none  \\
       \none & 2 & 3 & 4 & 5  
\end{ytableau}
\quad
\begin{ytableau}
       \none & \none & \none & \none & \none \\
       1 & \none & \none & \none & \none  \\
       1 & 2 & \none & \none & \none \\
       1 & 2 & 3 & \none & \none  \\
       1 & 2 & 3 & 4 & \none  \\
       1 & 2 & 3 & 4 & 5  
\end{ytableau}
\label{fig:N=6}
\caption{The proposed patterns of appending virtual boxes corresponding to a sequence of diagram expansions for $N=6$.}
\end{figure}
\noindent
\textbf{Proof of Part II}\\
According to Fact \ref{multiplication-rules} we need to check the following conditions for every step $k \in \{1,2,\ldots,N\}$.
\begin{enumerate}
\item The result of appending boxes in step $k$ gives a Young diagram.
\item Among boxes appended in step $k$, there are no two boxes with the same number in any column.
\item Boxes appended in step $k$ satisfy row counting condition.
\item Boxes appended in step $k$ satisfy column counting condition.
\end{enumerate}
Moreover there is additional Condition 5 - the final diagram need to have a rectangular shape. Since the final diagram consists of $N|\lambda|$ boxes, Condition 5 follows immediately from Condition 1. Conditions 1-4 are verified in the \nameref{app}. This ends the proof.
\end{proof}

\section{Other approaches and final remarks}\label{sec:final}
Remarkably, our problem of multiplying many Young diagrams may be reformulated so that it becomes equivalent to a problem of multiplication of just two respectively larger diagrams \cite{Fu00}. Let us next revisit main points of this construction whose proof can be found in \cite{Fu00}. Our original problem asks when representation $E^\beta$ of $\mathrm{SU}(d)$ appears in the $N$-fold tensor product $E^\lambda\otimes E^\lambda\otimes\hdots\otimes E^\lambda$. For now, we allow $N\geq d$ and, as usual, we want $\beta$ to give the trivial representation, i.e. we require $\beta$  to be the square diagram with $|\lambda|$ columns and $N$ rows.

\begin{lemma}\label{lemma:fulton}
The multiplicity of $E^\beta$ in $\left(E^\lambda\right)^{\otimes N}$ is equal to the Littlewood-Richardson coefficient $c_{\alpha\beta}^\gamma$ where Young diagrams $\alpha$ and $\gamma$ are constructed from $\lambda$ as follows. The skew diagram $\gamma/\alpha$ (the complement of $\alpha$ as a subdiagram of $\gamma$) is constructed by arranging $n$ copies of $\lambda$ so that the top right corner of the $i$th copy touches the bottom left corner of the $(i+1)$th copy.
\end{lemma}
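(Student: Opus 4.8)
The plan is to pass to symmetric functions, where the statement reduces to a clean identity between a power of a Schur polynomial and a skew Schur polynomial. First I would record that the character of the irreducible representation $E^\mu$ is the Schur polynomial $s_\mu$, and that Schur polynomials are linearly independent. Writing $m_\beta$ for the multiplicity of $E^\beta$ in $\left(E^\lambda\right)^{\otimes N}$, the decomposition $\left(E^\lambda\right)^{\otimes N}=\bigoplus_\beta m_\beta\, E^\beta$ becomes, on the level of characters, the identity
\begin{equation}
\left(s_\lambda\right)^N=\sum_\beta m_\beta\, s_\beta .
\end{equation}
Thus $m_\beta$ is exactly the coefficient of $s_\beta$ in the Schur expansion of $\left(s_\lambda\right)^N$ (the case of interest being $\beta$ rectangular, so that $E^\beta$ is trivial by Fact \ref{trivial}).

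Next I would analyse the skew diagram $\gamma/\alpha$. By construction it is a disjoint union of $N$ translated copies of $\lambda$, stacked corner to corner so that consecutive copies meet only at a single lattice point. The key geometric observation is that these copies then occupy pairwise disjoint sets of rows and pairwise disjoint sets of columns: meeting only at a corner forces each copy to begin strictly beyond the rows and columns used by the adjacent one. Because a semistandard tableau on a skew shape only constrains boxes that share a row (weak increase) or a column (strict increase), a filling of $\gamma/\alpha$ is the same datum as an independent semistandard filling of each of the $N$ copies, and the monomial weight factorises accordingly. Summing over all fillings gives
\begin{equation}
s_{\gamma/\alpha}=\prod_{i=1}^{N} s_\lambda=\left(s_\lambda\right)^N .
\end{equation}

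Finally I would invoke the expansion of a skew Schur polynomial in the Schur basis, $s_{\gamma/\alpha}=\sum_\beta c_{\alpha\beta}^\gamma\, s_\beta$, which is one of the standard guises of the Littlewood-Richardson coefficients. Combining this with the factorisation above yields $\sum_\beta m_\beta\, s_\beta=\sum_\beta c_{\alpha\beta}^\gamma\, s_\beta$, and comparing coefficients of each $s_\beta$ (linear independence again) gives $m_\beta=c_{\alpha\beta}^\gamma$, as claimed.

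The step I expect to be the main obstacle is the middle one: one must verify that the corner-to-corner stacking genuinely produces a skew shape, i.e. that the outer and inner boundaries are honest partitions with $\alpha\subseteq\gamma$, and that the row-and-column disjointness hypothesis needed for the product formula really holds. Both are consequences of the prescribed placement, but they should be checked rather than taken for granted; once they are in place, the factorisation of $s_{\gamma/\alpha}$ and hence the whole lemma follow immediately.
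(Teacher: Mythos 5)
Your proof is correct. Note that the paper does not actually prove this lemma; it defers entirely to Fulton's survey \cite{Fu00}, so there is no in-paper argument to compare against. What you have written is the standard symmetric-function proof of that cited fact, and it is complete in its essentials: characters turn the tensor power into $\left(s_\lambda\right)^N$; the corner-to-corner stacking produces a genuine skew shape whose connected components occupy pairwise disjoint rows and columns, so the semistandard conditions decouple and $s_{\gamma/\alpha}$ factors as $\left(s_\lambda\right)^N$; and the expansion $s_{\gamma/\alpha}=\sum_\beta c_{\alpha\beta}^\gamma s_\beta$ finishes the argument. You correctly flag the one step that needs checking (that $\gamma$ and $\alpha$ are honest partitions with $\alpha\subseteq\gamma$ and that the components really are row- and column-disjoint); this is immediate from the placement, since copy $i$ occupies columns $(i-1)\lambda_1+1$ through $i\lambda_1$ and a disjoint block of rows. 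The only point I would add is a word on linear independence in the $\mathrm{SU}(m)$ setting: distinct diagrams can give isomorphic $\mathrm{SU}(m)$ irreps when they differ by full columns of length $m$, but since $\left(s_\lambda\right)^N$ is homogeneous of degree $N|\lambda|$, all diagrams $\beta$ occurring have the same number of boxes and hence label pairwise inequivalent irreps, so the coefficient comparison is legitimate (equivalently, one can run the whole argument for $\mathrm{GL}(m)$ and restrict at the end).
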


At the first sight, it may appear that using the above lemma would simplify our proof of Theorem \ref{th:main}. However, the technical difficulty of checking the Littlewood-Richardson rules in the multiplication of such large diagrams remains the same. Nevertheless, Lemma \ref{lemma:fulton} will be very useful in our considerations concerning the complete LU symmetry described in Subsection \ref{sub:sud}.

For general (semi)simple compact Lie groups some techniques analogous to Young diagrams have been developed \cite{koike,Klimyk,g2}. However, their application is not as straightforward as it was in the case of Young diagrams. In this section, we describe the most straightforward way of finding states with diagonal $G$-symmetry that relies on solving a large set of linear equations derived from generators of $G$. We test this algorithm for $G=\mathrm{SO}(d)$ in subsection \ref{sub:sod}. The algorithm is as follows. The complexification of the Lie algebra of $G$, $\mkg^\CC \coloneqq \mkg\oplus i\mkg$, has the Cartan decomposition \cite{FuHa91}
\begin{equation}
\mkg^\CC=\mk{h}\oplus\bigoplus_{\alpha\in\Delta_+}\left(\mkg^\CC_{\alpha}\oplus\mkg^\CC_{-\alpha}\right) \mathrm{,}   
\end{equation}
where $\mk{h}$ is the maximal commutative subalgebra, $\Delta_+$ is the set of positive roots of $\mkg^\CC$ and $\mkg^\CC_{\alpha}=\CC e_\alpha$ and $\mkg^\CC_{-\alpha}=\CC e_{-\alpha}$ with $e_{\alpha}$ and $e_{-\alpha}$ called the positive and negative root operators respectively. For instance, when $G=\mathrm{SU}(d)$, the positive root operators are $\kb{i}{j}$ with $i<j$ while the negative root operators are $\kb{i}{j}$ with $i>j$. For $G=\mathrm{SO}(d)$, we specify them in Subsection \ref{sub:sod}. The key fact we are using is that given a unitary representation $\tilde \pi:\ G\to U(D)$, the highest weights in the decomposition of $\pi$ into irreps are annihilated by all positive root operators \cite{FuHa91}. However, we are interested in finding the trivial representation whose highest weight is by definition also the lowest weight, i.e. it is annihilated by all negative root operators. This gives us a set of $2|\Delta_+|$ equations. In fact, commutation relations allow us to consider only the simple roots $\Pi\subset \Delta_+$ which form a basis of $\Delta_+$. The number of simple roots defines the rank of group $G$, $r_G \coloneqq |\Pi|$. For $G=\mathrm{SU}(d)$, the root operators associated with simple roots are $\kb{i}{i+1}$ for $i=1,\hdots,d-1$. This means in particular that in equation \eqref{bosons:eqs} from Subsection \ref{sub:bosons} it would be enough just to take $(\alpha,\beta)\in\{(a,b),(b,a),(b,c),(c,b)\}$ which gives a set of $4$ equations instead of $6$ equations.

More formally, we are looking at representation $\tilde\pi:\ G\to U(d^N)$ defined as $\tilde \pi=\pi\otimes\hdots\otimes\pi$ with $\pi:\ G\to U(d)$ being an irrep of $G$. Algebra $\mkg$ is represented via the derived representation
\begin{equation}
\partial\tilde\pi \coloneqq \partial\pi\otimes\bone\otimes\hdots\otimes\bone+\bone\otimes\partial\pi\otimes\bone\otimes\hdots\otimes\bone+\bone\otimes\hdots\otimes\bone\otimes\partial\pi \mathrm{,}    
\end{equation}
where $\partial\pi:\ \mkg\to \mku(d)$ is the derived representation of $\pi$. State $\kpsi\in \left(\CC^d\right)^{\otimes N}$ spans the trivial representation of $G$ of and only if
\begin{equation}\label{simple-general}
\partial\tilde\pi(e_\alpha)\kpsi=0{\mathrm{\ and\ }}\partial\tilde\pi(e_{-\alpha})\kpsi=0{\mathrm{\ for\ all\ }}\alpha\in \Pi.
\end{equation}
The above set of equations \eqref{simple-general} gives us $2r_G$ linear equations for $\kpsi$ being an element of $d^N$-dimensional Hilbert space. Hence, the complexity of the problem scales polynomially with the rank of $G$ and exponentially with the number of particles.

\subsection[xxx]{Complete diagonal LU-symmetry and diagonal $\mathrm{SU}(2)$-symmetry}\label{sub:sud}
As promised in Subsection \ref{sub:qudits}, here we derive multiplicities of the trivial representation of $\mathrm{SU}(d)$ that appear in Table \ref{tab:natural} using Lemma \ref{lemma:fulton}.
\begin{figure}[H]
\centering
\includegraphics[width=0.4\linewidth]{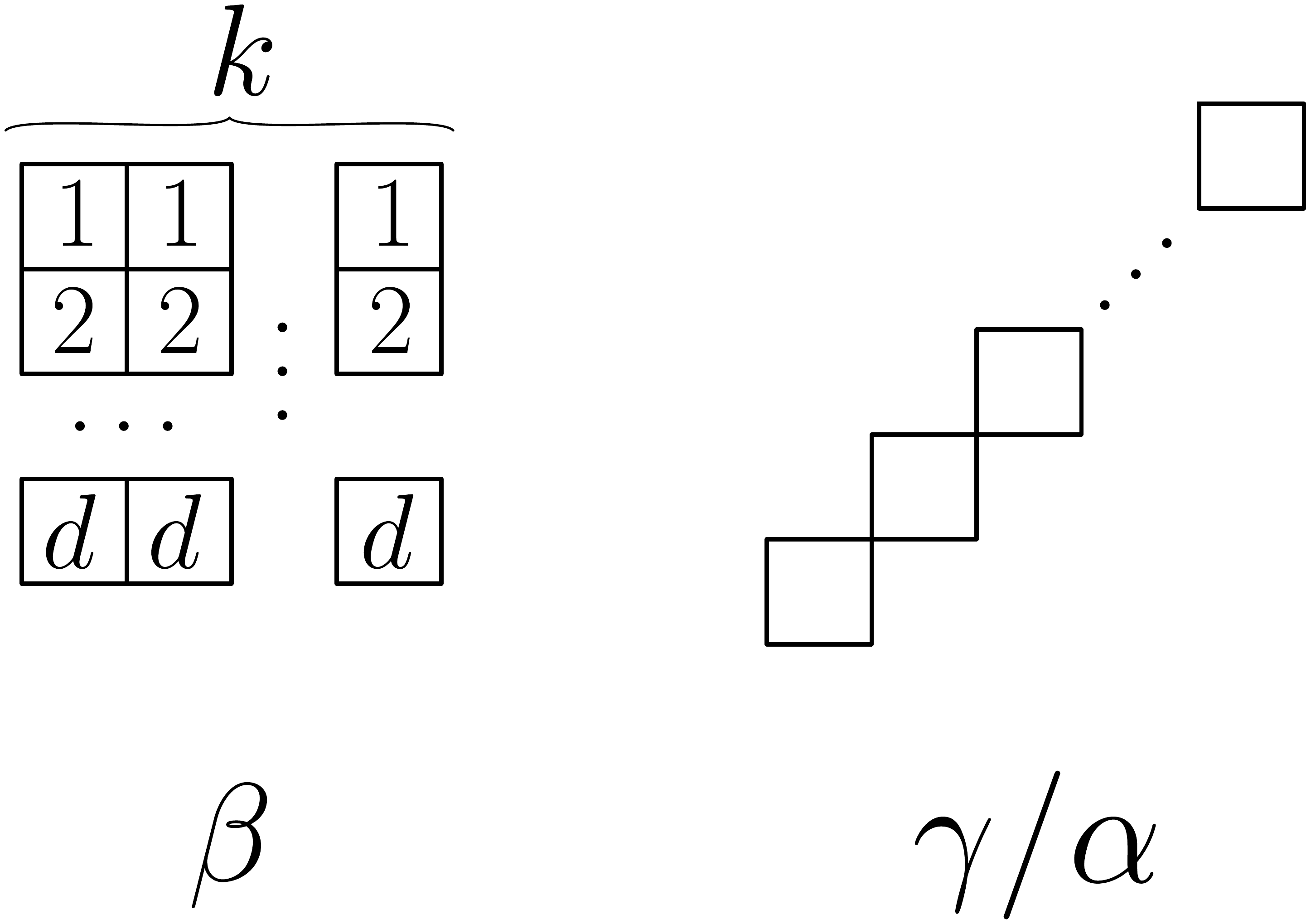}
  \caption{Diagrams $\beta$ and $\gamma/\alpha$ from Lemma \ref{lemma:fulton} for $\lambda=(1)$ and $N=kd$; $E^\lambda$ being the natural representation of $\mathrm{SU}(d)$.}
  \label{fig:fulton-sud}
\end{figure}
Figure \ref{fig:fulton-sud} shows that the multiplicity of the trivial representation $E^\beta$ of $\mathrm{SU}(d)$ in the product $\left(E^\lambda\right)^{\otimes kd}$, $\lambda=(1)$, is the same as the number of ways one can fill diagram $\gamma/\alpha$ with labels from tableau $\beta$ while satisfying Littlewood-Richardson rules. Diagram $\gamma/\alpha$ has a particularly simple form for which the row counting condition and column counting condition are equivalent to the following condition. Arrange labels of the skew tableau $\gamma/\alpha$ in a sequence $(a_1,a_2,\hdots,a_{kd})$ where $a_i$ is the label of the box in the $i$th row with row $1$ being the top row. Then, for any $j\in\{1,\hdots,kd\}$ we require that
\begin{equation}\label{sud-rule}
\#\{i\leq j|\ a_i=n\}\leq\#\{i\leq j|\ a_i=n-1\}\quad{\mathrm{for\ all\ }}n\in\{1,\hdots,d\}.
\end{equation}
Hence, the multiplicity of $E^\beta$ in $\left(E^\lambda\right)^{\otimes kd}$ for $\lambda=(1)$ is the number of sequences $(a_1,a_2,\hdots,a_{kd})$, $1\leq a_i\leq d$ that satisfy conditions \eqref{sud-rule} and
\begin{equation}
\#\{1\leq i\leq kd|\ a_i=n\}=k\quad{\mathrm{for\ all\ }}n\in\{1,\hdots,d\} \mathrm{.}    
\end{equation}
Such sequences are in a bijection with certain combinatorial objects called generalised Dyck paths whose number is given by $d$-dimensional Catalan numbers \cite{OEIS,GP13}
\begin{equation}
c_k^{(d)}=(kd)!\prod_{i=0}^{k-1}\frac{i!}{(d+i)!} \mathrm{.}    
\end{equation}

As a final part of this subsection, we briefly remark that the problem of finding states of diagonal $\mathrm{SU}(2)$-symmetry is the classical problem of spin composition. More specifically, one views a qudit, $\CC^d$, as the configuration space of a particle with spin $J$ where $d=2J+1$, i.e. $\mH_1=\CC^d\cong S^{d-1}\left(\CC^2\right)$. In order to find states of $N$ such qudits with diagonal $\mathrm{SU}(2)$-symmetry, we decompose the $N$-fold product $\mH_1\otimes\hdots\otimes\mH_1$ into irreducible $\mathrm{SU}(2)$-components and ask about the multiplicity of the trivial representation which is the singlet (spin $0$) space. The multiplicities have been calculated in \cite{spin} and they can be expressed by certain combinations of hypergeometric functions. In particular, for $d$-even the singlet space appears with nonzero multiplicity only when $N$ is even.

\subsection[xxx]{States with diagonal $\mathrm{SO}(d)$-symmetry}\label{sub:sod}
Here, we specify the general algorithm \eqref{simple-general} to $G=\mathrm{SO}(d)$. We consider $N$-fold tensor products of the natural unitary representation of $\mathrm{SO}(d)$ on $\CC^d$. Generators are real $d\times d$ real antisymmetric matrices, i.e. $A^T=-A$.  In contrast to $\mk{su}(d)^\CC$, simple root operators (and their negatives) of $\mk{so}(d)^\CC$ have slightly more complicated forms whose derivation can be found in \cite{Knapp}. Simple root operators are different for even and odd $d$. There is a set of $\floor{\frac{d}{2}}$ operators that is common for $d$ both even and odd. It consists of block-diagonal operators containing only one nonzero block of size $4\times 4$. In the basis $\ket{s}$, $s\in\{0,\hdots,d-1\}$, their precise forms read
\begin{gather}
E_{j}^{+}=\kb{2j}{2j+2}+i\kb{2j}{2j+3}-i\kb{2j+1}{2j+2}+\kb{2j+1}{2j+3} \\ -\kb{2j+2}{2j}+i\kb{2j+2}{2j+2}-i\kb{2j+3}{2j}-\kb{2j+3}{2j+1}, \nonumber \\
E_{j}^{-}=\kb{2j}{2j+2}-i\kb{2j}{2j+3}+i\kb{2j+1}{2j+2}+\kb{2j+1}{2j+3} \\ -\kb{2j+2}{2j}-i\kb{2j+2}{2j+2}+i\kb{2j+3}{2j}-\kb{2j+3}{2j+1}\nonumber,
\end{gather}
where $j\in\{0,\hdots,\floor{\frac{d}{2}}-2\}$. For $d$ even, we add to the above set one more simple root operator (and its negative) which reads
\begin{gather}
E_{even}^{+}=\kb{d-4}{d-2}-i\kb{d-4}{d-1}-i\kb{d-3}{d-2}-\kb{d-3}{d-1} \\ -\kb{d-2}{d-4}+i\kb{d-2}{d-3}+i\kb{d-1}{d-4}+\kb{d-1}{d-3}\nonumber, \\
E_{even}^{-}=\kb{d-4}{d-2}+i\kb{d-4}{d-1}+i\kb{d-3}{d-2}-\kb{d-3}{d-1} \\ -\kb{d-2}{d-4}-i\kb{d-2}{d-3}-i\kb{d-1}{d-4}+\kb{d-1}{d-3}\nonumber.
\end{gather}
In the case when $d$ is odd, we add the following operators
\begin{gather}
E_{odd}^{+}=\kb{d-3}{d-1}-i\kb{d-2}{d-1}-\kb{d-1}{d-3}+i\kb{d-1}{d-2}, \\
E_{odd}^{-}=\kb{d-3}{d-1}+i\kb{d-2}{d-1}-\kb{d-1}{d-3}-i\kb{d-1}{d-2}. 
\end{gather}
According to the general algorithm \eqref{simple-general}, in order to find a state $\kpsi\in \left(\CC^d\right)^N$ with the above diagonal $\mathrm{SO}(d)$-symmetry, we need to solve the following set of equations
\begin{equation}\label{sod-1}
\left(E_j^{+/-}\otimes\bone\otimes\hdots\otimes\bone+\hdots+\bone\otimes\hdots\otimes\bone\otimes E_j^{+/-}\right)\kpsi=0
\end{equation}
for $j\in\{0,\hdots,\floor{\frac{d}{2}}-2\}$. On top of that, we require
\begin{equation}\label{sod-2}
\left(E_{even/odd}^{+/-}\otimes\bone\otimes\hdots\otimes\bone+\hdots+\bone\otimes\hdots\otimes\bone\otimes E_{even/odd}^{+/-}\right)\kpsi=0
\end{equation} 
for $d$ even and $d$ odd respectively. The dimension of the solution space (the multiplicity of the trivial representation of $\mathrm{SO}(d)$) for small $N$ and $d$ has been calculated in Table \ref{tab:sod}. For calculations we used a multicore server with 120 GB of RAM. As the size of equations grows exponentially with $N$, the main limiting resource is the size of RAM. As one can see from Table \ref{tab:sod}, the method of directly solving equations \eqref{sod-1} and \eqref{sod-2} is inefficient and allows one to find LME states only for small $N$ and $d$. However, if one is interested in finding only the dimension of the LME space, one can use more efficient methods for computing multiplicities of irreps. Such methods have been implemented in the Lie-ART 2.0 package \cite{lieART} that we used here.

\begin{table}[H]
\centering
\begin{tabular}{c|cccccccccc}
\multirow{2}{*}{$d$} & \multicolumn{10}{c}{$N$} \\
& 1 & 2 & 3 & 4 & 5 & 6 & 7 & 8 & 9 & 10\\
\hline 
\hline
3 & 1 & 1 & 1 & 3 & 6 & 15 & 36 & 91 & {\bf{232}} & {\bf603} \\
\hline
4 & 1 & 1 & 0 & 4 & 0 & 25 & 0 & {\bf{196}} & {\it{0}} & {\bf{1764}} \\
\hline
5 & 1 & 1 & 0 & 3 & 1 & 15 & {\bf{15}} & {\bf{105}} & {\bf{190}} & {\bf{945}} \\
\hline
6 & 1 & 1 & 0 & 3 & 0 & {\bf{16}} & {\it{0}} & {\bf{126}} & {\it{0}} & {\bf{1296}} \\
7 & 1 & 1 & 0 & 3 & 0 & {\bf{15}} & {\bf{1}} & {\bf{105}} & {\bf{28}} & {\bf{945}} \\
\end{tabular}
\caption{Multiplicities of the trivial irrep of $\mathrm{SO}(d)$ in $\left(\CC^d\right)^{\otimes N}$. Numbers in regular font come from solving linear equations \eqref{sod-1} and \eqref{sod-2}. Zeros in italics come from an application of Lemma \ref{lemma:su2}. Numbers in bold were computed using the Lie-ART 2.0 package \cite{lieART}.}
\label{tab:sod}
\end{table}

Lemma \ref{lemma:su2} explains the existence of zeros in Table \ref{tab:sod} above for even $d$.
\begin{lemma}\label{lemma:su2}
Denote by $\pi_d$ the symmetric representation of $\mathrm{SU}(2)$ on $\CC^d\cong S^{(d-1)}\left(\CC^2\right)$ and by $\partial\pi_d:\ \mk{su}(2)\to\mk{su}(d)$ the corresponding derived representation of the $\mk{su}(2)$ algebra. For a given compact semisimple Lie group $G$ and its irrep $\pi_{G,d}:\ G\to \mathrm{SU}(d)$ assume that $\im\,\partial\pi_d\subset \im\,\partial\pi_{G,d}$. If in $\mH=\left(\CC^{d}\right)^{\otimes N}$ there are no LME states with diagonal $\mathrm{SU}(2)$-symmetry given by $\pi_d$, then there are no LME states with diagonal $G$-symmetry. In particular, there are no LME states with diagonal $\mathrm{SU}(d)$- or $\mathrm{SO}(d)$-symmetry given by natural representations of these groups.
\end{lemma}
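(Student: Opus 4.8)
The plan is to prove the contrapositive and to exploit a single structural observation: at the level of the Lie algebra, the property of being a diagonally invariant state depends only on the image $\im\,\partial\sigma\subset\mathfrak{su}(d)$ of the representation inside $\mathfrak{su}(d)$, and not on the abstract group. Concretely, for a connected compact group $H$ with irrep $\sigma:H\to\mathrm{SU}(d)$, a state $\kpsi\in(\CC^d)^{\otimes N}$ is diagonally $H$-invariant iff it is annihilated by every diagonal generator $\widetilde A\coloneqq\sum_{i=1}^N\bone^{\otimes(i-1)}\otimes A\otimes\bone^{\otimes(N-i)}$ with $A\in\im\,\partial\sigma$. Since $A\mapsto\widetilde A$ is linear, the set of diagonal generators, and hence the space of invariant states, is determined entirely by the subspace $\im\,\partial\sigma\subset\mathfrak{su}(d)$.

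First I would set up the contrapositive: assume $\mH=(\CC^d)^{\otimes N}$ contains an LME state $\ket{\Phi}$ with diagonal $G$-symmetry, and deduce that the same $\ket{\Phi}$ also carries a diagonal $\mathrm{SU}(2)$-symmetry given by $\pi_d$. As $G$ is connected, diagonal $G$-invariance of $\ket{\Phi}$ is equivalent to $\widetilde A\ket{\Phi}=0$ for all $A\in\im\,\partial\pi_{G,d}$. The hypothesis $\im\,\partial\pi_d\subset\im\,\partial\pi_{G,d}$ then forces $\widetilde B\ket{\Phi}=0$ for every $B\in\im\,\partial\pi_d$; since $\mathrm{SU}(2)$ is connected this says precisely that $\pi_d(u)^{\otimes N}\ket{\Phi}=\ket{\Phi}$ for all $u\in\mathrm{SU}(2)$. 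Because $\pi_d$ is irreducible, the Schur-lemma argument recalled around Corollary \ref{cor:lme} shows that $\ket{\Phi}$ is automatically LME, so $\ket{\Phi}$ itself witnesses an LME state with diagonal $\mathrm{SU}(2)$-symmetry, contradicting the assumption. This establishes the lemma.

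For the ``in particular'' claims, what remains is to verify the hypothesis $\im\,\partial\pi_d\subset\im\,\partial\pi_{G,d}$ for the two natural representations. For $G=\mathrm{SU}(d)$ the natural representation has $\partial\pi_{G,d}=\mathrm{id}$, so $\im\,\partial\pi_{G,d}=\mathfrak{su}(d)$ and the inclusion is immediate; combined with the fact recalled in Subsection \ref{sub:sud} that a $\pi_d$-singlet requires $N$ even when $d$ is even, this reproduces the vanishing pattern of Table \ref{tab:natural}. For $G=\mathrm{SO}(d)$ the natural representation gives $\im\,\partial\pi_{G,d}=\mathfrak{so}(d)$, so the inclusion reduces to whether $\partial\pi_d(\mathfrak{su}(2))$ can be conjugated into the real antisymmetric matrices, that is, whether the symmetric representation $\pi_d$ preserves a symmetric bilinear form; this orthogonality type is detected by its Frobenius--Schur indicator and must be read off against the parity of $d$.

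The hard part is not the structural step, which becomes almost immediate once one notices the linearity of $A\mapsto\widetilde A$ and the resulting reduction to an inclusion of image subalgebras. The genuine content, and the only delicate point, is the verification of $\im\,\partial\pi_d\subset\im\,\partial\pi_{G,d}$ for a concrete $G$: trivial for $\mathrm{SU}(d)$, but for $\mathrm{SO}(d)$ it hinges entirely on the reality type of the symmetric $\mathrm{SU}(2)$-representation, which is where I would expect the main care to be required.
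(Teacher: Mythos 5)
Your main argument is correct and is in substance the paper's own proof: the paper likewise observes that a diagonal $G$-invariant vector is exactly one annihilated by the diagonal action of every element of $\im\,\partial\pi_{G,d}$, so the inclusion $\im\,\partial\pi_d\subset\im\,\partial\pi_{G,d}$ immediately hands the $\mathrm{SU}(2)$-invariance to any $G$-invariant state. Your contrapositive phrasing, the explicit appeal to connectedness to pass between group invariance and Lie-algebra annihilation, and the Schur-lemma remark that the resulting state is automatically LME are just more careful renderings of the same one-line argument.

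The only real divergence is in the ``in particular'' clause, and there your extra care is to your credit but you stop one step short. The paper dismisses the verification of $\im\,\partial\pi_d\subset\im\,\partial\pi_{G,d}$ as ``a straightforward calculation in standard bases''; you correctly identify that for $G=\mathrm{SO}(d)$ the question is whether the irreducible $d$-dimensional representation of $\mathrm{SU}(2)$ preserves a nondegenerate symmetric bilinear form. You should finish that computation rather than leave it ``to be read off'': $S^{d-1}(\CC^2)$ has Frobenius--Schur indicator $+1$ (real type) precisely when $d$ is odd and $-1$ (quaternionic type) when $d$ is even, so the inclusion into $\mk{so}(d)$ holds only for odd $d$ and \emph{fails} for even $d$. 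This means the $\mathrm{SO}(d)$ instance of the lemma is only available for odd $d$, whereas the paper applies it without qualification --- in particular the italicised zeros in Table \ref{tab:sod} sit at $d=4,6$, exactly where the hypothesis breaks down. Those zeros are still correct, but they need a different argument (for even $d$ every $\mathrm{SO}(d)$-invariant in tensor powers of the vector representation is built from the metric and the $d$-index epsilon tensor, hence has even degree), not Lemma \ref{lemma:su2}. So: complete the Frobenius--Schur step, state the parity restriction explicitly, and note that the even-$d$ entries of the table require a separate justification.
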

\begin{proof}
Because  $G$ is semisimple, there is a copy of the trivial representation in the product $\pi_ {G,d}\otimes\hdots\otimes\pi_{G,d}$ iff there exists a vector in $\mH$ that is annihilated by all operators from $\im\,\partial\pi_{G,k}$. Because $\im\,\partial\pi_d\subset \im\,\partial\pi_{G,d}$, any vector annihilated by all operators from $\im\,\partial\pi_{G,d}$ is also annihilated by all operators from $\im\,\partial\pi_k$, i.e. has a diagonal $\mathrm{SU}(2)$-symmetry. The claim for $\mathrm{SU}(d)$ and $\mathrm{SO}(d)$ can be verified by a straightforward calculation in standard bases of $\mk{su}(d)$ and $\mk{so}(d)$.
\end{proof}

\begin{remark}
One can easily find systems containing LME states with some diagonal $G$-symmetry but having no LME states with $\mathrm{SU}(2)$ symmetry. For instance, consider $\mH=\CC^8\otimes\CC^8\otimes\CC^8$ and take the $8$-dimensional representation of $\mathrm{SU}(3)$ given by diagram $\lambda=(2,1)$. By Theorem \ref{th:main}, $\left(E^\lambda\right)^{\otimes 3}$ contains a copy of the trivial representation of $\mathrm{SU}(3)$. However, the third tensor power of the $8$-dimensional representation of $\mathrm{SU}(2)$ described by diagram $\mu=(3)$, $E^\mu=S^7\left(\CC^2\right)$, does not satisfy the necessary condition given in Lemma \ref{lemma:necessary}. Hence, $\left(E^\mu\right)^{\otimes 3}$ does not contain any trivial $\mathrm{SU}(2)$ components.
\end{remark}

To end this subsection, we write down exemplary linearly independent solutions of \eqref{sod-1} and \eqref{sod-2} for $N=4$ and any $d$. We conjecture that this is in fact the full solution space for any $d\neq 4$.
\begin{gather}
\ket{\Psi_1}=\sum_{i,j=1}^d\ket{i,j,j,i},\quad \ket{\Psi_2}=\sum_{i,j=1, i\neq j}^d(\ket{i,j,i,j}-\ket{i,j,j,i}), \\
\ket{\Psi_3}=\sum_{i,j=1, i\neq j}^d(\ket{i,i,j,j}-\ket{i,j,j,i}).
\end{gather}
Furthermore, if $d=N$, we have one state which has in fact the complete diagonal LU-symmetry and is of the form \eqref{state-complete}.

\section{Summary}\label{summary}

 In this paper we proved a theorem in representation theory and showed how it can be used to design critical states with particular large local symmetries, called diagonal $\mathrm{SU}(m)$-symmetries. Our method may be a source of operationally useful quantum states that can be realised in various systems of distinguishable qudits and traps with bosons or fermions occupying a finite number of modes. Moreover, the criterion given in the paper can be used to identify interesting classes of strictly-semistable states - namely the ones which asymptotically have large local symmetries. One should keep in mind that in the systems with LME states, the dimension of a set of LME states (which contains all states with diagonal $H$-symmetries) grows exponentially with $N$, just as the dimension of the Hilbert space of a system. The proof of the main theorem is lengthy and very technical since it involves quite complex combinatorial objects connected with the Littlewood-Richardson rule. However, up to our best knowledge, there is no simple and general criterion allowing one to determine the existence of LME states with diagonal $\mathrm{SU}(m)$-symmetries. In the future, it would be good to consider systems with nonhomogenous local dimensions. We also conjecture that if the manifold of LME states in a multiqudit system $\mH \coloneqq \CC^{d_1}\otimes\hdots\otimes \CC^{d_N}$, up to local unitary equivalence, has dimension at least one, $\mathrm{dim}(\mathrm{LME}/K) \geq 1$, then $\mathcal{H}$ contains strictly semistable states. One can also think about better ways of finding explicit forms of LME states with diagonal $H$-symmetry. For example, in case of compact $H$ it is clear that the set of all such states is the image $\mathrm{Im}\mathcal{P} \subset \mathcal{H}$ of the projector
 \begin{equation}
    \mathcal{P}=\int_H \pi_1(h) \otimes \ldots \otimes \pi_N (h) d\mu(h) \mathrm{,}
 \end{equation}
 where $\mu$ is the normalised Haar measure on $H$.
 %
 Finally, it would be good to think about concrete forms of Kraus operators realising SEP operations for states with diagonal $G$-symmetry.
 
 \section*{Acknowledgments}
 T.M. acknowledges the support of the Foundation for Polish Science (FNP), START programme. O.S. acknowledges the support from the National Science Centre, Poland under the grant SONATA BIS: 2015/18/E/ST1/00200, A.S. acknowledges the support from the Foundation for Polish Science TEAM-NET project (contract no. POIR.04.04.00-00-17C1/18-00).
 
 \newpage
\appendix
\section*{Appendix}
\label{app}
In this appendix we complement Part II of the proof of Theorem \ref{th:main}. by checking Conditions 1-4 from the proof.

\begin{figure}[H]
        \centering
        \begin{tikzpicture}
  \node[rotate=90] at (0,0) {
        \begin{tikzpicture}[node distance=2cm]

       \node (start) [structure] {Start};
        \node (ansatz) [structure,below =1cm of start] {Ansatz for the solution};
        \node (A1) [structure, below left =2.3cm of ansatz] {A1. Definition of $\Delta$ and $\delta(m)$};
        \node (A2) [structure, below of=A1] {A2. Preliminary calculations};
        \node (B1) [structure, below left= .2cm of A2] {B1. Calculations of $\Delta$ \color{red}{(1)}};
        \node (B2) [structure, below right= .2cm of A2] {B2. Calculations of $\delta(m)$};
        \node (B3) [structure, below right=0.2 cm of B1] {B3. Comparison of $\Delta$ and $\delta(m)$ \color{red}{(2)}};
        \node (C2) [structure, below right=1.3cm of B3] {C2. Condition 4 \color{red}{(4)}};
        \node (C1) [structure, right =0.5cm of B2] {C1. Condition 3 \color{red}{(3)}};
        \node (stop) [structure, below = 1 cm of C2] {Stop};

        \draw [arrow] (start) -- (ansatz);
        \draw [arrow] (ansatz) -- (A1);
        \draw [arrow] (ansatz) -- (C1);
        \draw [arrow] (A1) -- (A2);
        \draw [arrow] (A2) -- (B1);
        \draw [arrow] (A2) -- (B2);
        \draw [arrow] (B1) -- (B3);
        \draw [arrow] (B2) -- (B3);
        \draw [arrow] (B3) -- (C2);
        \draw [arrow] (C1) -- (C2);
        \draw [arrow] (C2) -- (stop);

        \node[draw, thick, dotted, rounded corners, inner xsep=1em, inner ysep=1em, fit=(ansatz)] (box) {};
        \node[fill=white] at (box.south) {Part I: Ansatz for the solution};
        \node[draw, thick, dotted, rounded corners, inner xsep=1em, inner ysep=1em, fit=(A1) (A2) (B1) (B3) (C1) (C2)] (box) {};
        \node[fill=white] at (box.south) {Part II: Verification of correctness};
        \end{tikzpicture}
        };
        \end{tikzpicture}
        \caption{A roadmap of the proof of Theorem \ref{th:main}. The arrows show logical dependencies. Red numbers in brackets indicate which condition is concluded in a given subsection.} \label{fig:roadmap}
    \end{figure}
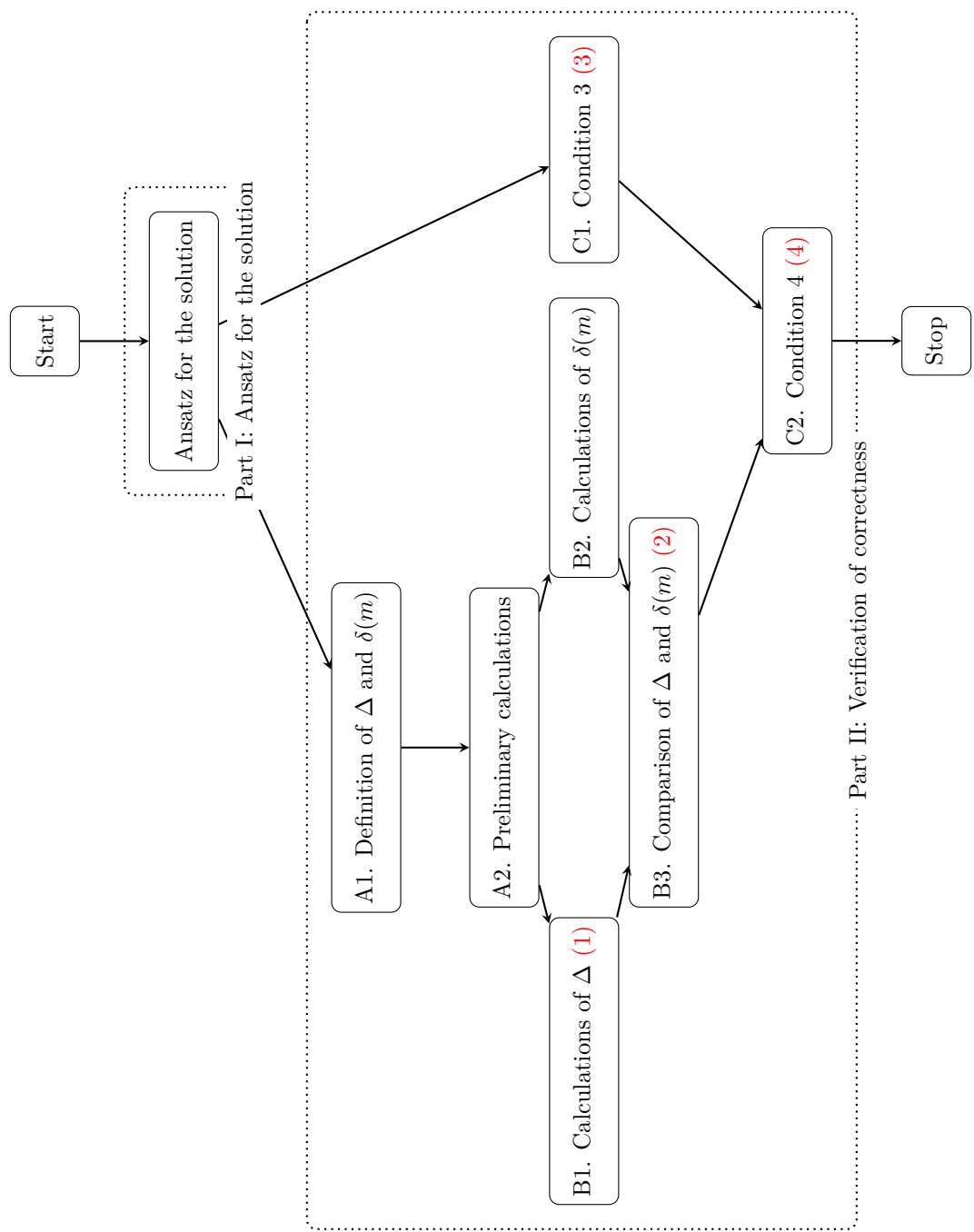

\section{Auxiliary objects for Conditions 1 and 2}
In this Section we introduce auxiliary objects - $\Delta$ and $\delta(m)$ and perform some preliminary calculations needed to calculate them in Section \ref{sec:app_b}. This enables us to check Conditions 1 and 2 since they can be reduced to inequalities (\ref{ineq1} and \ref{ineq2}) involving $\Delta$ and $\delta(m)$.
\subsection[xxx]{Definition of $\Delta$ and $\delta(m)$}

From the ansatz given in Part I of the proof of Theorem \ref{th:main}, it is easy to see that the number of boxes with label $n$ appended at step $k$ to row $i$ is given by
  \begin{equation}
  \label{eq:n}
    (\#n)_i^k=
    \begin{cases*}
      T_{n,k}^{n+k-i} & if $1 \leq k \leq N-n $,\\
      T_{n,N-n}^{N-i+1} & if $N-n < k \leq N $,\\
      0       & otherwise .
    \end{cases*} \quad\mathrm{.}
  \end{equation}
   Let us define

\begin{equation}
\label{eq:Del}
   \Delta_i^k =\sum_{n=1}^{i+1} \left( \sum_{s=1}^{k-1} (\#n)_{i}^s - \sum_{s=1}^{k-1} (\#n)_{i+1}^s \right) \text{,}
\end{equation}
and
\begin{equation}
\label{eq:del}
   \delta_i^k (m)=\sum_{n=1}^m (\#n)_{i+1}^k - \sum_{n=1}^{m-1} (\#n)_{i}^k \mathrm{.}
\end{equation}

Numbers $\Delta_i^k$ for a fixed $k$ describe the shape of the tableau obtained after all steps up to $k-1$ -- these are the lengths of steps of black "stairs" in Fig.~\ref{fig:deltas}. Precisely, $\Delta_i^k$ is a difference between the total number of boxes (length) of $i$th and $(i+1)$th row after $k-1$ steps. In order to find the interpretation of $\delta_i^k(m)$, one can left-align all the boxes appended in step $k$. Then, $\delta_i^k(m)$ is the distance (number of boxes) between the rightmost box with label $m-1$ in $i$th row and the rightmost box with label $m$ in $i+1$th row. One can think of a matrix $\Delta$ with entries $\Delta_i^k$ and matrix $\delta(m)$ with entries $\delta_i^k(m)$.

\begin{figure}[H]
\centering
\includegraphics[width=0.6\linewidth]{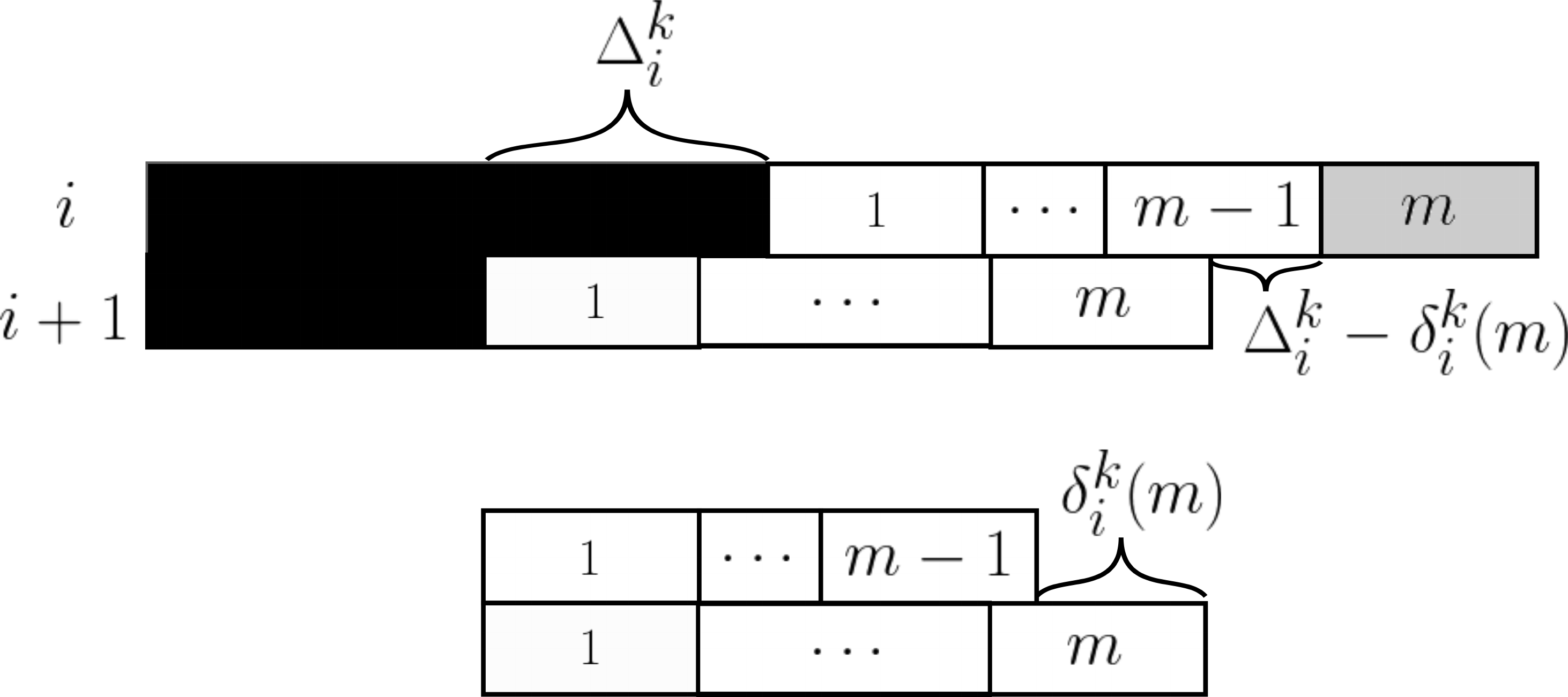}
  \caption{The meaning of  $\Delta_i^k$ and  $\delta_i^k (m)$. Rectangles represent groups of boxes with the same label. The black block represents the shape of the tableau obtained after all previous steps (step $k-1$ being the last one). The grey block is a block for which $\Delta_i^k \geq \delta_i^k (m)$ tests Condition 2.}
  \label{fig:deltas}
\end{figure}
\noindent Let us next reformulate conditions given in Fact \ref{multiplication-rules} using $\Delta_i^k$ and $\delta_i^k (m)$. Condition 1 is equivalent with
\begin{equation}
\label{ineq1}
\Delta_i^k  \geq 0\mathrm{,}
\end{equation}
and Condition 2 (see also Fig. \ref{fig:deltas}) is implied by
\begin{equation}
\label{ineq2}
   \Delta_i^k  \geq \delta_{i}^k(m)\mathrm{,}
\end{equation}
for every  $$i \in \{2,\ldots,N-1\}, \quad k \in \{1,2,\ldots,N\}, m \in \{1,2,\ldots,N-1\} \mathrm{.}$$
We introduce the symbol:
\begin{equation}
\lambda_{p,q} \coloneqq \lambda_p-\lambda_q \textrm{.}
\end{equation}
For example, in case $N=6$ we obtain (see also Fig. \ref{fig:nlayers}):

\begin{table}[H]
\centering
\begin{tabular}{c|cccccc}
\multirow{2}{*}{$i$} & \multicolumn{5}{c}{$k$} \\
& 2 & 3 & 4 & 5 & 6  \\
\hline 
\hline
1 & $\lambda_{1,2}$ & $\lambda_{2,3}$ & $\lambda_{3,4}$ & $\lambda_{4,5}$ & $\lambda_{5}$ \\
\hline
2 & $\lambda_{2,3}$ & $\lambda_{1,2}+\lambda_{3,4}$ & $\lambda_{2,3}+\lambda_{4,5}$ & $\lambda_{3,4} + \lambda_{5}$ & $\lambda_{4,5}$ \\
\hline
3 & $\lambda_{3,4}$ & $\lambda_{2,3}+\lambda_{4,5}$ & $\lambda_{1,2}+\lambda_{3,4} + \lambda_{5}$ & $\lambda_{2,3} + \lambda_{4,5}$ & $\lambda_{3,4}$ \\
\hline
4 & $\lambda_{4,5}$ & $\lambda_{3,4}+\lambda_{5}$ & $\lambda_{2,3}+\lambda_{4,5}$ & $\lambda_{1,2} + \lambda_{3,4}$ & $\lambda_{2,3}$ \\
\hline
5 & $\lambda_{5}$ & $\lambda_{4,5}$ & $\lambda_{3,4}$ & $\lambda_{2,3}$ & $\lambda_{1,2}$ \\
\hline
\end{tabular}
\caption{Entries of $\Delta^{k}_{i}$ for $N=6$.}
\end{table}

\begin{table}[H]
\centering
\begin{tabular}{c|ccccccc}
\multirow{2}{*}{$i$} & \multicolumn{5}{c}{$k$} \\
& 1 & 2 & 3 & 4 & 5 & 6  \\
\hline 
\hline
1 & - & $\lambda_{1,2}$ & $\lambda_{2,3}$ & $\lambda_{3,4}$ & $\lambda_{4,5}$ & $\lambda_5$ \\
\hline
2 & - &- & $\lambda_{1,2}$ & $\lambda_{2,3}$ & $\lambda_{3,4}$ & $\lambda_{4,5}$ \\
\hline
3 &- & - & - & $\lambda_{1,2}$ & $\lambda_{2,3}$ & $\lambda_{3,4}$ \\
\hline
4 &- & - & - & - & $\lambda_{1,2}$ & $\lambda_{2,3}$ \\
\hline
5 & - &- & - & - & - & $\lambda_{1,2}$ \\
\hline
\end{tabular}
\caption{Entries of $\delta^{k}_{i}(1)$ for $N=6$.}
\end{table}

\begin{table}[H]
\centering
\begin{tabular}{c|ccccccc}
\multirow{2}{*}{$i$} & \multicolumn{5}{c}{$k$} \\
& 1 & 2 & 3 & 4 & 5 & 6  \\
\hline 
\hline
1 & $-\lambda_{1,2}$ & $\lambda_{1,2}-\lambda_{2,3}$ & $\lambda_{2,3}-\lambda_{3,4}$ & $\lambda_{3,4}-\lambda_{4,5}$ & $\lambda_{4,5}-\lambda_5$ & $\lambda_5$ \\
\hline
2 & - & $\lambda_{2,3}-\lambda_{1,2}$ & $\lambda_{1,2}+\lambda_{3,4}-\lambda_{2,3}$ & $\lambda_{2,3}+\lambda_{4,5}-\lambda_{3,4}$ & $\lambda_{3,4}+\lambda_5-\lambda_{4,5}$ & $\lambda_{4,5}$ \\
\hline
3 & - &- & $\lambda_{2,3}-\lambda_{1,2}$ & $\lambda_{1,2}+\lambda_{3,4}-\lambda_{2,3}$ & $\lambda_{2,3}+\lambda_{4,5}-\lambda_{3,4}$ & $\lambda_{3,4}$ \\
\hline
4 & - &- & - & $\lambda_{2,3}-\lambda_{1,2}$ & $\lambda_{1,2}+\lambda_{3,4}-\lambda_{2,3}$ & $\lambda_{2,3}$ \\
\hline
5 & - &- & - & - & $\lambda_{2,3}-\lambda_{1,2}$ & $\lambda_{1,2}$ \\
\hline
\end{tabular}
\caption{Entries of $\delta^{k}_{i}(2)$ for $N=6$.}
\end{table}

\begin{table}[H]
\centering
\resizebox{\textwidth}{!}{\begin{tabular}{c|ccccccc}
\multirow{2}{*}{$i$} & \multicolumn{5}{c}{$k$} \\
& 1 & 2 & 3 & 4 & 5 & 6  \\
\hline 
\hline
1 & $-\lambda_{1,2}$ & $\lambda_{1,2}-\lambda_{2,3}$ & $\lambda_{2,3}-\lambda_{3,4}$ & $\lambda_{3,4}-\lambda_{4,5}$ & $\lambda_{4,5}-\lambda_5$ & $\lambda_5$ \\
\hline
2 & $-\lambda_{2,3}$ & $\lambda_{2,3}-(\lambda_{1,2}+\lambda_{3,4})$ & $\lambda_{1,2}+\lambda_{3,4}-(\lambda_{2,3}+\lambda_{4,5})$ & $\lambda_{2,3}+\lambda_{4,5}-(\lambda_{3,4}+\lambda_5)$ & $\lambda_{3,4}+\lambda_{5}-\lambda_{4,5}$ & $\lambda_{4,5}$ \\
\hline
3 & - & $\lambda_{3,4}-\lambda_{2,3}$ & $\lambda_{2,3}+\lambda_{4,5}-(\lambda_{1,2}+\lambda_{3,4})$ & $\lambda_{1,2}+\lambda_{3,4}+\lambda_5-(\lambda_{2,3}+\lambda_{4,5})$ & $\lambda_{2,3}+\lambda_{4,5}-\lambda_{3,4}$ & $\lambda_{3,4}$ \\
\hline
4 & - &- & $\lambda_{3,4}-\lambda_{2,3}$ & $\lambda_{2,3}+\lambda_{4,5}-(\lambda_{1,2}+\lambda_{3,4})$ & $\lambda_{1,2}+\lambda_{3,4}-\lambda_{2,3}$ & $\lambda_{2,3}$ \\
\hline
5 & - &- & - & $\lambda_{3,4}-\lambda_{2,3}$ & $\lambda_{2,3}-\lambda_{1,2}$ & $\lambda_{1,2}$ \\
\hline
\end{tabular}}
\caption{Entries of $\delta^{k}_{i}(3)$ for $N=6$.}
\end{table}

\begin{table}[H]
\centering
\resizebox{\textwidth}{!}{\begin{tabular}{c|ccccccc}
\multirow{2}{*}{$i$} & \multicolumn{5}{c}{$k$} \\
& 1 & 2 & 3 & 4 & 5 & 6  \\
\hline 
\hline
1 & $-\lambda_{1,2}$ & $\lambda_{1,2}-\lambda_{2,3}$ & $\lambda_{2,3}-\lambda_{3,4}$ & $\lambda_{3,4}-\lambda_{4,5}$ & $\lambda_{4,5}-\lambda_5$ & $\lambda_5$ \\
\hline
2 & $-\lambda_{2,3}$ & $\lambda_{2,3}-(\lambda_{1,2}+\lambda_{3,4})$ & $\lambda_{1,2}+\lambda_{3,4}-(\lambda_{2,3}+\lambda_{4,5})$ & $\lambda_{2,3}+\lambda_{4,5}-(\lambda_{3,4}+\lambda_5)$ & $\lambda_{3,4}+\lambda_{5}-\lambda_{4,5}$ & $\lambda_{4,5}$ \\
\hline
3 & $-\lambda_{3,4}$ & $\lambda_{3,4}-(\lambda_{2,3}+\lambda_{4,5})$ & $\lambda_{2,3}+\lambda_{4,5}-(\lambda_{1,2}+\lambda_{3,4}+\lambda_5)$ & $\lambda_{1,2}+\lambda_{3,4}+\lambda_5-(\lambda_{2,3}+\lambda_{4,5})$ & $\lambda_{2,3}+\lambda_{4,5}-\lambda_{3,4}$ & $\lambda_{3,4}$ \\
\hline
4 & - &$\lambda_{4,5}-\lambda_{3,4}$ & $\lambda_{3,4}+\lambda_5-(\lambda_{2,3}+\lambda_{4,5})$ & $\lambda_{2,3}+\lambda_{4,5}-(\lambda_{1,2}+\lambda_{3,4})$ & $\lambda_{1,2}+\lambda_{3,4}-\lambda_{2,3}$ & $\lambda_{2,3}$ \\
\hline
5 & - & - & $\lambda_{4,5}-\lambda_{3,4}$ & $\lambda_{3,4}-\lambda_{2,3}$ & $\lambda_{2,3}-\lambda_{1,2}$ & $\lambda_{1,2}$ \\
\hline
\end{tabular}}
\caption{Entries of $\delta^{k}_{i}(4)$ for $N=6$.}
\end{table}

\begin{table}[H]
\centering
\resizebox{\textwidth}{!}{\begin{tabular}{c|ccccccc}
\multirow{2}{*}{$i$} & \multicolumn{5}{c}{$k$} \\
& 1 & 2 & 3 & 4 & 5 & 6  \\
\hline 
\hline
1 & $-\lambda_{1,2}$ & $\lambda_{1,2}-\lambda_{2,3}$ & $\lambda_{2,3}-\lambda_{3,4}$ & $\lambda_{3,4}-\lambda_{4,5}$ & $\lambda_{4,5}-\lambda_5$ & $\lambda_5$ \\
\hline
2 & $-\lambda_{2,3}$ & $\lambda_{2,3}-(\lambda_{1,2}+\lambda_{3,4})$ & $\lambda_{1,2}+\lambda_{3,4}-(\lambda_{2,3}+\lambda_{4,5})$ & $\lambda_{2,3}+\lambda_{4,5}-(\lambda_{3,4}+\lambda_5)$ & $\lambda_{3,4}+\lambda_{5}-\lambda_{4,5}$ & $\lambda_{4,5}$ \\
\hline
3 & $-\lambda_{3,4}$ & $\lambda_{3,4}-(\lambda_{2,3}+\lambda_{4,5})$ & $\lambda_{2,3}+\lambda_{4,5}-(\lambda_{1,2}+\lambda_{3,4}+\lambda_5)$ & $\lambda_{1,2}+\lambda_{3,4}+\lambda_5-(\lambda_{2,3}+\lambda_{4,5})$ & $\lambda_{2,3}+\lambda_{4,5}-\lambda_{3,4}$ & $\lambda_{3,4}$ \\
\hline
4 & $-\lambda_{4,5}$ & $\lambda_{4,5}-(\lambda_{3,4}+\lambda_5)$ & $\lambda_{3,4}+\lambda_5-(\lambda_{2,3}+\lambda_{4,5})$ & $\lambda_{2,3}+\lambda_{4,5}-(\lambda_{1,2}+\lambda_{3,4})$ & $\lambda_{1,2}+\lambda_{3,4}-\lambda_{2,3}$ & $\lambda_{2,3}$ \\
\hline
5 & $-$ & $\lambda_5-\lambda_{4,5}$ & $\lambda_{4,5}-\lambda_{3,4}$ & $\lambda_{3,4}-\lambda_{2,3}$ & $\lambda_{2,3}-\lambda_{1,2}$ & $\lambda_{1,2}$ \\
\hline
\end{tabular}}
\caption{Entries of $\delta^{k}_{i}(5)$ for $N=6$.}
\end{table}

\subsection[xxx]{Preliminary calculations}

The possible values of $n$, $k$ and $i$ in (\ref{eq:n}) form a 3-dimensional lattice we denote by $\Lambda=\{1,2, \ldots,N\}^{\times 3}$. By an $x$-layer we mean a subset of $\Lambda$ with fixed $x$, where $x$ is $n$, $k$ or $i$. By $\#$ we denote the function given by $\#(n,k,i)=(\#n)^k_i$. In order to derive explicit formula for $\#$ we combine (\ref{eq:tel}) and  (\ref{eq:n}).  We get four possible nontrivial formulas, denoted $A1,\,A2,\,B1,\,B2$, each one valid on a different domain of $\Lambda$, which define $\#$ piecewise. The domains are given by
\begin{align}
\mathrm{Dom}(A1)&=\{(n,k,i) \in \Lambda \; |  1+i-n\leq k \leq N-n,\; n < i \} \nonumber \mathrm{,}\\
\mathrm{Dom}(A2)&=\{(n,k,i) \in \Lambda \; | \; 1+i-n\leq k \leq N-n,\; n = i  \} \mathrm{,}\\ 
\mathrm{Dom}(B1)&=\{(n,k,i) \in \Lambda \; | \; N-n < k ,\; n +1 < i   \}  \nonumber \mathrm{,} \\
\mathrm{Dom}(B2)&=\{(n,k,i) \in \Lambda \; | \; N-n < k ,\;  n +1 = i   \}  \nonumber \mathrm{.}
\end{align}
Let us denote
\begin{equation}
 S=\mathrm{Dom}(A1) \cupdot \mathrm{Dom}(A2) \cupdot \mathrm{Dom}(B1) \cupdot \mathrm{Dom}(B2)\mathrm{,}
\end{equation}
and define $\#: \Lambda \rightarrow \mathbb{N}$ by
\begin{align}
\label{eq:formulas}
(\#n)_i^k=
\begin{cases}
 A1(n,k,i) = \lambda_{2n+k-i-1}-\lambda_{2n+k-i},\; &\mathrm{if} \; (n,k,i) \in \mathrm{Dom}(A1) \mathrm{,}\\
A2(n,k,i)=\lambda_{n+k-1},\; &\mathrm{if} \; (n,k,i) \in \mathrm{Dom}(A2) \mathrm{,} \\
 B1(n,k,i)=\lambda_{n+N-i} - \lambda_{n+N-i+1},\; &\mathrm{if} \; (n,k,i) \in \mathrm{Dom}(B1) \mathrm{,} \\
 B2(n,k,i)=\lambda_{N-1},\; &\mathrm{if} \; (n,k,i) \in \mathrm{Dom}(B2) \mathrm{,} \\
 0 ,\; &\mathrm{if} \; (n,k,i) \not\in S \mathrm{.}
\end{cases}
\end{align}

In Section \ref{sec:app_b} we use the following properties of (\ref{eq:formulas})
\begin{align}
&A1(n,k,i)=A1(n,k \pm 1,i \pm 1)=A1(n \pm 1,k \mp 2,i)=A1(n \pm 1,k, i \pm 2) \mathrm{,}\\
&A2(n,k,i)=A2(n \pm 1,k \mp 1,i)=A2(n,k,i \pm 1) \mathrm{,}\\
&B1(n,k,i)=B1(n \pm 1,k, i \pm 1)=B1(n,k\pm 1,i) \mathrm{,}\\
&B2(n,k,i)=\mathrm{const}(N) \mathrm{,}
\end{align}
whenever right-hand side is defined. 
For example, in case $N=6$ we get the following figures depicting the partition of $S$ into domains.

\begin{figure}[H]
\centering
\includegraphics[width=.8\linewidth]{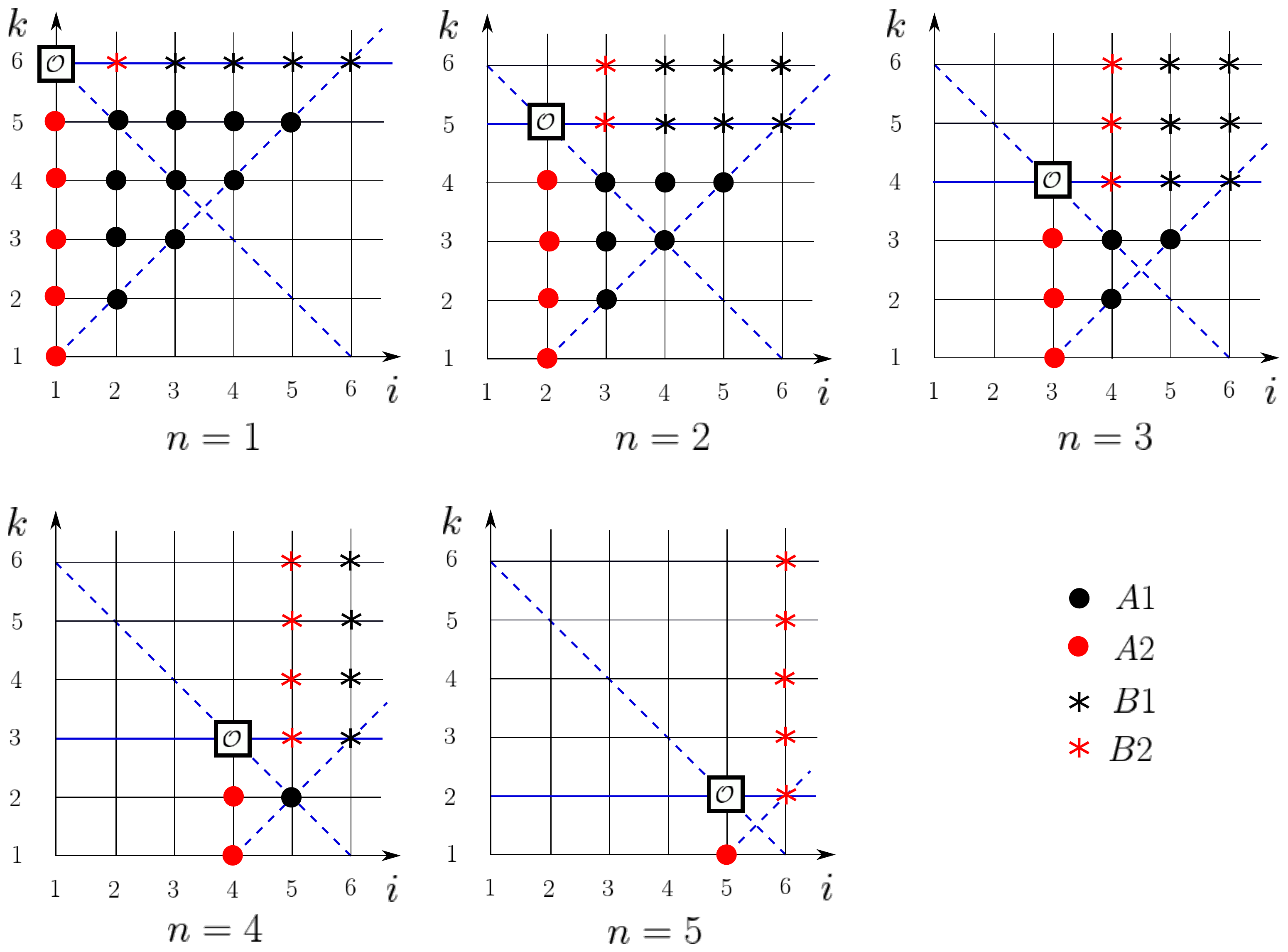}
  \caption{Decomposition of $S$ into domains for $N=6$ in every $n$-layer. The square with symbol $\mathcal{O}$ denotes the moving distinguished point (one may think of a slider) which will be used later to manage calculations.}
\label{fig:nlayers}
\end{figure}

\begin{figure}[H]
\centering
\includegraphics[width=0.9\linewidth]{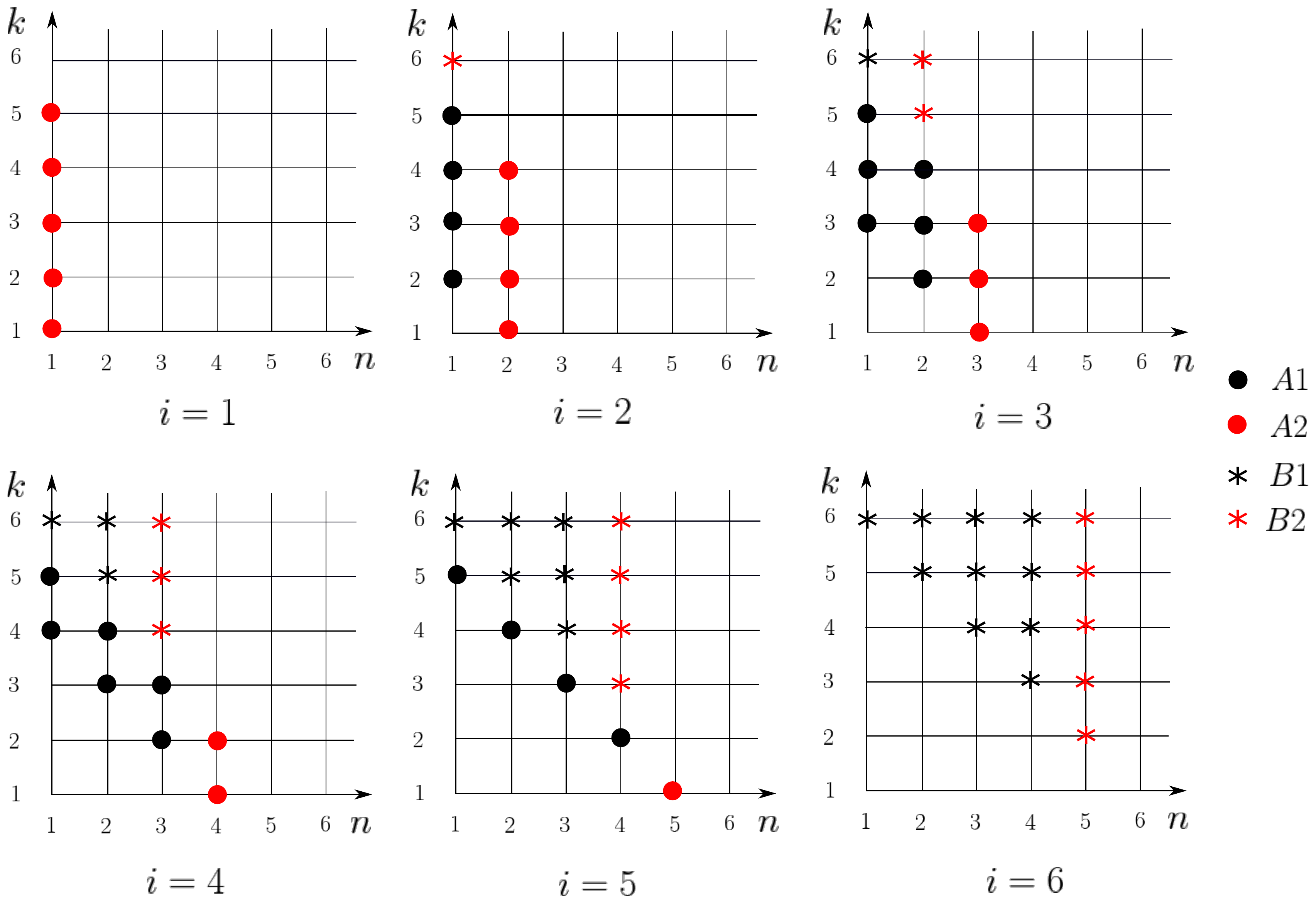}
  \caption{Decomposition of $S$ into domains for $N=6$ in every $i$-layer.}
\label{fig:ilayers}
\end{figure}

\begin{figure}[H]
\centering
\includegraphics[width=0.9\linewidth]{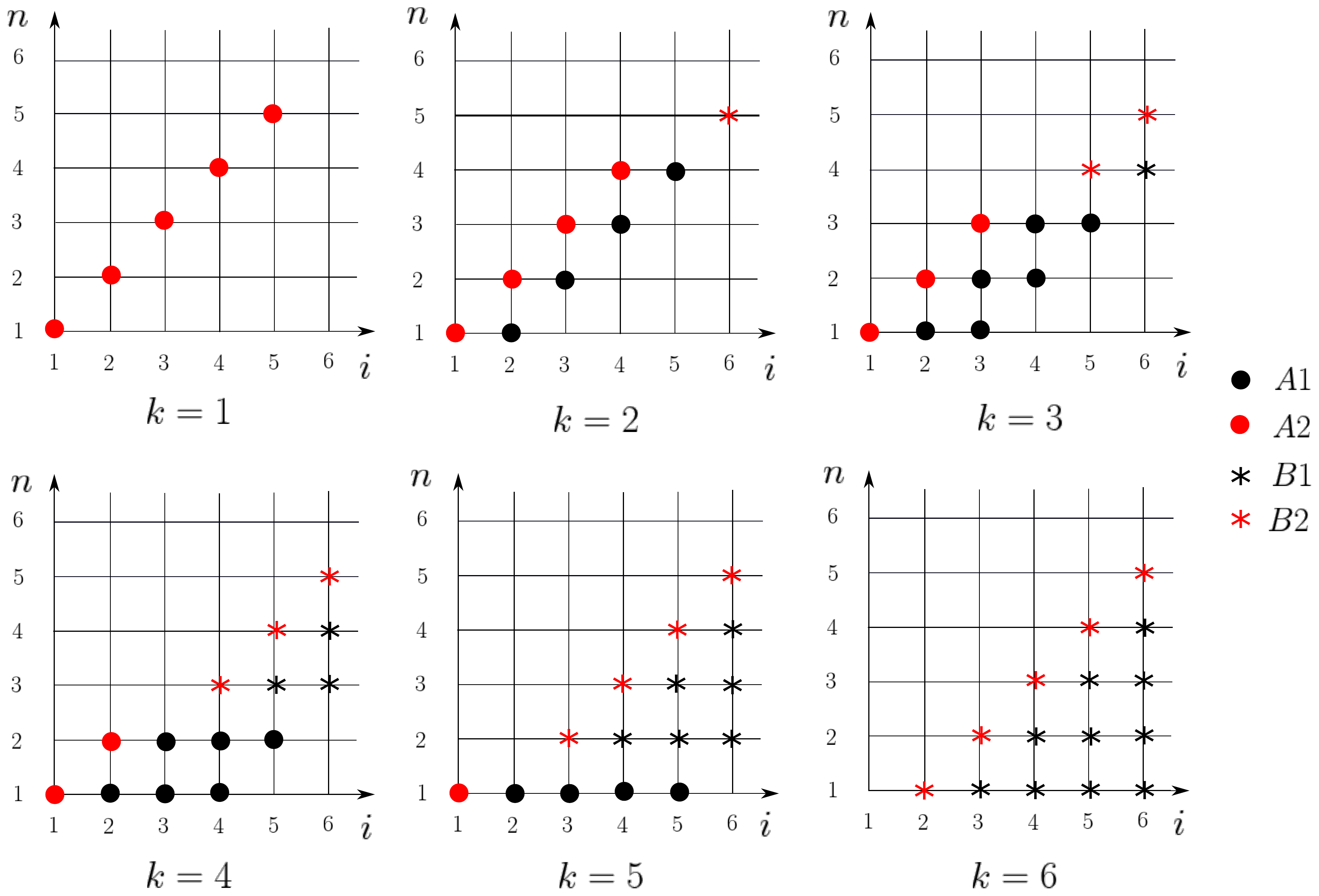}
  \caption{Decomposition of $S$ into domains for $N=6$ in every $k$-layer.}
\label{fig:klayers} 
\end{figure}
\noindent
\textbf{Preliminary calculations of $\Delta$}\\
From now on we will stick to decomposition of $S$ into $n$-layers. To make calculations more traceable, we introduce the moving distinguished point (slider) denoted by $\mathcal{O}$. Its $ik$-coordinates (for fixed $n$-layer) are $\mathcal{O}=(n,\,N-n+1)$. Fix a column of interest $j$. We introduce the parameter $\ell=j-n$, i.e. the distance from $\mathcal{O}$. We have 5 cases.
\begin{align}
\label{eq:casescomp}
\circled{0},\; &\mathrm{if} \; \ell < 0 \nonumber  \mathrm{,} \\
\circled{1},\; &\mathrm{if} \; \ell = 0 \nonumber \mathrm{,}\\
\circled{2},\; &\mathrm{if} \; \ell = 1 \mathrm{,}\\
\circled{3},\; &\mathrm{if} \; N-n > \ell > 1 \nonumber \mathrm{,} \\
\circled{4},\; &\mathrm{if} \;\ell = N - n \nonumber \mathrm{.} \\\nonumber 
\end{align}
Let us define the symbol
\begin{equation}
\sum(s,j) \coloneqq \sum_{k=1}^{s-1} (\#n)_j^k \mathrm{,}
\end{equation}
for the sum of elements from $j$th column which will contribute to $\Delta^s_j$ (this corresponds to the non-faded points in $j$th column on Fig. \ref{fig:decompS} (b)). The value of $n$ should be clear from the context.
Notice that
\begin{equation}
\Delta^k_i = \sum_{n=1}^{i+1} \left( \sum(s,i) - \sum(s,i+1) \right) \mathrm{.}
\end{equation}
We also introduce the symbol

\begin{equation}
\circled{X}_i^n \coloneqq X(n,k,i) \mathrm{,}
\end{equation}
where $X \in \{\mathrm{A1},\mathrm{A2},\mathrm{B1},\mathrm{B2}\}$ is one of (\ref{eq:formulas})
and $k$ is the running parameter (the symbols will be used under summation over $k$).

In order to avoid splitting equations into cases, with every condition $\mathcal{C}$ on $n$, $k$ and $i$ we associate the subset $S(\mathcal{C}) \subset \Lambda$ of all triples $(n,k,i)$ that satisfy $\mathcal{C}$. We define the characteristic function $\chi(\mathcal{C}): \Lambda \rightarrow \{0,1\}$ as the characteristic function of a set $S(\mathcal{C})$, i.e.
\begin{equation}
\chi_\mathcal{C}(n,k,i)=
\begin{cases} 1,\; \mathrm{if} \; (n,k,i) \in S(\mathcal{C})\mathrm{,}\\
0, \; \mathrm{otherwise} \mathrm{.}
\end{cases} 
\end{equation}
Hence, considering (\ref{eq:casescomp}), we can write
\begin{align}
\label{eq:casescompchiF}
&\circled{0}\;\sum(s,j)=0 \mathrm{,}\\
&\circled{1}\;\sum(s,j)=\sum_{k=1}^{M(n)} \circled{A2}_{i=j}^n \mathrm{,}\\
&\circled{2}\;\sum(s,j)= \sum_{k=\ell+1}^{M(n)} \circled{A1}_{i=j}^n \chi_{s \geq j-n+2}\;\chi_{j < N} +
\sum_{k=N-n+1}^{s-1} \circled{B2}_{i=j}^n \chi_{s \geq N-n+2} \mathrm{,}\\
&\circled{3}\;\sum(s,j)= \sum_{k=\ell+1}^{M(n)} \circled{A1}_{i=j}^n \chi_{s \geq j-n+2}\;\chi_{j < N}+
\sum_{k=N-n+1}^{s-1} \circled{B1}_{i=j}^n \chi_{s \geq N-n+2} \mathrm{,}\\
\label{eq:casescompchiL}
&\circled{4}\;\sum(s,j)= \sum_{k=N-n+1}^{s-1} \circled{B1}_{i=j}^n \chi_{s \geq N-n+2} +
\sum_{k=N-n+1}^{s-1} \circled{B2}_{i=j}^n \chi_{n=N-1} \; \chi_{s \geq 3} \mathrm{.}
\end{align}
By $\circled{a}^n_i$, where $a \in \{1,2,3,4\}$, we denote the formula for $\sum (s,j)$ in case $\circled{a}$ given by (\ref{eq:casescompchiF}) up to (\ref{eq:casescompchiL}). The value of $s$ should be clear from the context.

We say that the $n$-layer is \textit{non-degenerate} if it realises all 5 cases (\ref{eq:casescomp}) in a non-degenerate way. Otherwise the layer is called \textit{degenerate}. Non-degeneracy is equivalent with $n<N-2$ and $N \geq 4$. From now on we assume that $N \geq 5$ to avoid some degenerate cases; For the proof in cases $N=2,3,4$ we refer to direct constructions given in Examples \ref{ex:tableau2}, \ref{ex:tableau3} and \ref{ex:tableau4}, which can be easily checked to be valid.

\begin{figure}[H]
    \centering
    \subfloat[]{{\includegraphics[width=0.47\textwidth]{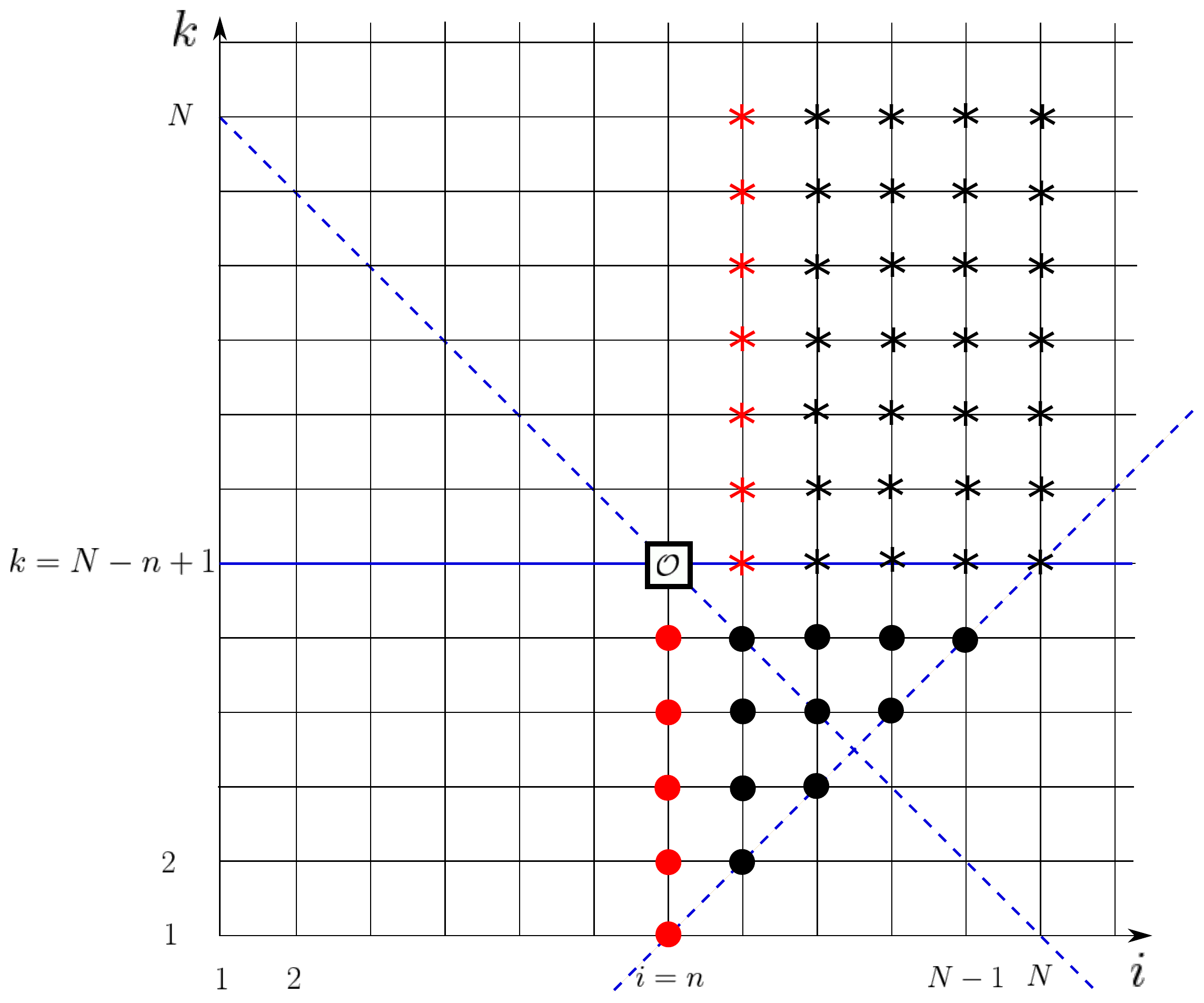} }}%
    \subfloat[]{{\includegraphics[width=0.53\textwidth]{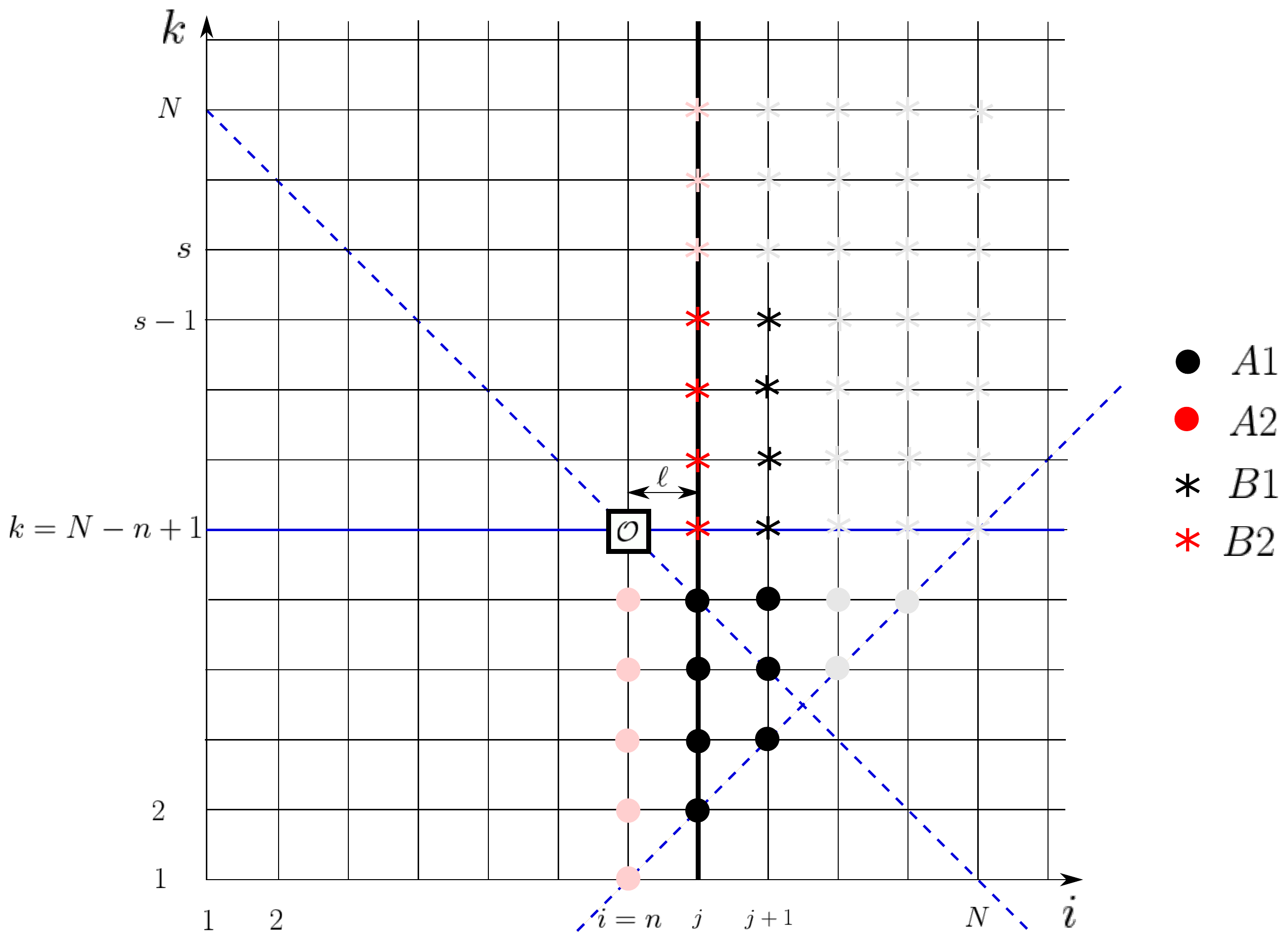} }}%
    \caption{(a) - decomposition of $S$ into domains for generic $n$; (b) - the contribution (non-faded points) to $\Delta_j^s$ from the $n$th layer in case $\ell=1$. Positive contribution is from column $i=j$ (thick) and negative from $i=j+1$.}
    \label{fig:decompS}%
\end{figure}

\begin{figure}[H]
    \centering
    \subfloat[]{{\includegraphics[width=0.43\textwidth]{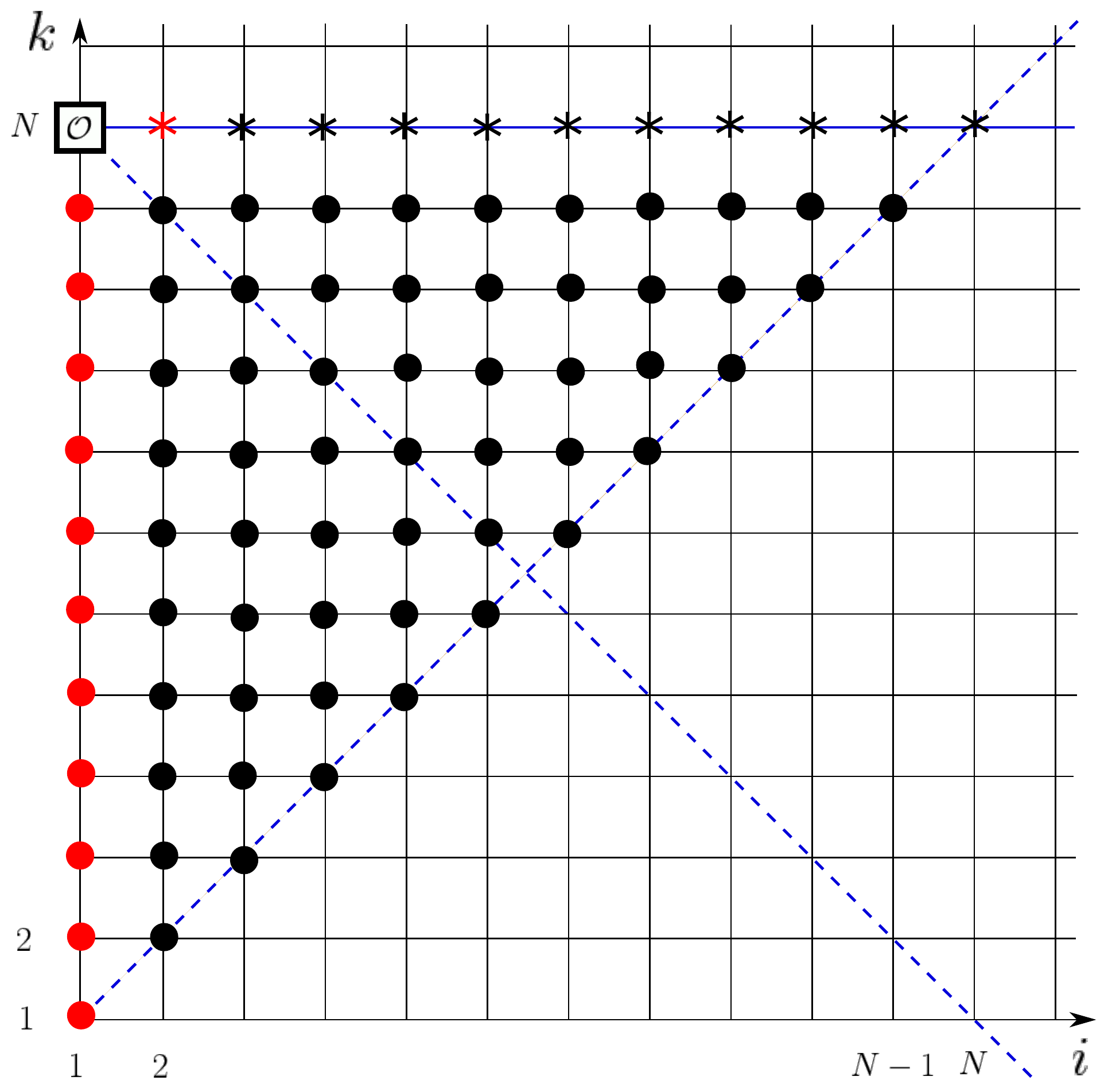} }}%
    \subfloat[]{{\includegraphics[width=0.57\textwidth]{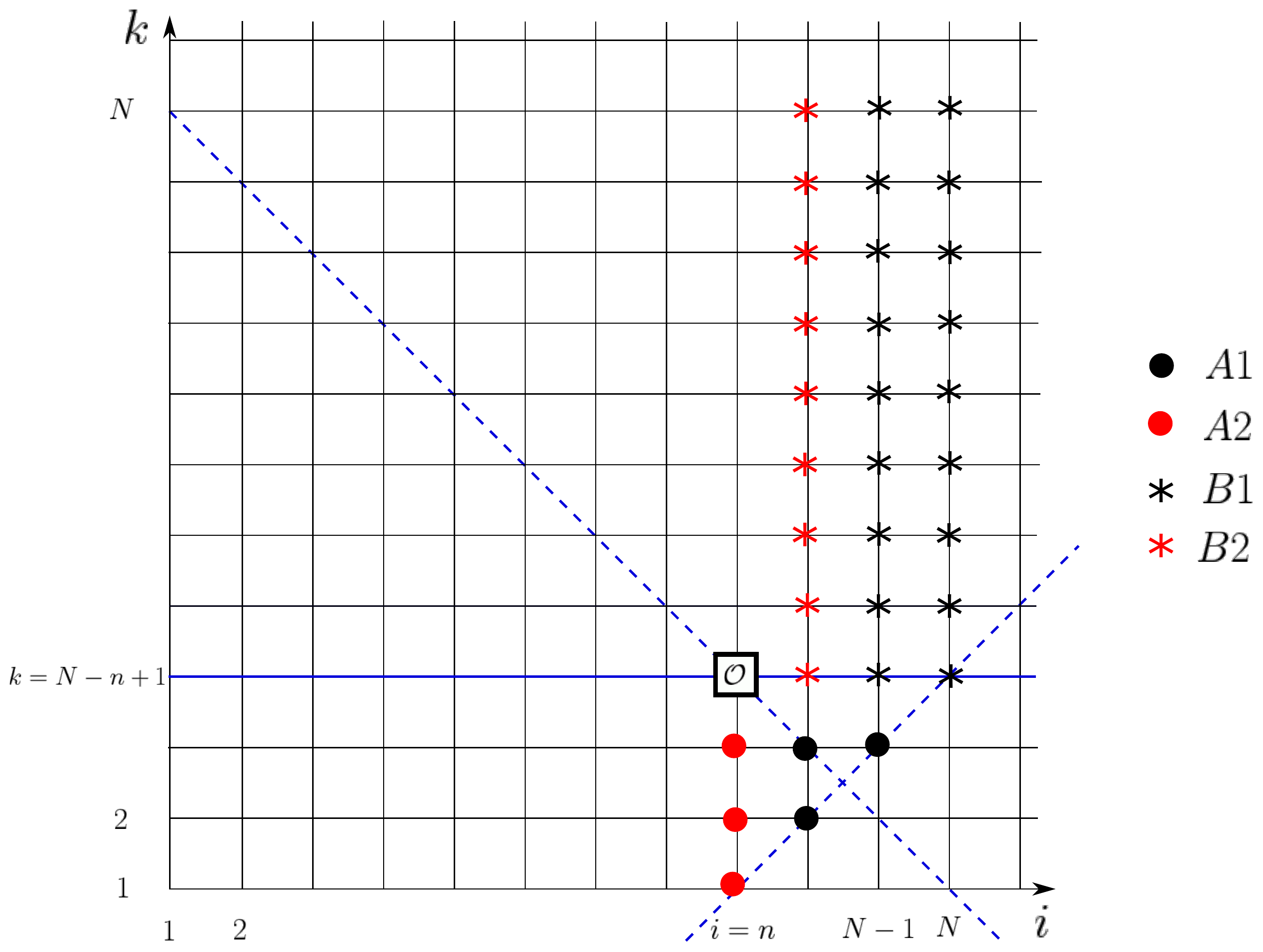} }}%
    \caption{Decomposition of $S$ into domains: (a) - for $n=1$ (smallest $n$) ; (b) - for $n=N-3$ (last non-degenerate case).}
    \label{fig:decompnr}%
\end{figure}

\begin{figure}[H]
    \centering
    \subfloat[]{{\includegraphics[width=0.48\textwidth]{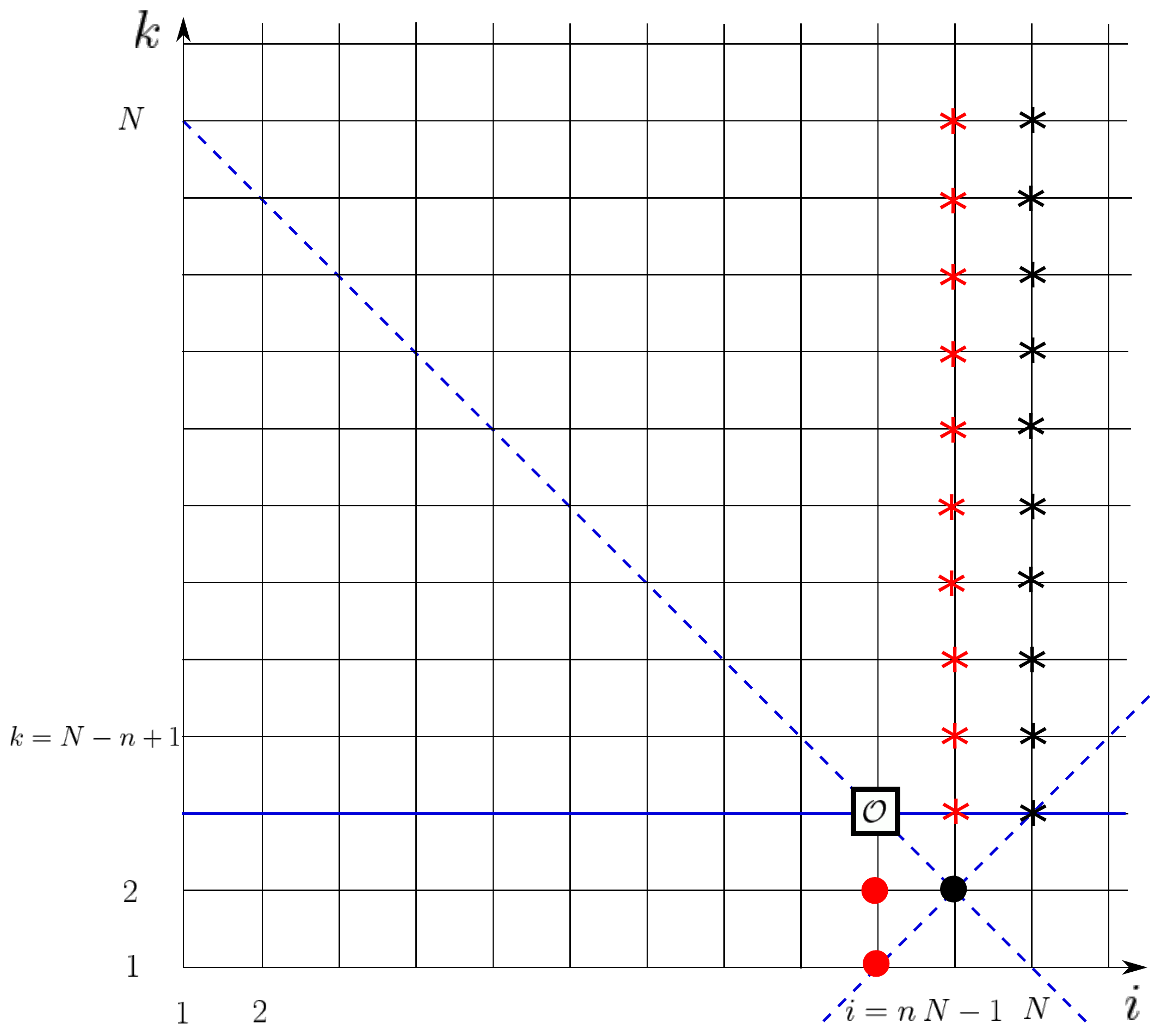} }}%
    \subfloat[]{{\includegraphics[width=0.52\textwidth]{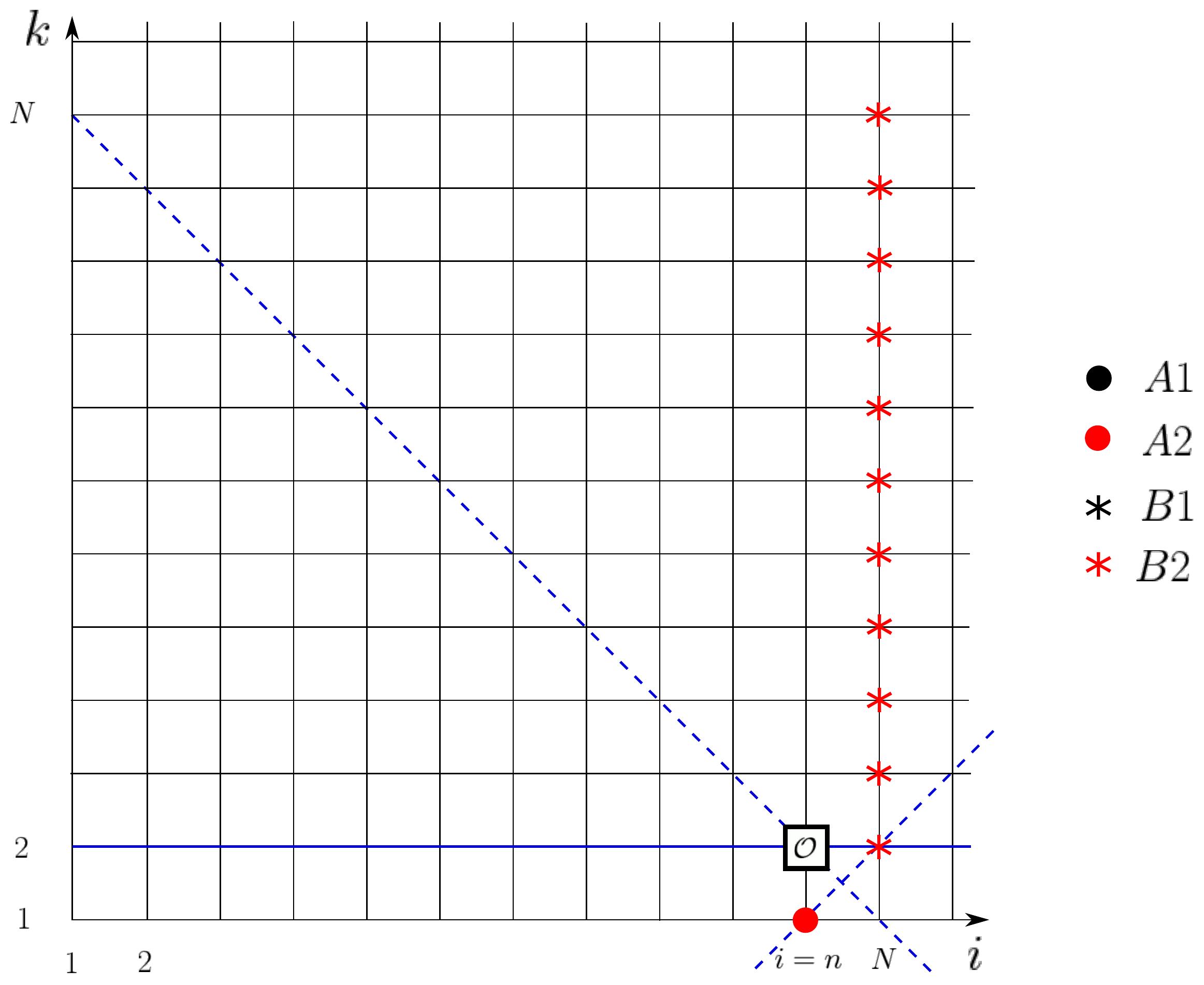} }}%
    \caption{Decomposition of $S$ into domains: (a) - for $n=N-2$ (degenerate case) ; (b) - for $n=N-1$ (largest $n$; degenerate case).}
    \label{fig:fig:decompdeg}%
\end{figure}
\noindent
\textbf{Preliminary calculations of $\delta(m)$}\\
Using similar reasoning as in case of $\Delta$ (see (\ref{eq:casescompchiF})-(\ref{eq:casescompchiL}) and Fig. \ref{fig:decompS}, we have two cases (since $k$ is fixed here, it is instructive to look also at Figure \ref{fig:klayers}).

\begin{align}
\label{eq:casecomp2}
\circled{I},\; &\mathrm{if} \; m < N-k+1 \mathrm{,} \nonumber\\
\circled{II},\; &\mathrm{if} \; m \geq N-k+1 \mathrm{.}\\\nonumber
\end{align}
Considering (\ref{eq:casecomp2}), we obtain

\begin{align}
&\circled{I}\;\sum(s,j,m)=\sum\limits_{n=1}^{m} \left( \circled{A1} \chi_{s+n-1 \geq j \geq n} + \circled{A2}\chi_{n=j} \right) \mathrm{,}\\
&\circled{II}\;\sum(s,j,m)=\sum\limits_{n=1}^{N-s} \left(\circled{A1} \chi_{s+n-1 \geq j \geq n} + \circled{A2}\chi_{n=j} \right) \\
+& \sum\limits_{n=N-s+1}^m \left(\circled{B1}\chi_{j > n+1} + \circled{B2}\chi_{j=n+1}\right) \mathrm{,} \nonumber
\end{align}
where
\begin{equation}
\sum(s,j,m) \coloneqq \sum_{n=1}^{m} (\#n)_j^s\mathrm{,}
\end{equation}
so that
\begin{equation}
\delta_j^s (m)= \sum(s,j+1,m) - \sum(s,j,m-1) \mathrm{.}
\end{equation}

Below we evaluate some telescopic sums which appear frequently in Section \ref{sec:app_b}.
\begin{align}
&\sum_{n=p}^q (\lambda_{n+b} - \lambda_{n+b+1})=\lambda_{p+b}-\lambda_{q+b+1}  \mathrm{,}\\
&\sum_{n=p}^q (n-p+1)(\lambda_{n+b}-\lambda_{n+b+1})=\sum_{n=p}^q \lambda_{n+b} - (q+p-1)\lambda_{q+b+1}  \mathrm{.}
\end{align}

\section{Calculations for Conditions 1 and 2}
\label{sec:app_b}
\subsection[xxx]{Calculations of $\Delta$ - Condition 1}
It is convenient to consider the following cases:
\begin{itemize}
\item Case I: $j=1$,
\begin{itemize}
\item Subcase I.a: $s=2$,
\item Subcase I.b: $N \geq s>2$,
\end{itemize}
\item Case II: $j=2$,
\begin{itemize}
\item Subcase II.a: $s=2$,
\item Subcase II.b :$s=3$,
\item Subcase II.c :$N-1>s>3$,
\item Subcase II.d :$s=N-1$,
\item Subcase II.e :$s=N$,
\end{itemize}
\item Case III: $N-1>j>2$,
\begin{itemize}
\item Subcase III.a: $s=2$,
\item Subcase III.b :$s=3$,
\item Subcase III.c: $N-j+2>s>3$,
\begin{itemize}
\item Subsubcase III.c.1: $j-s+2 \geq 1$,
\item Subsubcase III.c.2: $j-s+2 < 1$,
\end{itemize}
\item Subcase III.d: $s=N-j+2$,
\begin{itemize}
\item Subsubcase III.d.1: $j-s+2 \geq 1$,
\item Subsubcase III.d.2: $j-s+2 < 1$,
\end{itemize}
\item Subcase III.e: $s=N-j+3$,
\begin{itemize}
\item Subsubcase III.e.1: $j-s+2 \geq 1$,
\item Subsubcase III.e.2: $j-s+2 < 1$,
\end{itemize}
\item Subcase III.f: $ N \geq s > N-j+3$,
\begin{itemize}
\item Subsubcase III.f.1: $j-s+2 \geq 1$,
\item Subsubcase III.f.2: $j-s+2 < 1$,
\end{itemize}
\end{itemize}
\item Case IV: $j=N-1$,
\begin{itemize}
\item Subcase IV.a: $s=2$,
\item Subcase IV.b: $s=3$,
\item Subcase IV.c: $s=4$,
\item Subcase IV.d: $N \geq s>4$.
\end{itemize}
\end{itemize}
We define the functions
\begin{align}
\label{eq:sumfunF}
&M(n) \coloneqq \mathrm{min}\{N-n,s-1\}  \mathrm{,}\\
&M(n)_{-} \coloneqq \mathrm{min}\{N-n-1,s-1\}  \mathrm{,}\\
&\tilde{M}_{+} \coloneqq \mathrm{min}\{N-j+1,s-1\}  \mathrm{,}\\
&\tilde{M} \coloneqq \mathrm{min}\{N-j,s-1\}  \mathrm{,}\\
&\tilde{M}_{-}- \coloneqq \mathrm{min}\{N-j-1,s-1\}  \mathrm{,}\\
&mx_- \coloneqq \mathrm{max}\{j-s+1,1\}  \mathrm{,}\\
&mx \coloneqq \mathrm{max}\{j-s+2,1\}  \mathrm{,}\\
\label{eq:sumfunL}
&mx_+ \coloneqq \mathrm{max}\{j-s+3,1\}\mathrm{.}\\\nonumber  
\end{align}
We did not include the arguments $N,j$ and $s$ in the definitions \ref{eq:sumfunF} up to \ref{eq:sumfunL} since they will always be constant. Using already the fact that $j<N$ we can simplify $\chi_{j<N}=1$. \newline

\noindent \textbf{Case I ($j=1$)}
\begin{flalign}
& \Delta_1^s=\circled{1}^{n=1}_{i=1} + \cancel{\circled{0}^{n=2}_{i=1}} - \left( \circled{2}^{n=1}_{i=2} + \circled{1}^{n=2}_{i=2} \right)\\  
& = \sum_{k=1}^{M(1)}\circled{A2}_{i=1}^{n=1} -  \cancel{\sum_{k=N-n+1}^{s-1}\circled{B2}_{i=j+1}^{n=1} \chi_{s \geq N+1}} - \sum_{k=2}^{M(1)} \circled{A1}_{i=j+1}^{n=1} - \sum_{k=1}^{M(2)} \circled{A2}_{i=j+1}^{n=2}\nonumber\\ 
& =\sum_{k=1}^{M(1)} \lambda_k - \sum_{k=2}^{M(1)} \lambda_{k-1}+\sum_{k=2}^{M(1)} \lambda_{k}
 - \sum_{k=1}^{M(2)} \lambda_{k+1}\nonumber
\end{flalign}
\textbf{Subcase I.a ($s=2$)}
\begin{align}
\Delta_2^1=\lambda_1 - \lambda_2 = \lambda_{1,2}
\end{align}
\textbf{Subcase I.b ($N \geq s > 2$)}\\
\begin{align}
\Delta_s^1=\lambda_{s-1} - \lambda_s =\lambda_{s-1,s}
\end{align}
\textbf{Case II ($j=2$)}
\begin{flalign}
\Delta_2^s=&\circled{2}^{n=1}_{i=2} + \circled{1}^{n=2}_{i=2} +\cancel{\circled{0}^{n=3}_{i=2}} - \left(\circled{3}^{n=1}_{i=3}+ \circled{2}^{n=2}_{i=3} + \circled{1}^{n=3}_{i=3} \right)\\
=&\sum_{k=2}^{M(1)} \circled{A1}_{i=2}^{n=1} \chi_{s>2} + \sum_{k=N-n+1}^{s-1} \cancel{\circled{B2}_{i=2}^{n=1} \chi_{s > N}} + \sum_{k=1}^{M(2)} \circled{A2}_{i=2}^{n=2} \nonumber\\
 - &\sum_{k=3}^{M(1)}\circled{A1}_{i=3}^{n=1} \chi_{s > 3} - \sum_{k=N-n+1}^{s-1} \cancel{\circled{B2}_{i=3}^{n=1} \chi_{s > N}} - \sum_{k=2}^{M(2)} \circled{A1}_{i=3}^{n=2} \chi_{s > 2} \nonumber\\
 - & \sum_{k=N-1}^{s-1} \circled{B2}_{i=3}^{n=2} \chi_{ s \geq N} - \sum_{k=1}^{M(3)} \circled{A2}_{i=3}^{n=3}= \sum_{k=2}^{M(1)} (\lambda_{k-1}-\lambda_{k}) \chi_{s>2} + \sum_{k=1}^{M(2)} \lambda_{k+1} \nonumber\\
 -&\sum_{k=3}^{M(1)} (\lambda_{k-1} - \lambda_{k-1}) \chi_{s > 3} - \sum_{k=2}^{M(2)} (\lambda_{k}-\lambda_{k+1}) \chi_{s > 2} - (s-N+1)\lambda_{N-1} \chi_{ s \geq N} \\\nonumber
 - & \sum_{k=1}^{M(3)} \lambda_{k+2}\nonumber
\end{flalign}
\textbf{Subcase II.a ($s=2$)}
\begin{equation}
\Delta_2^2= \lambda_2 - \lambda_3 =\lambda_{2,3}
\end{equation}
\textbf{Subcase II.b ($s=3$)}
\begin{equation}
\Delta_3^2= \lambda_1 -\lambda_2 +\lambda_3 - \lambda_4=\lambda_{1,2} + \lambda_{3,4}
\end{equation}
\textbf{Subcase II.c ($N-1>s>3$)}
\begin{equation}
\Delta_s^2=\lambda_{s-2}+\lambda_s - \lambda_{s-1} -\lambda_{s+1}=\lambda_{s-1,s} + \lambda_{s,s+1}
\end{equation}
\textbf{Subcase II.d ($s=N-1$)}
\begin{equation}
\Delta_{N-1}^2=\lambda_{s-2}+\lambda_{s} - \lambda_{s-1} =\lambda_{s-2,s-1} +\lambda_s
\end{equation}
\textbf{Subcase II.e ($s=N$)}
\begin{equation}
\Delta_N^2=\lambda_{N-2}-\lambda_{N-1}=\lambda_{N-2,N-1}
\end{equation}
\textbf{Case III ($N-1>j>2$); generic}\\
Let us define
\begin{align}
\circled{R1} \coloneqq & \left(\sum_{n=j-s+2}^{j-2} \sum_{k=j-n+1}^{M(n)} \circled{A1}^n_{i=j} - \sum_{n=j-s+3}^{j-1}\sum_{k=j-n+1}^{M(n)-1} \circled{A1}^n_{i=j} \right)\chi_{s>3}\\
=& \left(\sum_{n=j-s+2}^{j-2}(\lambda_n - \lambda_{2n-j+M}) - \sum_{n=j-s+3}^{j-1}(\lambda_n - \lambda_{2n-j+M-1}) \right) \chi_{s>3}\nonumber\\
=& \left(\lambda_{j-s+2} - \lambda_{j-1}+\sum_{n=j-s+2}^{j-2}(\lambda_{2n-j+M(n)_{-}+1}-\lambda_{2n-j+M(n)}) \right) \chi_{s>3}\nonumber\\
\circled{R1}^{'}=&\sum_{n=1}^{j-2} \left( \sum_{k=j-n+1}^{M(n)} \circled{A1}^n_{i=j} - \sum_{k=j-n+1}^{M(n)-1} \circled{A1}^n_{i=j} \right)\chi_{s>3}\\
=& \left(\sum_{n=1}^{j-2} A1(n,M(n),j) - \sum_{k=j-n+2}^{\tilde{M}_+} A1(n,k,j+1) \right) \chi_{s>3}\nonumber\\
=& \left(\sum_{n=1}^{j-2}(\lambda_{2n-j+M(n)-1}-\lambda_{2n-j+M(n)}) - \lambda_{j-1} + \lambda_{j+\tilde{M}_+-3} \right)\chi_{s>3} \nonumber\\
\circled{R2} \coloneqq &\sum_{k=N-n+1}^{s-1} \circled{B1}_{i=j}^n \chi_{s \geq N-n+2} -\sum_{k=N-n+1}^{s-1} \circled{B1}_{i=j+1}^n \chi_{s \geq N-n+2} \\
=&\sum_{n=N-s+2}^{j-2}(s-N+n-1)\left(\lambda_{n+N-j}-\lambda_{n+N-j+1} \right)\chi_{s \geq N-j+4} \nonumber\\
-& \sum_{n=N-s+1}^{j-2} \left(\lambda_{n+N-j}-\lambda_{n+N-j+1}\right) \chi_{s \geq N-j+3}\nonumber\\
\circled{R3} \coloneqq &\left(\sum_{k=1}^{\tilde{M}_+-1} \circled{A1}_{i=j+1}^{n=j}-\sum_{k=2}^{\tilde{M}} \circled{A1}_{i=j+1}^{n=j}\right) \chi_{s > 2}=\\
=&\left[A1(j,1,j+1)  + A1(j,s-1,j+1) \chi_{N-j+1 \geq s}\right]\chi_{s > 2}\nonumber\\
= &\left[ \lambda_{i-1}-\lambda_i - (\lambda_{i+s-3}-\lambda_{i+s-2})\chi_{N-j+1 \geq s} \right] \chi_{s > 2}\nonumber\\
\circled{R4} \coloneqq &\sum_{k=N-j+2}^{s-1} \circled{B2}_{i=j}^{n=j-1} \chi_{s \geq  N-j+3}-\sum_{k=N-j+1}^{s-1} \circled{B2}_{i=j+1}^{n=j} \chi_{s \geq  N-j+2}\\
=&\left[(s-N+j-2)\chi_{s \geq N-j+3}-(s-N+j-1)\chi_{s \geq N-j+2}\right]\lambda_{N-1}\nonumber\\
\circled{R5} \coloneqq &\sum_{k=1}^{\tilde{M}} \circled{A2}_{i=j}^{n=j} - \sum_{k=1}^{\tilde{M}_{-}} \circled{A2}_{i=j+1}^{n=j+1}\\
=&\lambda_j - (\lambda_{j+s-1}) \chi_{s \leq N-j}\mathrm{.}\nonumber
\end{align}

Then
\begin{align}
\Delta_j^s= &\sum_{n=1}^{j-2} \circled{3}^n_{i=j} + \circled{2}^{n=j-1}_{i=j} + \circled{1}^{n=j}_{i=j} + \cancel{\circled{0}^{n=j+1}_{i=j}}\\
& - \left( \sum_{n=1}^{j-1} \circled{3}^n_{i=j+1} + \circled{2}^{n=j}_{i=j+1}\circled{1}^{n=j+1}_{i=j+1} \right)\nonumber\\
=&\sum_{n=1}^{j-2} \left( \sum_{k=j-n+1}^{M(n)}  \circled{A1}_{i=j}^n \chi_{s \geq j-n+2} + \sum_{k=N-n+1}^{s-1} \circled{B1}_{i=j}^n \chi_{s \geq N-n+2} \right) \nonumber\\
+ &\sum_{k=2}^{\tilde{M}} \circled{A1}_{i=j}^{n=j-1} \chi_{s > 2} + \sum_{k=N-j+2}^{s-1} \circled{B2}_{i=j}^{n=j-1} \chi_{s \geq  N-j+3}+ \sum_{k=1}^{\tilde{M}} \circled{A2}_{i=j}^{n=j} \nonumber\\
-&\sum_{n=1}^{j-1} \left( \sum_{k=j-n+2}^{M(n)}  \circled{A1}_{i=j+1}^n \chi_{s \geq j-n+3}- \sum_{k=N-n+1}^{s-1} \circled{B1}_{i=j+1}^n \chi_{s \geq N-n+2} \right) \nonumber\\
-& \sum_{k=2}^{\tilde{M}} \circled{A1}_{i=j+1}^{n=j} \chi_{s > 2} -\sum_{k=N-j+1}^{s-1} \circled{B2}_{i=j+1}^{n=j} \chi_{s \geq  N-j+2} - \sum_{k=1}^{\tilde{M}_{-}} \circled{A2}_{i=j+1}^{n=j+1}
\nonumber\\
&= \circled{R1}\chi_{j-s+2 \geq 1} +\circled{R1}^{'}\chi_{j-s+2 < 1} + \circled{R2}+ \circled{R3} + \circled{R4} + \circled{R5}  \nonumber\end{align}

\begin{align}
&\sum_{n=1}^{j-2}  \sum_{k=j-n+1}^{M(n)}  \circled{A1}_{i=j}^n \chi_{s \geq j-n+2}-\sum_{n=1}^{j-1} \sum_{k=j-n+2}^{M(n)}  \circled{A1}_{i=j+1}^n \chi_{s \geq j-n+3}\\
=&\left(\sum_{n=mx}^{j-2}  \sum_{k=j-n+1}^{M(n)} \circled{A1}_{i=j}^n -\sum_{n=mx_{+}}^{j-1} \sum_{k=j-n+2}^{M(n)}  \circled{A1}_{i=j+1}^n \right) \chi_{s >3}\nonumber\\
=& \circled{R1}\chi_{j-s+2 \geq 1} +\circled{R1}^{'}\chi_{j-s+2 < 1}\mathrm{.}\nonumber
\end{align}
\textbf{Subcase III.a ($s=2$)}\\
\begin{align}
\Delta^2_{j}=\circled{R5}=\lambda_{j,j+1}
\end{align}
\textbf{Subcase III.b ($s=3$)}\\
\begin{align}
\Delta^3_j=\circled{R3}+\circled{R5}=\lambda_{j-1} -\lambda_{j}-\cancel{\lambda_{j}}+\lambda_{j+1}+\cancel{\lambda_j} -\lambda_{j+2}=\lambda_{j-1,j} + \lambda_{j+1,j+2}
\end{align}
\textbf{Subcase III.c: ($N-j+2>s>3$)}\\
\begin{align}
\Delta^s_j=&\circled{R1}\chi_{j-s+2 \geq 1} +\circled{R1}^{'}\chi_{j-s+2 < 1}+\circled{R3} + \circled{R5}\\
&\circled{R3} + \circled{R5}=\lambda_{j-1}-\cancel{\lambda_j}-\lambda_{j+s-3}-\lambda_{j+s-2}+\cancel{\lambda_j}-\lambda_{j+s-1}\chi_{s \neq N-j+1}\nonumber\\
=&\lambda_{j-1}-\lambda_{j+s-3}-\lambda_{j+s-2}-\lambda_{j+s-1}\chi_{s \neq N-j+1}\nonumber
\end{align}
\textbf{Subsubcase III.c.1: ($j-s+2 \geq 1$)}\\
\begin{align}
\Delta^s_j=&\circled{R1} +\circled{R3} + \circled{R5}=\lambda_{j-s+2} - \cancel{\lambda_{j-1}}
\sum_{n=j-s+2}^{j-2} (\lambda_{2n-j+M(n)_{-}+1}\\
-&\lambda_{2n-j+M(n)}) + \cancel{\lambda_{j-1}}-\lambda_{j+s-3}-\lambda_{j+s-2}-\lambda_{j+s-1}\chi_{s \neq N-j+1}\nonumber\\
=&\lambda_{j-s+2}-\lambda_{j-s+1}\chi_{s \neq N-j+1}
\end{align}
\textbf{Subsubcase III.c.2: ($j-s+2 < 1$)}\\
\begin{align}
\Delta^s_j=&\circled{R1}^{'}+\circled{R3} + \circled{R5}=\sum_{n=1}^{j-2}\left(\lambda_{2n-j+M(n)-1}-\lambda_{2n-j+M(n)} \right) \\
-& \cancel{\lambda_{j-1}} + \lambda_{j+\tilde{M}_+-3} + \cancel{\lambda_{j-1}}-\lambda_{j+s-3}+\lambda_{j+s-2}-\lambda_{j+s-1}\chi_{s \neq N-j+1}\nonumber\\
=&\sum_{n=1}^{j-2}(\lambda_{2n-j+M(n)-1}-\lambda_{2n-j+M(n)}) -\lambda_{j+\tilde{M}_+-3} -\lambda_{j+s-3}+\lambda_{j+s-2} \nonumber\\
-& \lambda_{j+s-1}\chi_{s \neq N-j+1}\nonumber
\end{align}
\textbf{Subcase III.d: ($s=N-j+2$)}\\
\begin{align}
\Delta^{N-j+2}_j=&\circled{R1}\chi_{j-s+2 \geq 1} +\circled{R1}^{'}\chi_{j-s+2 < 1}+\circled{R3}+\circled{R4} + \circled{R5}\\
&\circled{R3}+\circled{R4} + \circled{R5}=\lambda_{j-1}-\cancel{\lambda_j} -\lambda_{N-1}+\cancel{\lambda_{j}}=\lambda_{j-1}-\lambda_{N-1}\nonumber\nonumber
\end{align}
\textbf{Subsubcase III.d.1: ($j-s+2 \geq 1$)}\\
\begin{align}
\Delta^{N-j+2}_j=&\circled{R1}+\circled{R3} + \circled{R4} + \circled{R5}=\lambda_{j-s+2}-\cancel{\lambda_{j-1}} \\
+& \sum_{n=j-s+2}^{j-2} (\lambda_{2n-j+M(n)_{-}+1} - \lambda_{2n-j+M(n)}) + \cancel{\lambda_{j-1}} -\lambda_{N-1}=\lambda_{2j-N}\nonumber\\
+&\sum_{n=2j-N}^{j-2} (\lambda_{2(n-j)+N+2}-\lambda_{2(n-j)+N+1})-\lambda_{N-1}\nonumber
\end{align}
\textbf{Subsubcase III.d.2: ($j-s+2 < 1$)}\\
\begin{align}
\Delta^{N-j+2}_j=&\circled{R1}^{'}+\circled{R3} + \circled{R4} + \circled{R5}=\sum_{n=1}^{j-2}(\lambda_{2n-j+M(n)-1}-\lambda_{2n-j+M(n)})\\
-& \cancel{\lambda_{j-1}} + \lambda_{j+\tilde{M}_+-3} +\cancel{\lambda_{j-1}}-\lambda_{N-1}=\sum_{n=1}^{N-s}(\lambda_{2(n-j)+N}-\lambda_{2(n-j)+N+1})\nonumber\\ +&\lambda_{N-2,N-1}\nonumber
\end{align}
\textbf{Subcase III.e: ($s=N-j+3$)}\\
\begin{align}
\Delta^{N-j+3}_j=&\circled{R1}\chi_{j-s+2 \geq 1} +\circled{R1}^{'}\chi_{j-s+2 < 1}+\circled{R2} + \circled{R3}+\circled{R4} + \circled{R5}\\
&\circled{R2} + \circled{R3}+\circled{R4} + \circled{R5}=-\lambda_{N-2}+\cancel{\lambda_{N-1}}+\lambda_{j-1}-\cancel{\lambda_{N-1}}\nonumber\\
-&\lambda_{N-2}+\lambda_{j-1}\nonumber
\end{align}
\textbf{Subsubcase III.e.1: ($j-s+2 \geq 1$)}\\
\begin{align}
\Delta^{N-j+3}_j=&\circled{R1} +\circled{R2} + \circled{R3}+\circled{R4} + \circled{R5}=\lambda_{2j-N-1}-\cancel{\lambda_{j-1}}\\
+&\sum_{n=j-s+2}^{j-2}(\lambda_{2n-j+M(n)_{-}+1}-\lambda_{2n-j+M(n)})-\lambda_{N-2}+\cancel{\lambda_{j-1}}\nonumber\\
=&\lambda_{2j-N-1}+\sum_{n=j-s+2}^{j-2}(\lambda_{2n-j+M(n)_{-}+1}-\lambda_{2n-j+M(n)})-\lambda_{N-2}\nonumber
\end{align}
\textbf{Subsubcase III.e.2: ($j-s+2 < 1$)}\\
\begin{align}
\Delta^{N-j+3}_j=&\circled{R1}^{'} +\circled{R2} + \circled{R3}+\circled{R4} + \circled{R5}\\
=&\sum_{n=1}^{j-2}\lambda_{2n-j+M(n)-1}-\lambda_{2n-j+M(n)})-\cancel{\lambda_{j-1}}+\lambda_{j+\tilde{M}_+-3}-\lambda_{N-2}+\cancel{\lambda_{j-1}}\nonumber\\
=&\sum_{n=1}^{j-2}(\lambda_{2n-j+M(n)-1}-\lambda_{2n-j+M(n)})
\end{align}
\textbf{Subcase III.f: ($ N \geq s > N-j+3$)}\\
\begin{align}
\Delta^s_j=&\circled{R1}\chi_{j-s+2 \geq 1} +\circled{R1}^{'}\chi_{j-s+2 < 1}+\circled{R2} + \circled{R3}+\circled{R4} + \circled{R5}\\
=&\sum_{n=N-s+1}^{j-2}(\lambda_{n+N-j+1}-\lambda_{n+N-j})+\lambda_{j-1}
-\cancel{\lambda_{j}}-\lambda_{N-1}+\cancel{\lambda_{j}}\nonumber\\
=&-\lambda_{2N-s-j+1}+\cancel{\lambda_{N-1}}+\lambda_{j-1}-\cancel{\lambda_{N-1}}=-\lambda_{2N-s-j+1}+\lambda_{j-1}\nonumber
\end{align}
\textbf{Subsubcase III.f.1: ($j-s+2 \geq 1$)}\\
\begin{align}
\Delta^s_j=&\circled{R1}+\circled{R2} + \circled{R3}+\circled{R4} + \circled{R5}\\
=&\lambda_{j-s+2}-\cancel{\lambda_{j-1}}+\sum_{n=j-s+2}^{j-2}(\lambda_{2n-j+M(n)_{-}+1}-\lambda_{2n-j+M(n)})\nonumber\\
-&\lambda_{2N-s-j+1}+\cancel{\lambda_{j-1}}\nonumber\\
=&\lambda_{j-s+2}+\sum_{n=j-s+2}^{j-2}(\lambda_{2n-j+M(n)_{-}+1}-\lambda_{2n-j+M(n)})-\lambda_{2N-s-j+1}\nonumber
\end{align}
\textbf{Subsubcase III.f.2: ($j-s+2 < 1$)}\\
\begin{align}
\Delta^s_j=&\circled{R1}^{'}+\circled{R2} + \circled{R3}+\circled{R4} + \circled{R5}=\sum_{n=1}^{j-2}(\lambda_{2n-j+M(n)-1}-\lambda_{2n-j+M(n)})\\
-&\cancel{\lambda_{j-1}}+\lambda_{i+\tilde{M}_{+}-3}-\lambda_{2N-s-j+1}+\cancel{\lambda_{j-1}}=\sum_{n=1}^{j-2}(\lambda_{2n-j+M(n)-1}-\lambda_{2n-j+M(n)})\nonumber\\
+&\lambda_{j+\tilde{M}_{+}-3}-\lambda_{2N-s-j+1}\nonumber
\end{align}
\textbf{Case IV} ($j=N-1$)\\
\begin{align}
\Delta_{N-1}^s=&\sum_{n=1}^{N-3}\circled{3}^{n}_{i=N-1} + \circled{2}^{n=N-2}_{i=N-1} +\circled{1}^{n=N-1}_{i=N-1} + \cancel{\circled{0}^{n=N}_{i=N-1}} \\
-& \left( \sum_{n=1}^{N-1} \circled{4}^n_{i=N} + \cancel{\circled{4}^{n=N}_{i=N}} \right) \nonumber\\
=&\sum_{n=1}^{N-3} \left( \sum_{k=j-n+1}^{M(n)} \circled{A1}_{i=N-1}^n \chi_{s \geq N-n+1} + \sum_{k=N-n+1}^{s-1} \circled{B1}_{i=N-1}^n  \chi_{s \geq N-n+2} \right) \nonumber\\
+& \sum_{k=2}^{M(N-2)} \circled{A1}_{i=N-1}^{n=N-2} \chi_{s > 2} + \sum_{k=3}^{s-1} \circled{B2}_{i=N-1}^{n=N-2}  \chi_{s > 3} + \sum_{k=1}^{M(N-1)} \circled{A2}_{i=N-1}^{n=N-1} \nonumber\\
-& \sum_{n=1}^{N-1}\sum_{k=2}^{s-1} \circled{B2}_{i=N}^{n}  \chi_{n=N-1}\chi_{s>2}  - \sum_{n=1}^{N-1}\sum_{k=N-n+1}^{s-1} \circled{B1}_{i=N}^n  \chi_{n \neq N-1} \chi_{s \geq N-n+2}\nonumber\\
=&\sum_{n=1}^{N-3}  \sum_{k=j-n+1}^{M(n)} (\lambda_{2n+k-N}-\lambda_{2n+k-N+1}) \chi_{s \geq N-n+1} \nonumber\\
+&\sum_{n=1}^{N-3} \sum_{k=N-n+1}^{s-1} (\lambda_{n+1}-\lambda_{n+2})  \chi_{s \geq N-n+2} \nonumber\\
+& \sum_{k=2}^{M(N-2)} (\lambda_{N+k-4} - \lambda_{N+k-3}) \chi_{s > 2} + \sum_{k=3}^{s-1} \lambda_{N-1}  \chi_{s > 3} + \sum_{k=1}^{M(N-1)} \lambda_{N+k-2} \nonumber\\
-& \sum_{n=1}^{N-1}\sum_{k=2}^{s-1} \lambda_{N-1}  \chi_{n=N-1}\chi_{s>2}  - \sum_{n=1}^{N-1}\sum_{k=N-n+1}^{s-1}(\lambda_n - \lambda_{n+1})  \chi_{n \neq N-1} \chi_{s \geq N-n+2}\nonumber
\end{align}
\textbf{Subcase IV.a: ($s=2$)}\\
\begin{align}
\Delta_{N-1}^2=\sum_{k=1}^{M(N-1)} \lambda_{N+k-2}=\lambda_{N-1}
\end{align}
\textbf{Subcase IV.b: ($s=3$)}\\
\begin{align}
\Delta_{N-1}^3=&\sum_{k=2}^{M(N-2)} (\lambda_{N+k-4}-\lambda_{N+k-3})+\sum_{k=1}^{M(N-1)} \lambda_{N+k-2} - \sum_{n=1}^{N-1} \sum_{k=2}^{s-1} \lambda_{N-1}\\
=&\lambda_{N-2}- \cancel{\lambda_{N-1}}+\cancel{\lambda_{N-1}}-\lambda_{N-1}=\lambda_{N-2,N-1}\nonumber
\end{align}
\textbf{Subcase IV.c: ($s=4$)}\\
\begin{align}
\Delta_{N-1}^4=&\sum_{k=3}^{3} (\lambda_{2n+k-n-3}-\lambda_{2n+k-n-2})+\sum_{k=2}^{2}(\lambda_{N+k-4}-\lambda_{N+k-3}) \\
+& (s-3)\lambda_{N-1} +\lambda_{N-1} -\sum_{k=3}^3 (\lambda_{N-2}-\lambda_{N-1}) - 2\lambda_{N-1}\nonumber\\
=&\lambda_{N-3}-\lambda_{N-2} + \cancel{\lambda_{N-2}} - \cancel{\lambda_{N-1}}+\cancel{\lambda_{N-1}}
+ \cancel{\lambda_{N-1}}- \cancel{\lambda_{N-2}}+\cancel{\lambda_{N-1}} \nonumber\\
-& \cancel{2 \lambda(N-1)}=\lambda_{N-3,N-2}\nonumber
\end{align}
\textbf{Subcase IV.d: ($N \geq s>4$)}\\
\begin{align}
\Delta_j^s&=\sum_{n=N-s+1}^{N-3} (\lambda_n - \lambda_{n+1}) + \sum_{n=N-s+2}^{N-3} \sum_{k=N-n+1}^{s-1} (\lambda_{n+1}-\lambda_{n+2}) + \lambda_{N-2} \cancel{\lambda_{N-1}} \\
+&(s-3)\lambda_{N-1} + \cancel{\lambda_{N-1}} - \sum_{n=N-s+2}^{N-2} \sum_{k=N-n+1}^{s-1} (\lambda_n -\lambda_{n+1})- (s-2)\lambda_{N-1}\nonumber\\
=&\lambda_{N-s+1} - \cancel{\lambda_{N-2}} + \sum_{n=N-s+2}^{N-3} (s-N+n-1)(\lambda_{n+1}-\lambda_{n+2}) + \cancel{\lambda_{N-2}} - \lambda_{N-1} \nonumber\\
-& \sum_{n=N-s+2}^{N-2} (s-N+n-1)(\lambda_n-\lambda_{n+1})=\lambda_{N-s+1} -\lambda_{N-1} + \sum_{n=N-s+2}^{N-3} \lambda_{n+1} \nonumber\\
-& (s-6)\lambda_{N-1}-\sum_{n=N-s+2}^{N-2}\lambda_n + (s-5)\lambda_{N-1} = \lambda_{N-s+1} - \cancel{\lambda_{N-1}} + \cancel{\lambda_{N-1}} \nonumber\\
+& \sum_{n=N-s+3}^{N-2} \lambda_n - \sum_{n=N-s+2}^{N-2} \lambda_n\nonumber
=\lambda_{N-s+1}-\lambda_{N-s+2}=\lambda_{N-s+1,N-s+2}\nonumber
\end{align}
Clearly, $\Delta_j^s$ is non-negative in every case, so Condition 1 is satisfied.
\subsection[xxx]{Calculations of $\delta(m)$}

Let us move to the calculations of $\delta_j^s (m)$. 
It is convenient to introduce the following cases:
\begin{itemize}
\item Case I: $ N-s+1 > m \geq 1$,
\begin{itemize}
\item Subcase I.a: $j+1 > m \geq 1$,
\begin{itemize}
\item Subsubcase I.a.1($j \geq s$),
\item Subsubcase I.a.2($j < s$),
\end{itemize}
\item Subcase I.b :$m \geq j+1$,
\begin{itemize}
\item Subsubcase I.b.1($j \geq s$),
\item Subsubcase I.b.2($j < s$),
\end{itemize}
\end{itemize}
\item Case II: $m = N-s+1$,
\begin{itemize}
\item Subcase II.a :$j > m \geq 1$,
\begin{itemize}
\item Subsubcase II.a.1($j \geq s$),
\item Subsubcase II.a.2($j < s$),
\end{itemize}
\item Subcase II.b: $m=j$,
\begin{itemize}
\item Subsubcase II.b.1($j \geq s$),
\item Subsubcase II.b.2($j < s$),
\end{itemize}
\item Subcase II.c :$m \geq j+1$,
\begin{itemize}
\item Subsubcase II.c.1($j \geq s$),
\item Subsubcase II.c.2($j < s$),
\end{itemize}
\end{itemize}
\item Case III: $m = N-s+2$,
\begin{itemize}
\item Subcase III.a: $j > m \geq 1$,
\item Subcase III.b :$m \geq j$,
\end{itemize}
\item Case IV: $N > m > N-s+2$,
\begin{itemize}
\item Subcase IV.a: $j > m$,
\item Subcase IV.b :$m \geq j$,
\end{itemize}
\end{itemize}
\textbf{Case I}($N-s+1 > m \geq 1$)\\
\textbf{Subcase I.a}($j+1 > m \geq 1$)\\
\textbf{Subsubcase I.a.1}($j \geq s$)\\
\begin{align}
\delta_j^s(m)=\left(\sum_{n=j-s+2}^{m} \circled{A1}_{i=j+1}^n -\sum_{n=j-s+1}^{m-1} \circled{A1}_{i=j}^n \chi_{m \geq 2}\right)\chi_{m \geq j-s+2}
\end{align}
\textbf{Subsubcase I.a.2}($j < s$)\\
\begin{align}
\delta_j^s(m)=\sum_{n=1}^{m} \circled{A1}_{i=j+1}^n -\sum_{n=1}^{m-1} \circled{A1}_{i=j}^n \chi_{m \geq 2}
\end{align}
\textbf{Subcase I.b}($m \geq j+1$)\\
\textbf{Subsubcase I.b.1}($j \geq s$)\\
\begin{align}
\delta_j^s(m)=&\left(\sum_{n=j-s+2}^{j} \circled{A1}_{i=j+1}^n-\sum_{n=j-s+1}^{j-1} \circled{A1}_{i=j}^n\chi_{j \geq 2} \right)\chi_{s \geq 2} \\
+& \circled{A2}_{i=j+1}^{n=j+1} -\circled{A2}_{i=j}^{n=j} \nonumber
\end{align}
\textbf{Subsubcase I.b.2}($j < s$)\\
\begin{align}
\delta_j^s(m)=\sum_{n=1}^{j} \circled{A1}_{i=j+1}^n-\sum_{n=1}^{j-1} \circled{A1}_{i=j}^n\chi_{j \geq 2} + \circled{A2}_{i=j+1}^{n=j+1} -\circled{A2}_{i=j}^{n=j}
\end{align}
\textbf{Case II}($m = N-s+1$)\\
\textbf{Subcase II.a}($j > m$)\\
\textbf{Subsubcase II.a.1}($j \geq s$)\\
\begin{align}
\delta_j^s(m)= \left( \sum_{n=j-s+2}^{m-1} \circled{A1}_{i=j+1}^n \chi_{j < N-1} -\sum_{n=j-s+1}^{m-1} \circled{A1}_{i=j}^n \right) \chi_{s < N} + \circled{B1}_{i=j+1}^{n=m} 
\end{align}
\textbf{Subsubcase II.a.2}($j < s$)\\
\begin{align}
\delta_j^s(m)= \left( \sum_{n=1}^{m-1} \circled{A1}_{i=j+1}^n  -\sum_{n=1}^{m-1} \circled{A1}_{i=j}^n \right) \chi_{s < N} + \circled{B1}_{i=j+1}^{n=m} 
\end{align}
\textbf{Subcase II.b}($m=j$)\\
\textbf{Subsubcase II.b.1}($j \geq s$)\\
\begin{align}
\delta_j^s(m)=&\left(\sum_{n=j-s+2}^{j-1} \circled{A1}_{i=j+1}^n \chi_{s \geq 3}  -\sum_{n=j-s+1}^{j-1} \circled{A1}_{i=j}^n \chi_{s \geq 2}\right)\chi_{s<N} \\
+& \circled{B2}_{i=j+1}^{n=j} \nonumber
\end{align}
\textbf{Subsubcase II.b.2}($j < s$)\\
\begin{align}
\delta_j^s(m)=&\left(\sum_{n=1}^{j-1} \circled{A1}_{i=j+1}^n -\sum_{n=1}^{j-1} \circled{A1}_{i=j}^n\right)\chi_{s<N} \\
+& \circled{B2}_{i=j+1}^{n=j} \nonumber
\end{align}
\textbf{Subcase II.c}($m \geq j+1$)\\
\textbf{Subsubcase II.c.1}($j \geq s$)\\
\begin{align}
\delta_j^s(m)=\left(\sum_{n=j-s+2}^{j} \circled{A1}_{i=j+1}^n  -\sum_{n=j-s+1}^{j-1} \circled{A1}_{i=j}^n \chi_{j \geq 2} \right)\chi_{s \geq 2} - \circled{A2}_{i=j}^{n=j}
\end{align}
\textbf{Subsubcase II.c.2}($j < s$)\\
\begin{align}
\delta_j^s(m)=\sum_{n=1}^{j} \circled{A1}_{i=j+1}^n  -\sum_{n=1}^{j-1} \circled{A1}_{i=j}^n \chi_{j \geq 2} - \circled{A2}_{i=j}^{n=j}
\end{align}
\textbf{Case III}($m = N-s+2$)\\
\textbf{Subcase III.a}($m < j $)\\
\begin{align}
\delta_j^s(m)=& \left(\sum_{n=j-s+2}^{N-s} \circled{A1}_{i=j+1}^n \chi_{j< N-1}  -\sum_{n=j-s+1}^{N-s} \circled{A1}_{i=j}^n\right)\chi_{s<N} \\
+&  \sum_{n=N-s+1}^m \circled{B1}_{i=j+1}^{n} - \sum_{n=N-s+1}^{m-1}\circled{B1}_{i=j}^{n} \nonumber\\
=& \left(\sum_{n=j-s+2}^{N-s} \circled{A1}_{i=j+1}^n\chi_{j<N-1} -\sum_{n=j-s+1}^{N-s} \circled{A1}_{i=j}^n\right)\chi_{s<N} \nonumber\\
+&  \circled{B1}_{i=j+1}^{n=m-1} + \cancel{\circled{B1}_{i=j+1}^{n=m}}  - \cancel{\circled{B1}_{i=j}^{n=m-1}} \nonumber\\
=& \left(\sum_{n=j-s+2}^{N-s} \circled{A1}_{i=j+1}^n\chi_{j<N-1} -\sum_{n=j-s+1}^{N-s} \circled{A1}_{i=j}^n\right)\chi_{s<N} +  \circled{B1}_{i=j+1}^{n=N-s+1}\nonumber
\end{align}
\textbf{Subcase III.b}($m \geq j$)\\
\begin{align}
\delta_j^s(m)=& \left(\sum_{n=j-s+2}^{N-s} \circled{A1}_{i=j+1}^n \chi_{j<N-1} -\sum_{n=j-s+1}^{N-s} \circled{A1}_{i=j}^n\right)\chi_{s<N} \\
+&  \sum_{n=N-s+1}^{m-1} \circled{B1}_{i=j+1}^{n} + \cancel{\circled{B2}_{i=j+1}^{n=m}}
 - \cancel{\circled{B2}_{i=j}^{n=m-1}}\nonumber\\
 =& \left(\sum_{n=j-s+2}^{N-s} \circled{A1}_{i=j+1}^n \chi_{j<N-1} -\sum_{n=j-s+1}^{N-s} \circled{A1}_{i=j}^n\right)\chi_{s<N} + \circled{B1}_{i=j+1}^{n=N-s+1}\nonumber
\end{align}
\textbf{Case IV}($N > m > N-s+2$)\\
\textbf{Subcase IV.a}($m < j$)\\
\begin{align}
\delta_j^s(m)=& \left(\sum_{n=j-s+2}^{N-s} \circled{A1}_{i=j+1}^n \chi_{j<N-1}-\sum_{n=j-s+1}^{N-s} \circled{A1}_{i=j}^n \right)\chi_{s<N}\\
+&  \sum_{n=N-s+1}^{m} \circled{B1}_{i=j+1}^{n}  - \sum_{n=N-s+1}^{m-1}\circled{B1}_{i=j}^{n}\nonumber\\
=& \left(\sum_{n=j-s+2}^{N-s} \circled{A1}_{i=j+1}^n \chi_{j<N-1}-\sum_{n=j-s+1}^{N-s} \circled{A1}_{i=j}^n \right)\chi_{s<N} \nonumber\\
+& \circled{B1}_{i=j+1}^{n=N-s+1} + \cancel{\sum_{n=N-s+2}^{m} \circled{B1}_{i=j+1}^{n}} 
 - \cancel{\sum_{n=N-s+1}^{m-1}\circled{B1}_{i=j}^{n}}\nonumber\\
 =&\left(\sum_{n=j-s+2}^{N-s} \circled{A1}_{i=j+1}^n \chi_{j<N-1}  -\sum_{n=j-s+1}^{N-s} \circled{A1}_{i=j}^n \right)\chi_{s<N} + \circled{B1}_{i=j+1}^{n=N-s+1}\nonumber
\end{align}
\textbf{Subcase IV.b}($m \geq j$)\\
\begin{align}
\delta_j^s(m)=& \left(\sum_{n=j-s+2}^{N-s} \circled{A1}_{i=j+1}^n \chi_{j<N-1} -\sum_{n=j-s+1}^{N-s} \circled{A1}_{i=j}^n \right)\chi_{s<N} \\
+&  \sum_{n=N-s+1}^{m-1} \circled{B1}_{i=j+1}^{n}  + \cancel{\circled{B2}_{i=j+1}^{n=m}} \
- \sum_{n=N-s+1}^{m-2} \circled{B1}_{i=j}^n - \cancel{\circled{B2}_{i=j}^{n=m-1}}\nonumber\\
=& \left(\sum_{n=j-s+2}^{N-s} \circled{A1}_{i=j+1}^n\chi_{j<N-1} -\sum_{n=j-s+1}^{N-s} \circled{A1}_{i=j}^n\right)\chi_{s<N} \nonumber\\
+ &  \circled{B1}_{i=j+1}^{n=N-s+1} +  \cancel{\sum_{n=N-s+2}^{m-1} \circled{B1}_{i=j+1}^{n}}  - \cancel{\sum_{n=N-s+1}^{m-2} \circled{B1}_{i=j}^n}\nonumber\\
=& \left(\sum_{n=j-s+2}^{N-s} \circled{A1}_{i=j+1}^n\chi_{j<N-1} -\sum_{n=j-s+1}^{N-s} \circled{A1}_{i=j}^n\right)\chi_{s<N} 
+ \circled{B1}_{i=j+1}^{n=N-s+1}\nonumber
\end{align}
Cases III.a, III.b, IV.a and IV.b are, in fact, described by a single formula, e.g. the one from Case III.a. Hence we will merge them to a case called Case III ($N > m > N-s+1$).

\subsection[xxx]{Comparison of $\Delta$ and $\delta(m)$ - Condition 2}

By comparing the values of $\Delta^k_i$ and $\delta^k_i(m)$ in overlapping cases, one may check that indeed $\Delta^k_i \geq \delta^k_i(m)$ for all $i\in \{1,\ldots, N-1\}$, $k \in \{2,\ldots, N\}$ and $m \in \{1,\ldots, N-1\}$. Hence Condition 2 is satisfied.

\section{Conditions 3 and 4}
\subsection{Condition 3 (row counting)}
\label{sec:cond3}
Clearly, the partial sum of entries of a telescope $T_{p,q}$ of the form (\ref{eq:tel}) is a telescopic sum equal

\begin{equation}
\label{eq:topbot}
  \begin{pmatrix}
    \lambda_{p+q-1}\\
    \lambda_{p+q-2}\\
    \vdots\\
    \lambda_{p+1}\\
    \lambda_{p}
  \end{pmatrix} \mathrm{,}
  \end{equation}
from top to bottom and
\begin{equation}
  \begin{pmatrix}
    \lambda_{p}\\
    \lambda_{p}-\lambda_{p+q-1}\\
    \vdots\\
    \lambda_{p}-\lambda_{p+2}\\
    \lambda_{p}-\lambda_{p+1}
  \end{pmatrix} \mathrm{,}
  \end{equation}\noindent
from bottom to top. Note that the entries of (\ref{eq:topbot}) are determined, up to position $q$, only by the top entry $\lambda_{p+q-1}$, which in turn depends only on $p+q$. Let $\sigma_{n,k}$ be the difference between the rows in which top virtual boxes of columns, for two consecutive labels $n$ and $n+1$, appear at a given step $k$. Note that from (\ref{eq:beg}), for fixed $k$ and $m>n$, we have $\text{beg}(m,k) > \text{beg}(n,k)$. Thus, $\sigma_{n,k}$ is either $1$ or $2$ and $\sigma_{n,k}=2$ is obtained exactly in step $k=N-n$. Moreover, from (\ref{eq:len}), for fixed $k$ and $m>n$, we have $\text{len}(m,k) \geq \text{len}(n,k)$. Hence, the difference $\tau_{n,k}$ between lengths of telescopes for two consecutive labels $n$ and $n+1$, at given step $k$, is either $0$ or $1$. Thus, it suffices to consider three cases, depicted in Fig. \ref{fig:rc}.

\begin{figure}[H]
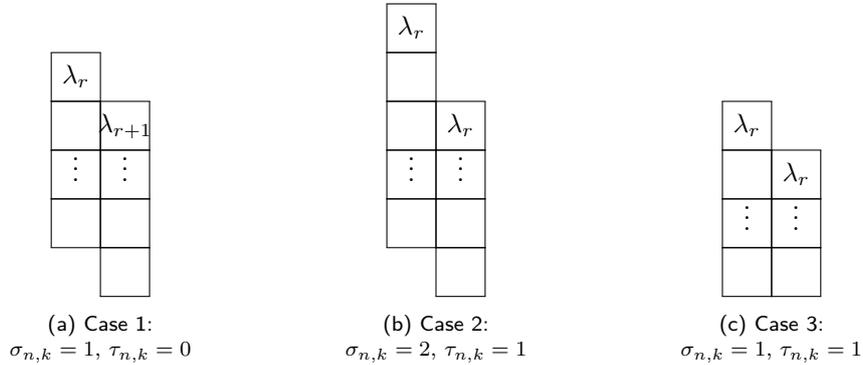

\centering
\ytableausetup{nosmalltableaux,mathmode,boxsize=1.8em}

 \begin{subfigure}[b]{0.27\textwidth} 
 \centering
    \begin{ytableau}[]
       \lambda_r  & \none  \\
         &  \lambda_{r+1}  \\
  \vdots & \vdots    \\ 
         &    \\  
   \none &     
   \end{ytableau}
    \caption{Case 1:\\$\sigma_{n,k}=1,\,\tau_{n,k}=0$}
    \label{fig:rc1}
  \end{subfigure}
   \begin{subfigure}[b]{0.27\textwidth} 
   \centering 
   \begin{ytableau}[]
       \lambda_r &  \none  \\
         &  \none  \\
         & \lambda_r  \\
 \vdots  &  \vdots  \\ 
         &    \\  
   \none &    
    \end{ytableau}
    \caption{Case 2:\\ $\sigma_{n,k}=2,\,\tau_{n,k}=1$}
    \label{fig:rc2}
  \end{subfigure}
  \begin{subfigure}[b]{0.27\textwidth} 
  \centering
    \begin{ytableau}[]
       \lambda_r & \none   \\
         & \lambda_r  \\
  \vdots & \vdots  \\ 
         &      
    \end{ytableau}

    \caption{Case 3:\\ $\sigma_{n,k}=1,\,\tau_{n,k}=1$}
    \label{fig:rc3}
  \end{subfigure}
  \caption{Possible relative positions of the columns of virtual boxes corresponding to telescopes (columns on the left -- with label $n$, columns on the right -- with label $n+1$). Symbols on the top indicate the relationship between the top entries of compared telescopes (i.e. top entries of (\ref{eq:topbot})).}
  \label{fig:rc}
\end{figure}
Since (\ref{eq:topbot}) is non-increasing from top to bottom, it is clear that Condition 3 is satisfied locally (i.e. for any two telescopes with consecutive labels, in any fixed step) in every case, so Condition 3 is always satisfied.

\subsection{Condition 4 (column counting)} 
We show that Condition 4 in fact follows from what we already have proved. Let $\tilde{\#}(m)^{k}_{l}$ denote the number of boxes with label $m$ appended in step $k$ up to the $l$th column (counted from right to left). Notice that the inequalities proved within Condition 2, $\Delta_i^k  \geq \delta_{i}^k(m)$, imply that, for a fixed step $k$, the blocks with the same label are distributed stair-like (block in row $i$ ends before or just as the block in row $i+1$ begins), as shown in Fig. \ref{fig:stairs}.
\begin{figure}[H]
\centering
\includegraphics[width=0.4\linewidth]{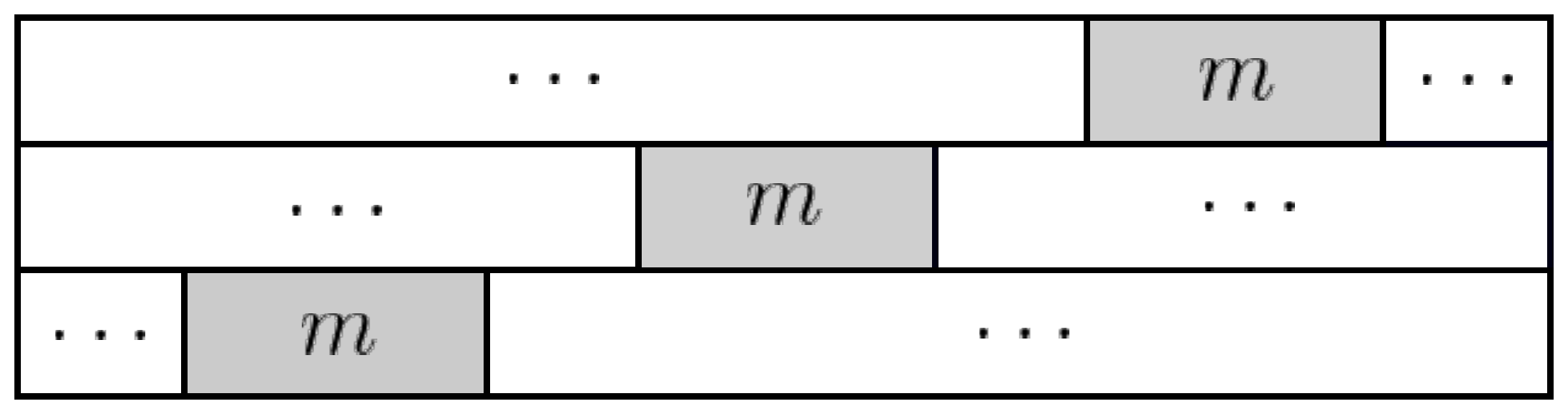}
  \caption{Stair-like distribution of blocks with label $m$ at step $k$. Notice that the spacings might in fact be equal to zero, contrary to what the figure might suggest.}
\label{fig:stairs}
\end{figure}\noindent
Moreover, in every row the labels of blocks are in non-decreasing order (from left to right). In order to prove that Condition 4 is satisfied, it suffices to show that
\begin{equation}
\label{eq:quantity}
\tilde{\#}(m-1)^k_{l_0}- \tilde{\#}(m)^k_{l_0} \geq 0 \mathrm{,}
\end{equation}
whenever it makes sense.
We distinguish two cases - either the column $l_0$ crosses some $m$-block, say at row $i$, (Case I) or it does not (Case II). We start with Case I.\\
\textbf{Case I}\\
There are essentially two subcases presented in Fig. \ref{fig:cases}. Other possibilities fulfilling of Conditions 1 and 2 can be effectively reduced to those cases. The column in consideration is denoted by $l_0$, the columns where an $(m-1)$-block begins and an $m$-block ends are denoted by $l^\prime$ and $l$ respectively.
\begin{figure}[H]
\begin{subfigure}[t]{.4\textwidth}
  \centering
  \includegraphics[width=.8\linewidth]{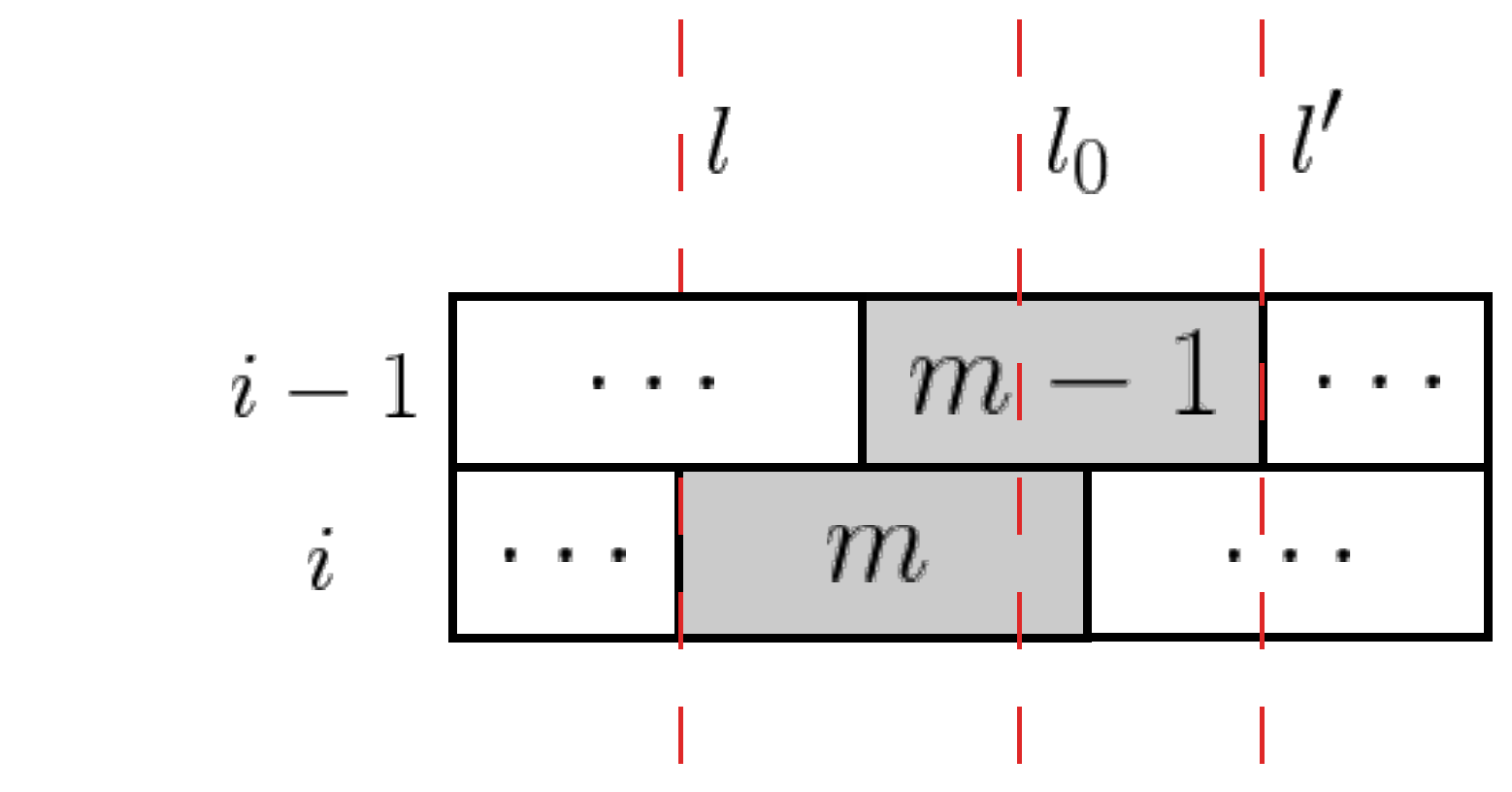}  
  \caption{Subcase I.A - the $m$-block which intersects with column $l_0$ at row $i$ begins before (or just as) the ($m-1$)-block at row $i-1$ ends. }
  \label{fig:casea}
\end{subfigure}
\hspace{1cm}
\begin{subfigure}[t]{.4\textwidth}
  \centering
  \includegraphics[width=.9\linewidth]{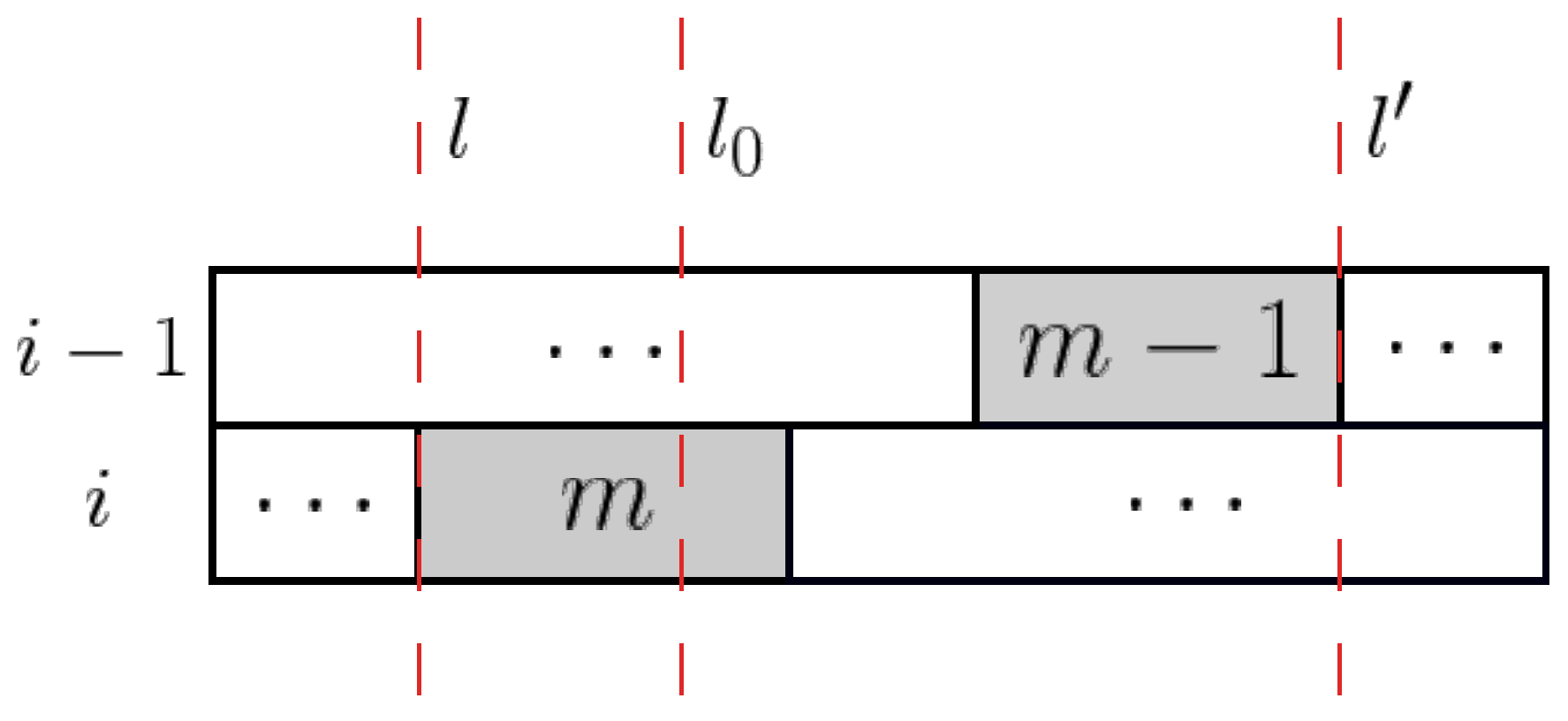}  
  \caption{Subcase I.B - the $m$-block which intersects with column $l_0$ at row $i$ begins after the ($m-1$)-block at row $i-1$ ends.}
  \label{fig:caseb}
\end{subfigure}
\caption{Possible subcases of Case I.}
\label{fig:cases}
\end{figure}\noindent
The situation to the right of a column $l^\prime$ is schematically depicted in Fig. \ref{fig:rightto}.
\begin{figure}[H]
\centering
\includegraphics[width=0.5\linewidth]{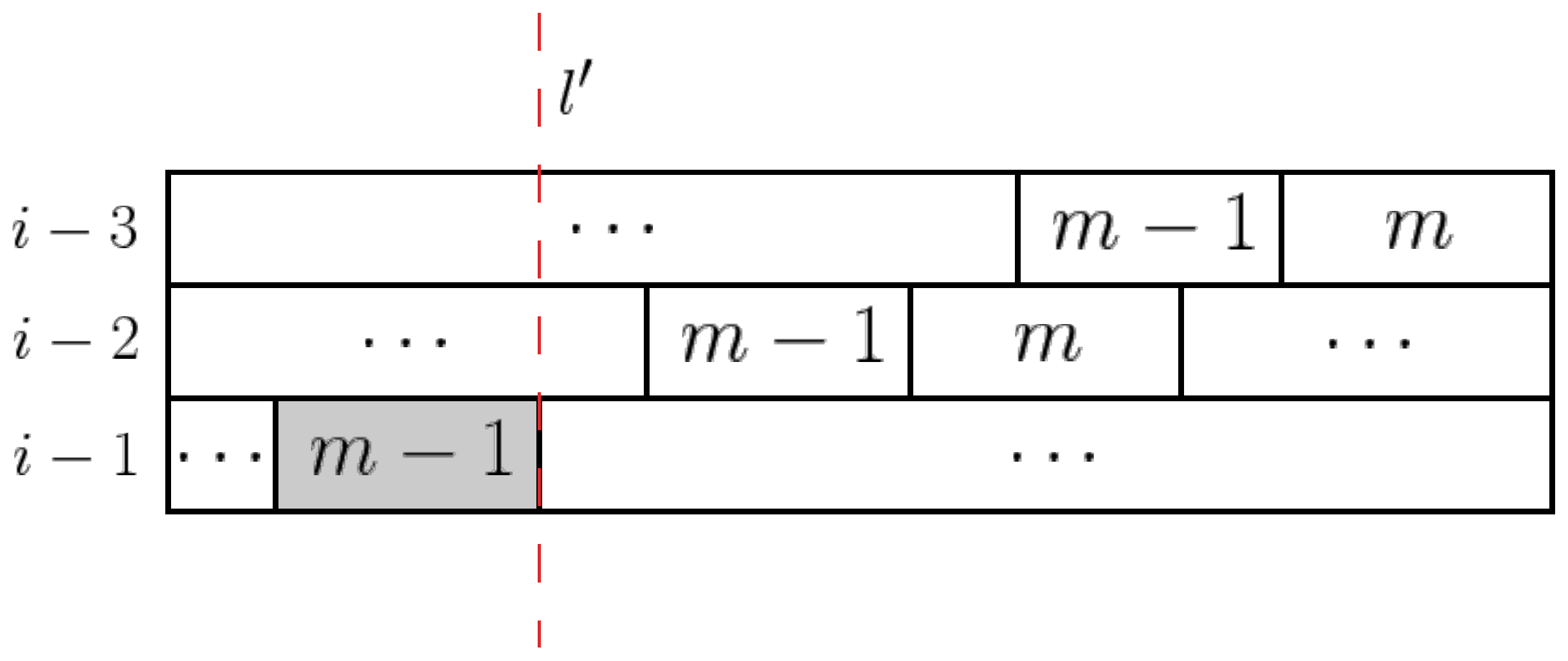}
  \caption{The schematic configuration to the right of the column $l^\prime$. The grey block is an $(m-1)$-block in Fig. \ref{fig:cases}.}
\label{fig:rightto}
\end{figure}\noindent
Since Condition 3 (row-counting) is satisfied, considering row $i-2$ we may write (see Fig. \ref{fig:rightto})
\begin{equation}
 \sum_{j<i-1}\#(m-1)^k_j \geq  \sum_{j<i-1}\#(m)^k_j \mathrm{,} 
\end{equation}
but
\begin{equation}
\tilde{\#}(m-1)^k_{l^\prime}=\sum_{j<i-1}\#(m-1)^k_j\mathrm{,}
\end{equation}
and
\begin{equation}
\tilde{\#}(m)^k_{l^\prime}=\sum_{j<i-1}\#(m)^k_j  \mathrm{.}
\end{equation}
Thus, we obtain
\begin{equation}
\label{eq:totheright}
\tilde{\#}(m-1)^k_{l^\prime} \geq \tilde{\#}(m)^k_{l^\prime}\mathrm{.}
\end{equation}
\textbf{Subcase I.A}\\
Notice that the number of $(m-1)$-boxes added to the left of $l^{\prime}$ is greater or equal to the corresponding number of $m$-boxes, counted up to the column $l$ (see Fig. \ref{fig:cases}), so from (\ref{eq:totheright}) we see that (\ref{eq:quantity}) is non-negative.\\
\textbf{Subcase I.B}\\ 
It suffices to check that (see Fig. \ref{fig:cases})
\begin{equation}
\label{eq:formula}
\sum_{j<i} \#(m-1)^k_j \geq \sum_{j \leq i} \#(m)^k_j \mathrm{.}
\end{equation}
Just as in case of Condition 3, we have three cases depicted in Fig. \ref{fig:rc}. The only difference here is that partial sums of telescopes under consideration are taken not up to the same row, but up to two consecutive rows -- $(i-1)$ for label $m-1$ and $i$ for label $m$. Hence, it is clear that (\ref{eq:quantity}) is satisfied.\\
\textbf{Case II}\\ 
\begin{figure}[H]
\begin{subfigure}[t]{.5\textwidth}
  \centering
  \includegraphics[width=.7\linewidth]{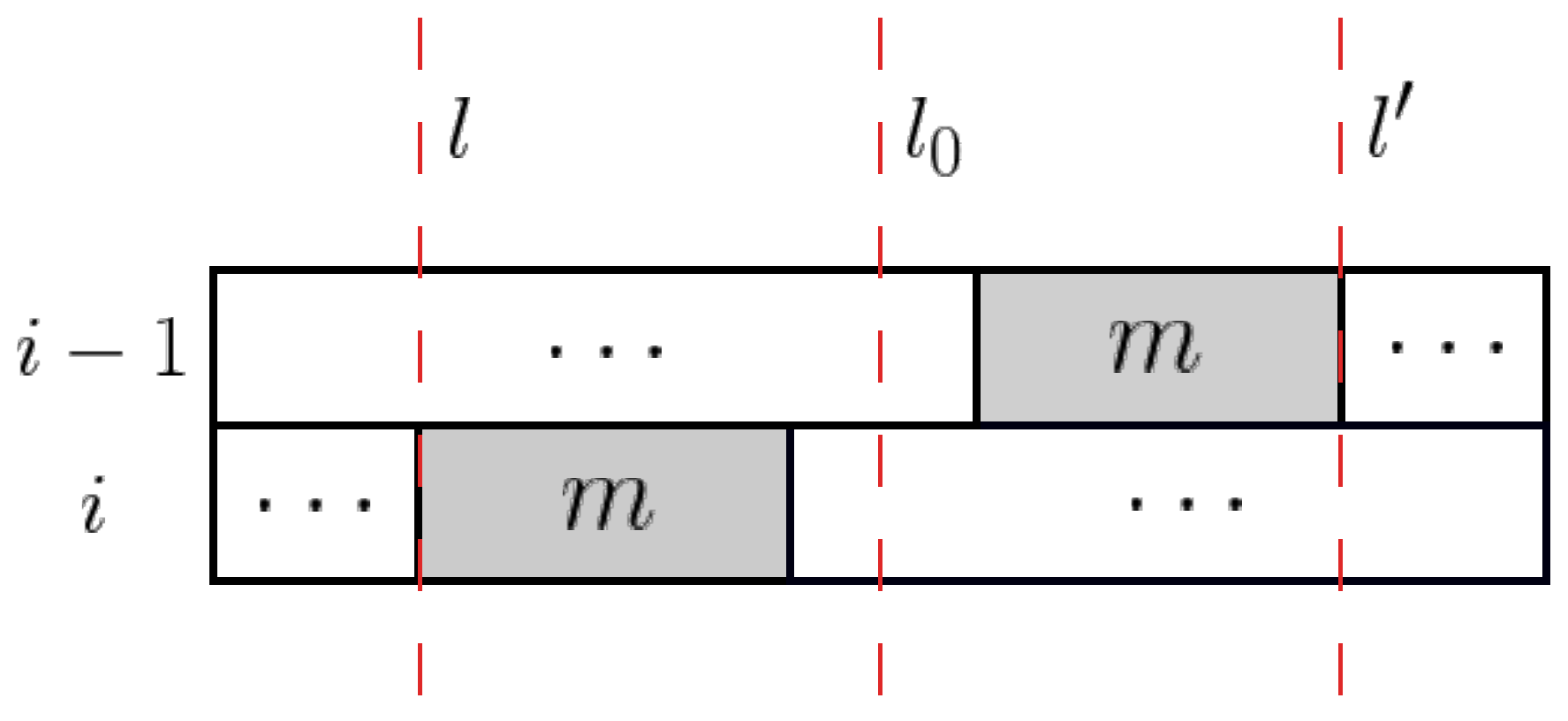}  
  \caption{Configuration in Case II.}
  \label{fig:casebm1}
\end{subfigure}
\begin{subfigure}[t]{.5\textwidth}
  \centering
  \includegraphics[width=.7\linewidth]{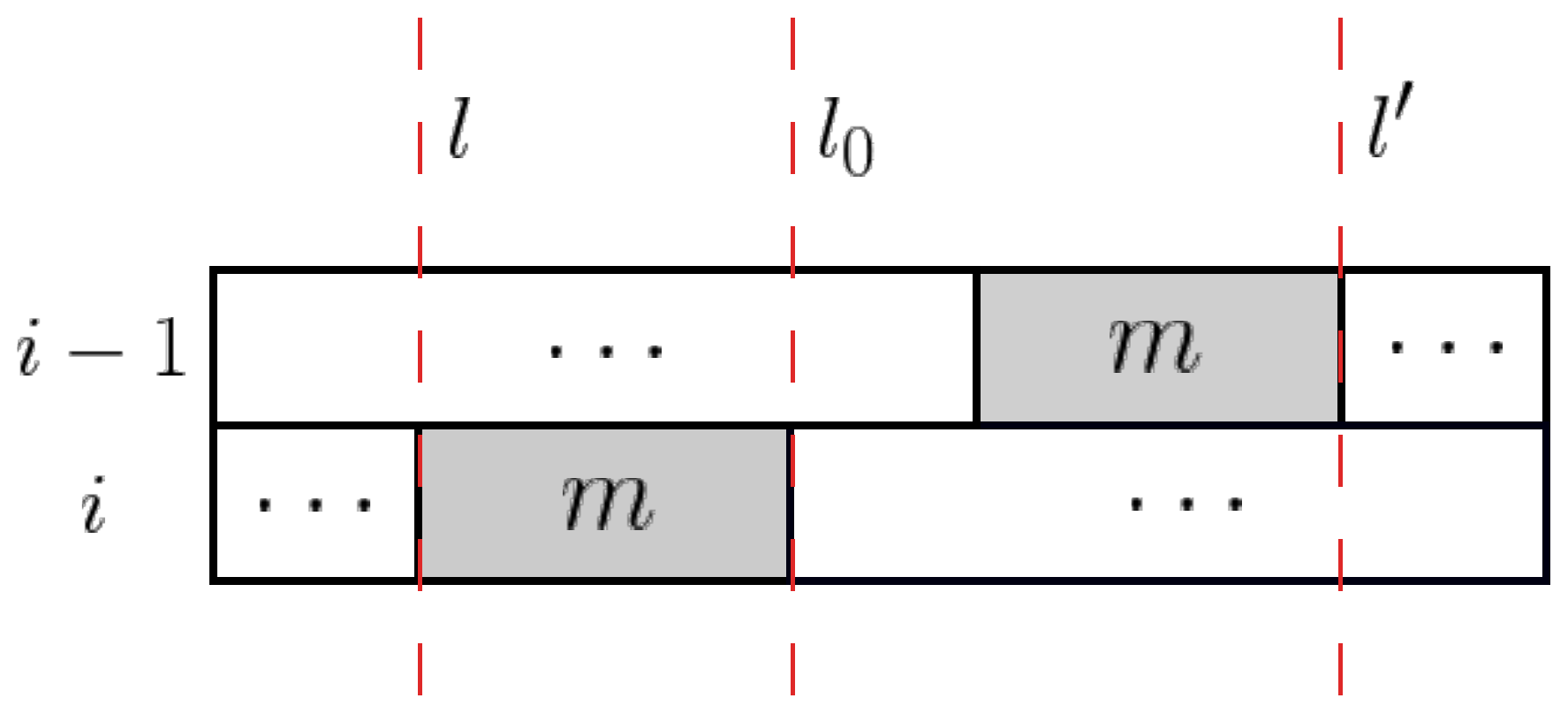}  
  \caption{Reduction to Case I.}
  \label{fig:casebm2}
\end{subfigure}
\caption{Reduction of Case II to Case I - the applied change of the choice of the column $l_0$ does not decrease (\ref{eq:quantity}).}
\label{fig:transform}
\end{figure}\noindent
Note that the the column $l_0$ can be moved to the left without decreasing (\ref{eq:quantity}) (see Fig. \ref{fig:transform}), so that it goes through the last box of the lower $m$-block. Thus, Case II can be reduced to Case I. This ends the whole proof.

\bibliographystyle{unsrt}

\end{document}